\documentclass{amsart}
\usepackage{amsmath,amsthm,url,graphicx, latexsym, curves}
\usepackage{amsfonts}
\usepackage{amssymb}

\setcounter{MaxMatrixCols}{10}

\providecommand{\U}[1]{\protect\rule{.1in}{.1in}}

\pagestyle{headings}
\newtheorem{theorem}{Theorem}
\newtheorem{lemma}[theorem]{Lemma}
\newtheorem{proposition}[theorem]{Proposition}

\newtheorem{corollary}[theorem]{Corollary}

\theoremstyle{remark}
\newtheorem{remark}[theorem]{Remark}

\numberwithin{equation}{section}

 \renewcommand{\Re}{\operatorname{Re}}
    \renewcommand{\Im}{\operatorname{Im}}

\begin{document}
\title[bipartite entanglement of a random pure states]{Local statistical
properties of Schmidt eigenvalues of bipartite entanglement for a
random pure state}
\author[ Dang-Zheng Liu  and Da-Sheng Zhou ]{ Dang-Zheng Liu and Da-Sheng Zhou}

\address{School of Mathematical Sciences, Peking University, Beijing,
100871, P.R. China}
\email{dzliumath@gmail.com}
\address{Department of Mathematics, University of Macau, Av. Padre Tom\'{a}s
Pereira, Taipa, Macau, P.R. China} \email{zhdasheng@gmail.com}

\begin{abstract}
Consider the model of bipartite entanglement for a random pure
state emerging in quantum information and quantum chaos,
corresponding to the fixed trace Laguerre unitary ensemble (LUE) in
Random Matrix Theory. We focus on correlation functions of Schmidt
eigenvalues for the model and prove universal limits of the
correlation functions in the bulk and also at the soft and hard
edges of the spectrum, as these for the LUE. Further we consider the
bounded trace LUE and obtain the same universal limits.
\end{abstract}

\maketitle

\setcounter{page}{1}

\section{Introduction and main results}

\label{sect1} Quantum entanglement has recently been studied
extensively \cite{Page, Michael,Zyczkowski2001,So,HLW,Nechita,
Majumdar, toda2} due to its central role in quantum information and
quantum computation, which is treated as an indispensable resource
 \cite{Neilsen}. The entanglement of  random
pure quantum states is of much interest in the context of bipartite
entanglement, and statistical properties of such random states are
relevant to quantum chaotic systems, see \cite{Majumdar,HLW} and
references therein.

In the present paper, we consider a bipartite quantum system (a
system with its surrounding environment). Given a composite system
$A\otimes B$ of an $(NM)$-dimensional Hilbert space
$\mathcal{H}^{(NM)}=\mathcal{H}^{(N)}_{A} \otimes
\mathcal{H}^{(M)}_{B}$, let ${|e_{i}^{A}\rangle }_{i=1}^{N}$ and ${%
|e_{j}^{B}\rangle }_{j=1}^{M}$ be two complete orthogonal basis
states for the subsystems $A$ and $B$, respectively.
 Without loss of generality, we assume $N\leq M$. Any quantum state $|\Phi \rangle
\in A\otimes B$ can be expanded as a linear combination
\begin{equation}
|\Phi \rangle =\sum_{i=1}^{N}\sum_{j=1}^{M}X_{ij}|e_{i}^{A}\rangle \otimes
|e_{j}^{B}\rangle   \label{linear combibnation}
\end{equation}%
where these coefficients $X_{ij}\in \mathbb{C}$ form a rectangular
$N\times M$ complex matrix $X=[X_{ij}]$. The composite state $|\Phi
\rangle $ is fully unentangled (separable) if $|\Phi \rangle $ can
be written as a direct
product of two states $|\Phi ^{A}\rangle \in A$ and $|\Phi ^{B}\rangle \in B$%
, i.e., $|\Phi \rangle =|\Phi ^{A}\rangle \otimes \Phi ^{B}\rangle
$, otherwise referred it as an entangled state \cite{Werner}. The
composite state $|\Phi \rangle $ is normalized pure state system if
the density matrix of $|\Phi \rangle $ is given by $\rho =|\Phi
\rangle \langle \Phi |$ satisfying $\text{tr\thinspace }[\rho ]=1$.
The reduced density matrix of the subsystem $A$ by tracing over the
states of the subsystem $B$ is defined \cite{bengtsson,Majumdar} by
\begin{equation}
\rho _{A}\doteq \text{tr\thinspace }_{B}[\rho
]=\sum_{j=1}^{M}\langle e_{j}^{B}|\rho |e_{j}^{B}\rangle
=\sum_{i,j=1}^{N}W_{ij}|e_{i}^{A}\rangle \langle e_{j}^{A}|,
\label{reduces matrix}
\end{equation}%
where $W_{ij}$ are the entries of $N\times N$ square matrix $W=XX^{\dag }$
with $\text{tr\thinspace }W=1$ due to the normalized restriction that $\text{%
tr\thinspace }[\rho ]=1$. Analogously, $\rho _{B}=\text{tr\thinspace }%
_{A}[\rho ]$. It is not difficult to prove that the reduced density matrices
$\rho _{A}$ and $\rho _{B}$ have the same set of non-negative eigenvalues $%
x_{1},x_{2},\ldots ,x_{N}$ (we call them Schmidt eigenvalues of the
quantum state $|\Phi \rangle $) satisfying $\sum_{i=1}^{N}x_{i}=1$.
Let $v_{i}^{A} $ denote the eigenvector of the square matrix $W$
corresponding to the eigenvalue $x_{i}$. Then $\rho _{A}$ can be
expressed as $\rho _{A}=\sum_{i=1}^{N}x_{i}|v_{i}^{A}\rangle \langle
v_{i}^{A}|$. A similar representation holds for $\rho _{B}$. The
composite state $|\Phi \rangle $ has a well-known Schmidt spectral
decomposition \cite{bengtsson}
\begin{equation}
|\Phi \rangle =\sum_{i=1}^{N}\sqrt{x_{i}}|v_{i}^{A}\rangle \otimes
|v_{i}^{B}\rangle .
\end{equation}

A pure state is random if these coefficients $X_{ij}$ are random.
The simplest and most common random state is to choose $X_{ij}$ as
independent and identically distributed Gaussian variables
\cite{Majumdar}. However, the set of complex Wishart matrices
invariant under every unitary transformation but without any other
constraint is referred as complex Wishart ensemble or Laguerre
unitary ensemble \cite{mehta}. Its joint probability density
function (p.d.f.) of $N$ unordered eigenvalue $x_{1},x_{2},\ldots
,x_{N}$ of complex Wishart matrix $W$ is written as
\begin{equation}
P_{N}^{LUE,s}(x_{1},\ldots ,x_{N})=\frac{1}{Z_{s}}\prod_{1\leq
i<j\leq N}|x_{i}-x_{j}|^{2}\prod_{i=1}^{N}x_{i}^{\alpha
}e^{-x_{i}/s}, \label{pdf for weight s}
\end{equation}%
where $s>0$ and $\alpha =M-N\geq 0$. In this present paper, we
extend the scope of the index $\alpha $ and $s$ to $\alpha >-1$ and
$\Re s>0$, respectively. The partition function reads
\begin{equation}
Z_{s}=s^{N(N+\alpha )}\prod_{j=1}^{N}(\Gamma (1+j)\Gamma (\alpha +j)),
\label{partition weight s}
\end{equation}%
calculated in the book of Mehta \cite{mehta}. On the other hand, in
case of a random pure state $|\Phi \rangle $, all the eigenvalues of
$W=XX^{\dag }$ are not quite same as these of complex Wishart matrix
due to the additional constraint that tr$\rho_{A}
\text{=tr\thinspace}W=1$. Thus, the eigenvalues of the reduced
density matrix $\rho _{A}$ are distributed according to (\ref{pdf
for weight s}) in addition to the constraint $\sum_{i=1}^{N}x_{i}=1
$. More precisely,
\begin{equation}
P_{N}^{\delta ,r}(x_{1},\ldots ,x_{N})=\frac{1}{Z_{\delta
}^{r}}\delta (\sum_{i=1}^{N}x_{i}-r)\prod_{1\leq i<j\leq
N}|x_{i}-x_{j}|^{2}\prod_{i=1}^{N}x_{i}^{\alpha },
\label{probability of fixed trace lauguerre}
\end{equation}%
where $\delta (x)$ denotes the Dirac measure, $r=1$ and $\alpha
=M-N$. We refer this ensemble as fixed trace Laguerre unitary
ensemble (FTLUE), following the classic book by Mehta \cite{mehta}
where he referred to fix trace and bounded trace Gaussian ensembles
as restricted trace ensembles (this class of ensembles has been
generalized in \cite{acmv}). We will extend the scope of the index
$r$ to $\Re r>0$. 
It follows from (\ref{partition weight s}%
) that the partition function $Z_{\delta }^{r}$ equals
\begin{equation}
Z_{\delta }^{r}=r^{N(N+\alpha )}\frac{\prod_{j=1}^{N}\Gamma (1+j)\Gamma
(\alpha +j)}{\Gamma (N(N+\alpha ))}.  \label{partition for HS}
\end{equation}%
 When $M=N$, the joint p.d.f. of (\ref{probability of fixed trace lauguerre}) in
\cite{Zyczkowski2001,So} is referred to the ensemble of random
density matrices with respect to the Hilbert-Schmidt metric in the
set $\mathcal{D}_{N} $ of all density matrices of size $N$. It is
worthy of stressing that another ensemble of random density matrices
with respect to the Bures metric is quite distinguished, because its
features support the claim that without any prior knowledge on a
certain density matrix, the optimal way to mimic it is to generate
it at random with respect to the Bures measure \cite{So}.

The study of the eigenvalues of the reduced density matrix
$\rho_{A}=W$ is crucial for understanding and utilizing
entanglement. In principle, all information about the spectral
properties of the subsystem $A$, including its degree of
entanglement, can be  encoded in the p.d.f. of (\ref{probability of
fixed trace lauguerre}). For example, one classic measure of
entanglement is the von Neumann entropy  defined by
$S=-\text{tr\thinspace }\rho_{A} \ln \rho_{A}
=-\sum_{i=1}^{N}x_{i}\ln x_{i}$, which  is a random variable. The average entropy  $%
\langle S\rangle $ is close to $\ln N-\frac{N}{2M}$ for large $N$ when $%
M\geq N$ \cite{Page}. 
Besides, some known results on the FTLUE which are the same in the
limit as these of the LUE include: the  global density
\cite{So,toda2}, the largest eigenvalue distribution \cite{Nechita},
the smallest eigenvalue distribution when $M=N $ \cite{Majumdar}.

In this  paper, we focus on the so-called correlation functions of
the FTLUE. We also consider another closely relevant ensemble:
bounded trace LUE (BTLUE), whose joint p.d.f. for the eigenvalues is
given by
\begin{equation}
P_{N}^{\theta ,r}(x_{1},\ldots ,x_{N})=\frac{1}{Z_{\theta
}^{r}}\theta \big(r-\sum_{i=1}^{N}x_{i}\big)\prod_{1\leq i<j\leq
N}|x_{i}-x_{j}|^{2}\prod_{i=1}^{N}x_{i}^{\alpha },
\label{probability of bounded trace lauguerre}
\end{equation}%
where $\theta (x)$ denotes the Heaviside step function, i.e.,
$\theta (x)=1$ for $x\geq 0$, otherwise $\theta (x)=0$. Note that
the FTLUE or BTLUE bears the same relationship to the LUE that the
micro-canonical ensembles to the canonical ensembles in statistical
mechanics. Section 27 in \cite{mehta}, Mehta posed the ``equivalence
of ensembles" problem whether all local statistical properties of
the eigenvalues between fixed trace and unconstrained random matrix
ensembles  are identical, and further speculated that working out
the eigenvalues spacing distribution for bounded trace ensembles is
much more difficult. Although universal local results have been
obtained for very broad classes of canonical random matrix ensembles
\cite{forrester1,dkmvz,DG,Vanlessen}, only very few results on the
local limit behavior of the correlation functions for the restricted
ensembles (no orthogonal polynomial techniques are available!).
Recently, some progress has been made for fixed trace Gaussian
ensembles \cite{av}, \cite{ggl,gg}, \cite{LZ}. Before we state our
main results, let us first recall that the definitions of
correlation function and some universal results on the LUE.
The $%
n$-point correlation function $R_{n}^{LUE,s}(x_{1},\ldots ,x_{n})$
of the LUE is defined as \cite{mehta}
\begin{equation}
R_{n}^{LUE,s}(x_{1},\ldots ,x_{n})\doteq \frac{N!}{(N-n)!}\int_{\mathbb{R}%
^{N-n}}P_{N}^{LUE,s}(x_{1},\ldots ,x_{N})dx_{n+1}\cdots dx_{N}.
\label{correlation function of LUE}
\end{equation}%
Analogously, the $n$-point correlation function of the FTLUE or
BTLUE is defined as
\begin{equation}
R_{n}^{\phi ,r}(x_{1},\ldots ,x_{n})\doteq \frac{N!}{(N-n)!}\int_{\mathbb{R%
}^{N-n}}P_{N}^{\phi ,r}(x_{1},\ldots ,x_{N})dx_{n+1}\cdots dx_{N},
\label{correlation function of CLUE}
\end{equation}
where $\phi$ denotes $\delta$ or $\theta$. In particular, when
$n=1$,
 $R_{1}^{LUE,s}$ is called  the level density or the density of states. A classical result for the LUE says that
  \begin{equation*}
\lim_{N\rightarrow \infty
}\frac{1}{N}R_{1}^{LUE,\frac{1}{4N}}(x)=\psi (x)\doteq \frac{2}{\pi
}\sqrt{\frac{1-x}{x}}1_{(0,1]}(x),
\end{equation*}%
where the symbol $1_{(0,1]}(x)$ denotes the characteristic function
of the set $(0,1]$, and $\psi (x)$ is the Marchenko-Pastur law
\cite{MP}. 
However, for $n\ge 2$, the study of a finer asymptotics near a point
of the spectrum shows \cite{Nagao,forrester1}: in the bulk, i.e.,
$u\in $ $(0,1),$
\begin{equation*}
\lim_{N\rightarrow \infty }\frac{1}{(N\psi (u))^{n}}R_{n}^{LUE,\frac{1}{4N}%
}(u+\frac{t_{1}}{N\psi (u)},\dots ,u+\frac{t_{n}}{N\psi (u)})=\det [
K_{\sin }(t_{i},t_{j})] _{i,j=1}^{n}
\end{equation*}%
where $K_{\sin }(t_{i},t_{j})=\frac{\sin (\pi (t_{i}-t_{j}))}{\pi
(t_{i}-t_{j})}$ is the so-called sine kernel; at the soft edge ,
\begin{equation*}
\lim_{N\rightarrow \infty }\frac{1}{((2N)^{2/3})^{n}}R_{n}^{LUE,\frac{1}{4N}%
}\left( 1+\frac{t_{1}}{(2N)^{2/3}},\ldots ,1+\frac{t_{n}}{(2N)^{2/3}}\right)
=\det [K_{\mathrm{Airy}}(t_{j},t_{k})]_{j,k=1}^{n}
\end{equation*}%
where
\begin{equation*}
K_{\mathrm{Airy}}(u,v)=\frac{Ai^{\prime }(u)Ai(v)-Ai^{\prime
}(v)Ai(u)}{u-v}
\end{equation*}%
and the Airy function $Ai(x)$ satisfies the equation $Ai^{\prime \prime
}(x)=xAi(x)$
; at the hard edge,
\begin{equation}
\lim_{N\rightarrow \infty }\frac{1}{(16N^{2})^{n}}R_{n}^{LUE,\frac{1}{4N}%
}\left( \frac{t_{1}}{16N^{2}},\ldots ,\frac{t_{n}}{16N^{2}}\right)
=\det (K_{\textrm{J}_{\alpha}}(t_{i},t_{j}))_{i,j=1}^{n},  \notag
\end{equation}%
where
\begin{equation}
K_{\textrm{J}_{\alpha}}(u,v)=\frac{J_{\alpha }(\sqrt{u})\sqrt{v}J_{\alpha }^{\prime }(%
\sqrt{v})-J_{\alpha }(\sqrt{v})\sqrt{u}J_{\alpha }^{\prime }(\sqrt{u})}{%
2(u-v)}
\end{equation}%
and $J_{\alpha }(z)$ denotes the Bessel function of the index $\alpha $.

On the other hand, the limit global density of the FTLUE (see
\cite{So}) and the BTLUE is also the Marchenko-Pastur law, i.e.,
\begin{equation*}
\lim_{N\rightarrow \infty }\frac{1}{N}R_{1}^{\phi ,\frac{N+\alpha }{4}%
}(x)=\psi (x)
\end{equation*}
where $\phi$ denotes $\delta$ or $\theta$. In the case of the BTLUE,
we will prove the claimed result in Sect. 5. Considering
universality in the bulk,  at the soft and hard edges of the
spectrum of the restricted trace LUE,  we have the same local limit
behavior as that for the LUE.

\begin{theorem}
\label{theorem for CLUE: bulk} Let $R_{n}^{\delta ,r}$ be the
$n$-point correlation function of eigenvalues of bipartite
entanglement for a random pure state, defined by (\ref{correlation
function of CLUE}). The  following asymptotic properties hold.

(i) The bulk of the spectrum: for every $u\in (0,1)$ and $t_{i}\in
\mathbb{R},\,1\leq i\leq n$,
\begin{equation}
\lim_{N\rightarrow \infty }\frac{1}{(N\psi (u))^{n}}R_{n}^{\delta ,\frac{%
N+\alpha }{4}}\left( u+\frac{t_{1}}{N\psi (u)},\dots
,u+\frac{t_{n}}{N\psi (u)}\right) =\det [ K_{\sin }(t_{i},t_{j})]
_{i,j=1}^{n},  \notag
\end{equation}%
 uniformly for $t_{1},\ldots, t_{n}$ in compact subsets of $\mathbb{R}$ and for
 $u$ in a compact subset of $(0,1)$.

(ii) The soft edge of the spectrum: for any $f\in
\mathbb{C}_{c}(\mathbb{R}^{n})$, the set of all continuous functions
on $\mathbb{R}^{n} $ with compact support,
\begin{align}
& \lim_{N\rightarrow \infty }\frac{1}{((2N)^{2/3})^{n}}\int_{\mathbb{R}%
^{n}}f(t_{1},\ldots ,t_{n})R_{n}^{\delta ,\frac{N+\alpha }{4}}\left( 1+\frac{%
t_{1}}{(2N)^{2/3}},\ldots ,1+\frac{t_{n}}{(2N)^{2/3}}\right) \,d^{n}t  \notag
\\
& =\int_{\mathbb{R}^{n}}f(t_{1},\ldots ,t_{n})\det [K_{\mathrm{Airy}%
}(t_{j},t_{k})]_{j,k=1}^{n}\,d^{n}t.  \notag
\end{align}

(iii) The hard edge of the spectrum:
\begin{equation}
\lim_{N\rightarrow \infty }\frac{1}{(16N^{2})^{n}}R_{n}^{\delta ,\frac{%
N+\alpha }{4}}\left( \frac{t_{1}}{16N^{2}},\ldots ,\frac{t_{n}}{16N^{2}}%
\right) =\det [K_{\mathrm{J}_{\alpha}}(t_{i},t_{j})]_{i,j=1}^{n},
\notag
\end{equation}
uniformly for $t_{1},\ldots, t_{n}$ in bounded subsets of
$(0,\infty)$.
\end{theorem}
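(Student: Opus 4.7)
The plan is to connect $R_n^{\delta,r}$ to $R_n^{LUE,s}$ through an exact Laplace-type identity, and then to transfer the three LUE universality results recalled just before the theorem by a saddle-point analysis centred at $s_*=1/(4N)$.

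First I would insert the elementary identity $e^{-\sum x_i/s}=\int_0^\infty e^{-r/s}\delta(\sum x_i-r)\,dr$ inside the definition of $P_N^{LUE,s}$, integrate out $x_{n+1},\ldots,x_N$, and simplify using the explicit partition functions (\ref{partition weight s}) and (\ref{partition for HS}). This yields an identity of the form
\begin{equation*}
s^{N(N+\alpha)}\,\Gamma(N(N+\alpha))\,R_n^{LUE,s}(x)=\int_0^\infty e^{-r/s}\,r^{N(N+\alpha)-1}\,R_n^{\delta,r}(x)\,dr,
\end{equation*}
whose probabilistic meaning is simply that the LUE is a mixture of fixed-trace ensembles weighted by the gamma law of the total trace. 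Writing $\lambda=1/s$ and inverting the Laplace transform on a Bromwich contour gives the exact representation
\begin{equation*}
r^{N(N+\alpha)-1}R_n^{\delta,r}(x)=\frac{\Gamma(N(N+\alpha))}{2\pi i}\int_{c-i\infty}^{c+i\infty}e^{r\lambda}\lambda^{-N(N+\alpha)}R_n^{LUE,1/\lambda}(x)\,d\lambda.
\end{equation*}

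The next step is the saddle-point analysis. The phase $r\lambda-N(N+\alpha)\log\lambda$ has a unique positive critical point $\lambda_*=N(N+\alpha)/r$, and the theorem's choice $r=(N+\alpha)/4$ produces $\lambda_*=4N$, i.e.\ $s_*=1/(4N)$, which is precisely the LUE scaling for which the sine, Airy and Bessel-kernel limits are recorded in the excerpt. Deforming the contour through $\lambda_*$ along the direction of steepest descent, a standard Gaussian expansion of the phase combined with Stirling's formula for $\Gamma(N(N+\alpha))$ makes all the prefactors collapse and leaves the clean pointwise asymptotic
\begin{equation*}
R_n^{\delta,(N+\alpha)/4}(x)=\bigl(1+o(1)\bigr)R_n^{LUE,1/(4N)}(x).
\end{equation*}
Specialising $x$ to the three rescalings $u+t_i/(N\psi(u))$, $1+t_i/(2N)^{2/3}$ and $t_i/(16N^2)$ and inserting the LUE limits cited in the excerpt then delivers the sine, Airy and Bessel determinants of parts (i)--(iii). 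In part (ii) the claim is stated as an integral against $f\in\mathbb{C}_c(\mathbb{R}^n)$, so once the pointwise convergence is available a locally integrable majorant allows one to pass to the limit by dominated convergence.

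The decisive obstacle, and where the bulk of the work will go, is producing estimates on $R_n^{LUE,1/\lambda}(x)$ which are uniform in $\lambda$ along the steepest-descent contour and in the rescaled spectral arguments, so that one may both truncate the Bromwich tails and justify replacing the kernel by its limit in a Gaussian neighbourhood of $\lambda_*$. I would attack this through the Christoffel--Darboux formula for the Laguerre kernel together with the standard uniform asymptotics of the Laguerre polynomials -- Plancherel--Rotach type in the bulk and soft-edge regimes, uniform Bessel-function asymptotics in the hard-edge regime. The hard edge (iii) is the most delicate, because the rescaled variables $t_i/(16N^2)$ concentrate at the singular point $0$ of the kernel, forcing one to enter the uniform Bessel-kernel asymptotics of $R_n^{LUE,s}$ as $s$ varies rather than simply invoking the pointwise limit, and to control the error terms as $\lambda$ moves away from $\lambda_*$. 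Once these uniform estimates are in place, the saddle evaluation itself is a standard exercise.
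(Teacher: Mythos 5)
Your plan---invert the exact Laplace/gamma mixture relating $R_n^{\delta,r}$ to $R_n^{LUE,s}$ and run steepest descent through $\lambda_*=4N$---is essentially the paper's strategy, and your exact identity is precisely the paper's Proposition~\ref{integral equation} (restated in Fourier form in Lemma~\ref{lemma: Fourier relation between CLUE and LUE} and in Laplace form in Proposition~\ref{disintegrationforRTE}). But the step where you claim that ``a standard Gaussian expansion of the phase\ldots leaves the clean pointwise asymptotic $R_n^{\delta,(N+\alpha)/4}(x)=(1+o(1))R_n^{LUE,1/(4N)}(x)$'' is exactly where the real difficulty lies in the bulk, and it cannot be obtained by simply evaluating $R_n^{LUE,1/\lambda}$ at the saddle. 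The relative width of the saddle window is $\delta\lambda/\lambda_*\sim 1/N$, which is the \emph{same} order as the bulk level spacing $1/(N\psi(u))$; consequently as $\lambda$ sweeps the saddle, the rescaled spectral arguments move by $O(1)$ level-spacing units (in the complex direction), and the correlation kernel oscillates by an $O(1)$ amount. The paper's resolution is its Lemma~\ref{lemma:KN sin kernal in the complex region}: it computes the limit of the Laguerre kernel at the \emph{complex} points $(u+t/(N\psi(u)))(1+iH)$ with $H=y/N$, obtaining a modified sine kernel whose $H$-dependence only disappears as $H\to 0$; the integration against the characteristic function then recovers the sine determinant. This needs Vanlessen's Riemann--Hilbert asymptotics away from the real axis, which is substantially more than ``uniform Plancherel--Rotach asymptotics,'' and the tail truncation needs the explicit complex-plane bounds on $h_N$ and on $\tilde K_N(\cdot,\cdot,1/4N)$ in Lemmas~\ref{lemma: the estimates of hN}--\ref{Lemma: KN(z1,z2)} and Corollary~\ref{corollary}. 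This is precisely why the crude concentration/saddle argument is declared ``insufficient\ldots in the bulk'' in the paper's remark after Lemma~\ref{concentrationlemma1}.

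You should also be aware that the paper does not use a unified method across the three regimes. At the soft edge it uses a genuinely simpler concentration argument (Lemma~\ref{concentrationlemma1} with $b_N=N^{-\gamma}$, $\gamma\in(2/3,1)$), which works because $N^{-2/3}$ is coarser than the best achievable concentration rate $N^{-\kappa}$, $\kappa<1$; no complex analysis is needed there. At the hard edge it takes an algebraic route: expand $R_n^{LUE,1/(4N)}$ as $e^{-4N\sum x_i}\sum c_{\{l\}}^{(N)}x_1^{l_1'}\cdots x_n^{l_n'}$, push the inverse Laplace transform through the expansion term by term, and control the resulting ratio of gamma functions via the Tricomi--Erd\'elyi expansion (Lemma~\ref{ratiooftwogamma}); this deliberately sidesteps the need for uniform-in-$s$ Bessel asymptotics that you correctly flag as delicate. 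So your proposal is sound in outline, but it underestimates the bulk case, where the ``uniform estimates'' you anticipate are in fact the entire content of the proof, and it misses that the paper uses simpler or structurally different arguments at the two edges.
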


\begin{theorem}
\label{theorem for BTLUE: bulk} Let $R_{n}^{\theta ,r}$ be the
$n$-point correlation function of the bounded trace LUE, defined by
(\ref{correlation function of CLUE}). Let $f\in
\mathbb{C}_{c}(R^{n})$, the set of all continuous functions on
$\mathbb{R}^{n} $ with compact support, the following asymptotic
properties hold.

(i) The bulk of the spectrum: for every $x\in (0,1)$ ,
\begin{align}
& \lim_{N\rightarrow \infty }\frac{1}{(N\psi (x))^{n}}\int_{\mathbb{R}%
^{n}}f(t_{1},\ldots ,t_{n})R_{n}^{\theta ,\frac{%
N+\alpha }{4}}\left( x+\frac{t_{1}}{N\psi (x)},\dots
,x+\frac{t_{n}}{N\psi (x)}\right) \,d^{n}t  \notag
\\
& =\int_{\mathbb{R}^{n}}f(t_{1},\ldots ,t_{n})\det [ K_{\sin
}(t_{i},t_{j})] _{i,j=1}^{n}\,d^{n}t.  \notag
\end{align}


(ii) The soft edge of the spectrum:
\begin{align}
& \lim_{N\rightarrow \infty }\frac{1}{((2N)^{2/3})^{n}}\int_{\mathbb{R}%
^{n}}f(t_{1},\ldots ,t_{n})R_{n}^{\theta,\frac{N+\alpha }{4}}\left( 1+\frac{%
t_{1}}{(2N)^{2/3}},\ldots ,1+\frac{t_{n}}{(2N)^{2/3}}\right)
\,d^{n}t  \notag
\\
& =\int_{\mathbb{R}^{n}}f(t_{1},\ldots ,t_{n})\det [K_{\mathrm{Airy}%
}(t_{j},t_{k})]_{j,k=1}^{n}\,d^{n}t.  \notag
\end{align}

(iii) The hard edge of the spectrum:
\begin{align}
& \lim_{N\rightarrow \infty }\frac{1}{(16N^{2})^{n}}\int_{\mathbb{R}%
^{n}}f(t_{1},\ldots ,t_{n})R_{n}^{\theta,\frac{%
N+\alpha }{4}}\left( \frac{t_{1}}{16N^{2}},\ldots ,\frac{t_{n}}{16N^{2}}%
\right) \,d^{n}t  \notag
\\
& =\int_{\mathbb{R}^{n}}f(t_{1},\ldots ,t_{n})\det
[K_{\mathrm{J}_{\alpha}}(t_{i},t_{j})]_{i,j=1}^{n}\,d^{n}t.  \notag
\end{align}
\end{theorem}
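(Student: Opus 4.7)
\textbf{Proof plan for Theorem \ref{theorem for BTLUE: bulk}.}

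The strategy is to express the BTLUE $n$-point function as an average of FTLUE $n$-point functions and then invoke Theorem \ref{theorem for CLUE: bulk}. Writing $\theta(r - \sum_i x_i) = \int_0^r \delta(s - \sum_i x_i)\, ds$ and using the explicit formula (\ref{partition for HS}), a direct computation gives $Z_\theta^r = \int_0^r Z_\delta^s\, ds = \frac{r^{N(N+\alpha)+1}}{N(N+\alpha)+1} \cdot \frac{\prod_{j=1}^N \Gamma(1+j)\Gamma(\alpha+j)}{\Gamma(N(N+\alpha))}$, and the same decomposition inside the correlation function yields
\begin{equation*}
R_n^{\theta, r}(x_1, \ldots, x_n) = \int_{\sum_{i=1}^n x_i}^{r} \frac{(N(N+\alpha)+1)\, s^{N(N+\alpha)}}{r^{N(N+\alpha)+1}}\, R_n^{\delta, s}(x_1, \ldots, x_n)\, ds.
\end{equation*}
After the substitution $s = r(1 - w/(N(N+\alpha)))$, combined with the exact homogeneity $R_n^{\delta, s}(x) = (r/s)^n R_n^{\delta, r}((r/s) x)$ that follows from the rescaling $x \mapsto (r/s) x$ between FTLUE at trace $s$ and FTLUE at trace $r$, the $s$-weight becomes $\frac{N(N+\alpha)+1}{N(N+\alpha)} (1 - w/(N(N+\alpha)))^{N(N+\alpha)-n}\, dw$, which tends to $e^{-w}\, dw$ on $[0, \infty)$.

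In each of the three regimes of Theorem \ref{theorem for BTLUE: bulk} at $r = (N+\alpha)/4$, I rewrite $(r/s) x_i$ in the appropriate microscopic coordinates. The dilation factor $r/s = 1 + w/(N(N+\alpha)) + O(1/N^4)$ produces, for each fixed $w$, a shift in the microscopic variable of size $O(w/N)$ in the bulk (from $N\psi(u)\, u\, (r/s -1)$), $O(w/N^{4/3})$ at the soft edge (from $(2N)^{2/3}(r/s - 1)$), and $O(w/N^2)$ at the hard edge. Hence the shifted parameter $t_i'(w) \to t_i$ pointwise in $w$, and Theorem \ref{theorem for CLUE: bulk} together with the continuity of the three limit kernels gives that the appropriately rescaled $R_n^{\delta, r}((r/s) x_1, \ldots, (r/s) x_n)$ converges to the determinantal limit evaluated at the original $t_i$.

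To conclude, I fix $f \in C_c(\mathbb{R}^n)$, interchange the $t$- and $w$-integrations by Fubini, and pass to the limit in the $w$-integral by dominated convergence. For $w$ in any fixed compact subset of $[0,\infty)$, the inner integral over $t$ is uniformly controlled by Theorem \ref{theorem for CLUE: bulk}, and the weight $(1 - w/(N(N+\alpha)))^{N(N+\alpha)-n}$ is dominated by $e^{-w/2}$ once $N$ is large. For the complementary range of $w$, the scaling $R_n^{\delta, s}(x) = (r/s)^n R_n^{\delta, r}((r/s)x)$ bounds the inner integrand by a polynomial in $r/s$, absorbed by the super-exponential decay of the weight. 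Since $\int_0^\infty e^{-w}\, dw = 1$, this produces exactly the limits asserted in Theorem \ref{theorem for BTLUE: bulk}. The main obstacle I foresee is this uniform control of the $w$-integrand over the full range $w \in [0, w_{\max}]$ with $w_{\max} \asymp N^2$; once that crude bound is in place, the rest of the argument is a routine combination of Fubini, Theorem \ref{theorem for CLUE: bulk}, and the elementary asymptotic $(1 - w/(N(N+\alpha)))^{N(N+\alpha)} \to e^{-w}$.
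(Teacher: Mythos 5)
Your proposal follows essentially the same route as the paper: disintegrate the BTLUE correlation function as a radial average of FTLUE correlation functions (the paper's Proposition~\ref{relationsforBTLUE}), observe that the radial weight is sharply concentrated at the boundary of the trace constraint (Lemma~\ref{concentrationlemma2}), and then invoke Theorem~\ref{theorem for CLUE: bulk} after checking that the induced microscopic dilation is $o(1)$ in each scaling regime. Two remarks. First, a normalization quibble: you inherit $Z_\delta^r\propto r^{N_\alpha}$ from (\ref{partition for HS}), which produces the weight $(N_\alpha+1)u^{N_\alpha}\,du$; the power of $r$ consistent with the paper's own Proposition~\ref{relationsforBTLUE} and with the relation $Z_s=\Gamma(N_\alpha)s^{N_\alpha}Z_\delta^1$ in the proof of Proposition~\ref{integral equation} is in fact $r^{N_\alpha-1}$, giving the weight $N_\alpha u^{N_\alpha-1}\,du$ stated there. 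This is an $O(1)$ discrepancy invisible to the asymptotics. Second, your change of variables $u=1-w/N_\alpha$, under which the weight tends to $e^{-w}\,dw$, is a clean repackaging of the concentration, but the split between ``$w$ in a fixed compact set'' and ``the complementary range'' is not quite enough: since the rescaled inner integral is only bounded by $O(N^n)$, the cutoff $W$ must grow with $N$ (any $W\asymp N^{2-\kappa}$, $\kappa\in(1,2)$, works, which is exactly the paper's $b_N=N^{-\kappa}$ rewritten), and on the range $w\le W$ one still needs a bound on the inner integral uniform in $N$ and $w$. You correctly flag this as ``the main obstacle''; the paper closes it via Lemma~\ref{boundlemma} at the soft edge and its bulk analogue (stated without proof in the proof of Theorem~\ref{theorem for BTLUE: bulk}), both extracted a posteriori from the already-established convergence for the unconstrained/fixed-trace ensemble. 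Once this uniform bound is imported, your dominated-convergence argument matches the paper's.
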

\noindent To the best of our knowledge, this is the first result
about the local properties of correlation functions for the bounded
trace ensembles. Theorems \ref{theorem for CLUE: bulk} and
\ref{theorem for BTLUE: bulk} give an affirmative answer to Mehta's
``equivalence of ensembles" problem in the case of  Laguerre unitary
ensemble. In fact, our method can deal with some more general
ensembles which will be considered in a forthcoming paper.

The plan of the remaining part of our paper is the following.
Sections 2, 3 and 4 are devoted to the proof of Theorem \ref{theorem
for CLUE: bulk}. Section 5 deals with Theorem \ref{theorem for
BTLUE: bulk}. In Sect. 2, the asymptotic behavior of
$R_{n}^{\delta,\frac{N+\alpha}{4}}$ in the bulk of the spectrum is
given  based on the rigorous estimates of the correlation function
$R_{n}^{LUE,\sigma}$ in the complex plane, inspired by \cite{gg}.
Some of the results in \cite{Vanlessen} play an important role on
our proof. In sect. 3, by using the similar method introduced by the
authors in \cite{LZ}, the universality at the soft edge of the
spectrum is proved. In sect. 4, based on a heuristic idea in
\cite{av} where universality at zero is considered for fixed trace
Gauss-type ensembles, 
 the asymptotic behavior at the hard edge is derived. In the last
section, a ``sharp" concentration phenomenon is observed, then local
statistical properties of the eigenvalues between the fixed and
bounded LUEs can be proved to be identical in the limit. So we
extend the results from Theorem \ref{theorem for CLUE: bulk} to
Theorem \ref{theorem for BTLUE: bulk}.

\section{proof of theorem \protect\ref{theorem for CLUE: bulk}: the bulk
of the spectrum}

\subsection{The relation between $R_{n}^{LUE,s}$ and $R_{n}^{\delta,r}$}

For every $\vartheta\in\mathbb{R}$, through this section, we will
denote by $(.)^{\vartheta}$ the function
\begin{equation}
(.)^{\vartheta}:\mathbb{C}\backslash(-\infty,1]\rightarrow\mathbb{C}%
:\,z|\rightarrow\exp\vartheta\log z,
\end{equation}
where log denotes the principle branch of the logarithm. The
constants $C(\mu),C_{1}(\mu)$, $C_{2}(\mu),\,\Omega(\mu)$, depending
on the parameter $\mu$, may change from one line to another line.

Let $s>0$, and let for every $k\geq0$, $\tilde{h}_{k}(x,s)$ be a
polynomial
of degree $k$ with positive leading coefficient such that for $\tilde{h}%
_{k}(x,s)$,
\begin{equation}
\int_{0}^{\infty}\tilde{h}_{k}(x,s)\tilde{h}_{j}(x,s)x^{\alpha}e^{-x/s}dx=%
\delta_{j,k},\,j,\,k=0,1,\ldots.  \label{generate function for h}
\end{equation}
The generalized  Laguerre polynomials
$(\tilde{L}_{j}^{\alpha}(x,s))_{j\geq0}$ with the positive leading
coefficients $\frac{1}{j!}$ are defined by
\begin{equation}
\sum_{j=0}^{\infty}\tilde{L}_{j}^{\alpha}(x,s)w^{j}=(1+sw)^{-\alpha-1}\exp(%
\frac{xw}{1+sw}),  \label{generation function for L}
\end{equation}
and for $i,j=0,1\ldots$,  satisfy the relation
\begin{equation}
\int_{0}^{\infty}e^{-x/s}x^{\alpha}\tilde{L}_{i}^{\alpha}(x,s)\tilde{L}%
_{j}^{\alpha}(x,s)dx=\,s^{\alpha+1+i+j}\frac{\Gamma(i+\alpha+1)}{\Gamma (i+1)%
}\delta_{i,j}.  \label{orthogonal relation for L}
\end{equation}
Hence
\begin{equation}
\tilde{h}_{i}(x,s)=s^{-i-\frac{\alpha+1}{2}}\left( \frac{\Gamma(i+\alpha +1)%
}{\Gamma(i+1)}\right) ^{-1/2}\tilde{L}_{i}^{\alpha}(x,s).
\label{between h and L}
\end{equation}
The functions defined by
\begin{equation}
\tilde{\varphi}_{i}(x,s)=x^{\alpha/2}e^{-x/(2s)}\tilde{h}_{i}(x,s),\,i=0,1,%
\ldots,  \label{varphi}
\end{equation}
form an orthogonal sequence of functions in the Hilbert space $\mathbb{L}%
^{2}(0,\infty)$. Next let us consider the standardized Laguerre
polynomials \cite{szego} with the positive leading coefficient with
respect to the weight $x^{\alpha}e^{-x}$ by the relation
\begin{equation}
L_{i}^{\alpha}(x)=\tilde{L}_{i}^{\alpha}(x,1),\,h_{i}(x)=\tilde{h}%
_{i}(x,1),\,\varphi_{i}(x)=\tilde{\varphi}_{i}(x,1),
\label{relation:LUE with weight s and NO s}
\end{equation}
which satisfy
\begin{equation}
\tilde{L}_{i}^{\alpha}(x,s)=s^{i}L_{i}^{\alpha}(xs^{-1}),\,\tilde{h}%
_{i}(x,s)=\,s^{-(\alpha+1)/2}h_{i}^{\alpha}(x\,s^{-1}),\,\tilde{\varphi}%
_{i}(x,s)=s^{-1/2}\,\varphi_{i}(x\,s^{-1}). \label{Further
relation:LUE with weight s and NO s}
\end{equation}
The following three recurrence formula for
$\tilde{L}_{j}^{\alpha}(x,s)$ holds:
\begin{equation}
j\tilde{L}_{j}^{\alpha}(x,s)=(x\,s^{-1}-2j-\alpha+1)s\tilde{L}_{j-1}^{\alpha
}(x,s)-s^{2}\tilde{L}_{j-2}^{\alpha}(x,s)(j+\alpha-1) \label{third
recuuence realtion}
\end{equation}
where $j=2,3,\ldots$. The $n$-point correlation function
$R_{n}^{LUE,s}$ of the LUE could be expressed as
\begin{equation}
R_{n}^{LUE,s}(x_{1},\dots,x_{n})=\det(\tilde{K}%
_{N}(x_{i},x_{j},s))_{i,j=1}^{n},  \label{Rn of LUE weight s
expressed as KN}
\end{equation}
where
\begin{equation}
\tilde{K}_{N}(x,y,s)=\sum_{k=0}^{N-1}\tilde{\varphi}_{k}(x,s)\tilde{\varphi}%
_{k}(y,s).  \label{KN expressed as polynomial}
\end{equation}
Let
\begin{equation}
K_{N}(x,y)\doteq\tilde{K}_{N}(x,y,1),  \label{KN(x,y,1)}
\end{equation}
from (\ref{Further relation:LUE with weight s and NO s}), we get

\begin{equation}
\tilde{K}_{N}(x,y,s)=s^{-1}K_{N}(x\,s^{-1},y\,s^{-1}).
\label{relation: KN and KsN}
\end{equation}
The Chrisoffel-Darboux formula for kernels $K_{N}$ and
$\tilde{K}_{N}$ reads
\begin{equation}
K_{N}(x,y)=\sqrt{N(N+\alpha)}\frac{\varphi_{N}(x)\varphi_{N-1}(y)-\varphi_{N}(y)\varphi
_{N-1}(x)}{x-y}  \label{CD formula of KN}
\end{equation}
and
\begin{equation}
\tilde{K}_{N}(x,y,s)=\sqrt{N(N+\alpha)}\frac{\tilde{\varphi}_{N}(x,s)\tilde{\varphi}%
_{N-1}(y,s)-\tilde{\varphi}_{N}(y,s)\tilde{\varphi}_{N-1}(x,s)}{x-y}.\label{KNS
expressed as CD}
\end{equation}
Note that the reproducing kernel $K_{N}(x,y)$ has the following
integral representation (Eq.(4.2), \cite{widim}, or Eq.(3.6),
\cite{Johnstone})
\begin{equation}
K_{N}(x,y)=\frac{\sqrt{N(N+\alpha)}}{2}\int_{0}^{+\infty}S_{1}(x+z)S_{2}%
(y+z)+S_{1}(y+z)S_{2}(x+z)dz \label{integral rep of KN}%
\end{equation}
where
\begin{align}
S_{1}(x)  &  =\sqrt{N}\frac{\varphi_{N}(x)}{x}+\sqrt{N+\alpha}\frac
{\varphi_{N-1}(x)}{x},\label{S1}\\
S_{2}(x)  &  =\sqrt{N+\alpha}\frac{\varphi_{N}(x)}{x}+\sqrt{N}\frac
{\varphi_{N-1}(x)}{x}, \label{S2}%
\end{align}
and so, for every $s>0$, the following relation%
\begin{equation}
\label{integral representation of KNs}
K_{N}(\frac{x}{s},\frac{y}{s})=\frac{\sqrt{N(N+\alpha)}}{2}\int_{0}^{+\infty
}S_{1}(\frac{x}{s}+z)S_{2}(\frac{y}{s}+z)+S_{1}(\frac{y}{s}+z)S_{2}(\frac
{x}{s}+z)dz
\end{equation}
holds. The relations (\ref{Further relation:LUE with weight s and NO
s}), (\ref{third recuuence realtion}), (\ref{relation: KN and KsN})
and our extension of the function $(.)^{\vartheta}$ to
$\mathbb{C}\backslash(-\infty,1]$ allow us to continue
$\tilde{L}_{i}(x,s)$, $\tilde{\varphi}_{i}(x,s)$ and
$\tilde{K}_{N}(x,y,s)$ to this domain analytically in the parameter
$s$. The relations (\ref{generation function for L})-(\ref{KNS
expressed as CD}) remain valid under these continuations. So does
(\ref{integral representation of KNs}) whenever this integral is
well-defined.

Next, for $R_{n}^{LUE,s}$ and $R_{n}^{\delta,r},$ we will prove that
one can be expressed by the other. Let us stress that the similar
relation will be frequently used in the proofs of Theorems 1 and 2.
\begin{proposition}
\label{integral equation} Let $R_{n}^{LUE,s}$ and $R_{n}^{\delta
,r}$ be the $n$-point correlation functions, defined by
(\ref{correlation function of LUE}) and (\ref{correlation function
of CLUE}) respectively, then we have the following integral equation
\begin{equation}
R_{n}^{LUE,s}(x_{1},\ldots ,x_{n})=\int_{0}^{\infty }R_{n}^{\delta
,u}(x_{1},\ldots ,x_{n})\,\gamma (\frac{u}{s})s^{-1}du,  \label{LUE
and CLUE}
\end{equation}
where $\gamma (x)$ is defined by
\begin{equation}
\gamma (x)\doteq \frac{1}{\Gamma (N(N+\alpha
))}\,e^{-x}\,x^{N^{2}+N\alpha -1}1_{[0,\infty )}(x).  \label{gamma
def}
\end{equation}
\end{proposition}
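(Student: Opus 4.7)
The plan is to express the LUE probability density as a continuous mixture of FTLUE densities indexed by the value of the trace, and then exchange the order of integration. The key identity is
\begin{equation*}
e^{-\sum_{i=1}^{N} x_{i}/s} \;=\; \int_{0}^{\infty} e^{-r/s}\,\delta\!\Bigl(\sum_{i=1}^{N} x_{i} - r\Bigr)\,dr,
\end{equation*}
which lets me rewrite the exponential factor in the definition (\ref{pdf for weight s}) of $P_{N}^{LUE,s}$ as an integral over trace values $r>0$. After this substitution, the part of the integrand depending on $x_{1},\ldots,x_{N}$ reads $\delta(\sum x_{i}-r)\prod_{i<j}|x_{i}-x_{j}|^{2}\prod x_{i}^{\alpha}$, which by the very definition (\ref{probability of fixed trace lauguerre}) equals $Z_{\delta}^{r}\,P_{N}^{\delta,r}(x_{1},\ldots,x_{N})$. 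Consequently I obtain the pointwise representation
\begin{equation*}
P_{N}^{LUE,s}(x_{1},\ldots,x_{N}) \;=\; \int_{0}^{\infty} \frac{Z_{\delta}^{r}}{Z_{s}}\,e^{-r/s}\,P_{N}^{\delta,r}(x_{1},\ldots,x_{N})\,dr.
\end{equation*}

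Next I would identify the scalar prefactor. Substituting the explicit partition functions (\ref{partition weight s}) and (\ref{partition for HS}) into $(Z_{\delta}^{r}/Z_{s})\,e^{-r/s}$, the common Barnes-type product $\prod_{j=1}^{N}\Gamma(1+j)\Gamma(\alpha+j)$ cancels, and the surviving powers of $r$ and $s$ combine with $e^{-r/s}$ and $\Gamma(N(N+\alpha))^{-1}$ to reproduce exactly the expression $\gamma(r/s)\,s^{-1}$ from (\ref{gamma def}). Thus the mixture representation reads $P_{N}^{LUE,s}(x_{1},\ldots,x_{N})=\int_{0}^{\infty}\gamma(r/s)\,s^{-1}P_{N}^{\delta,r}(x_{1},\ldots,x_{N})\,dr$.

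Finally, I would integrate both sides over $(x_{n+1},\ldots,x_{N})\in\mathbb{R}^{N-n}$ and multiply by $N!/(N-n)!$. Since $\gamma\geq 0$ and $P_{N}^{\delta,r}\geq 0$, Tonelli's theorem legitimates interchanging the outer integral in $r$ with the iterated integral over $x_{n+1},\ldots,x_{N}$; applying the definitions (\ref{correlation function of LUE}) on the left-hand side and (\ref{correlation function of CLUE}) inside the $r$-integral on the right-hand side yields the claim. The only real work in this proof is careful bookkeeping of the normalization constants, and the use of Tonelli is unproblematic owing to the non-negativity of all factors; hence there is no substantive obstacle to overcome.
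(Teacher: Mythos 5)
Your route is essentially the paper's: both realize $P_{N}^{LUE,s}$ as a continuous mixture of fixed-trace densities indexed by the trace value and then integrate out the last $N-n$ variables; the paper implements the mixture through the change of variables $x_{i}=uy_{i}$ with $y$ on the simplex, which is exactly the polar decomposition underlying your Dirac-delta identity, and it likewise invokes Fubini/Tonelli to commute the $u$-integral with the marginalization.

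One bookkeeping step you assert, however, does not check out as stated. Plugging (\ref{partition weight s}) and (\ref{partition for HS}) verbatim into $(Z_{\delta}^{r}/Z_{s})\,e^{-r/s}$ gives $\frac{(r/s)^{N(N+\alpha)}}{\Gamma(N(N+\alpha))}\,e^{-r/s}$, whereas $\gamma(r/s)\,s^{-1}=\frac{(r/s)^{N(N+\alpha)-1}}{\Gamma(N(N+\alpha))\,s}\,e^{-r/s}$; the two differ by a factor of $r$. The source of the mismatch is that (\ref{partition for HS}) as printed carries one too many powers of $r$: scaling $x_{i}=ry_{i}$ (with the delta contributing $r^{-1}$) gives $Z_{\delta}^{r}=r^{N(N+\alpha)-1}Z_{\delta}^{1}$, not $r^{N(N+\alpha)}Z_{\delta}^{1}$. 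With the corrected exponent your prefactor identification holds exactly and the proof is complete. The paper's own proof sidesteps the issue because it only ever uses $Z_{\delta}^{1}$ (via the ratio $Z_{s}=\Gamma(N(N+\alpha))\,s^{N(N+\alpha)}Z_{\delta}^{1}$) and never needs $Z_{\delta}^{r}$ at general $r$; your version does, so you should rederive the $r$-scaling of $Z_{\delta}^{r}$ rather than quote (\ref{partition for HS}).
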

\begin{proof}
By (\ref{correlation function of LUE}), for any $f\in
\mathbb{L}^{\infty}(\mathbb{R}^{n})$,
\begin{align}
& \int_{\mathbb{R}^{n}}f(x_{1},\ldots
,x_{n})R_{n}^{LUE,s}(x_{1},\ldots ,x_{n})dx_{1}\ldots
dx_{n}=\frac{1}{Z_{s}}\frac{N!}{(N-n)!}\int_{0}^{\infty
}\int_{\triangle }  \notag  \label{relation} \\
& \times f(uy_{1},\ldots ,uy_{n})\prod_{1\leq i<j\leq
N}|y_{i}-y_{j}|^{2}\prod_{i=1}^{N}y_{i}^{\alpha }e^{-\frac{u}{s}%
}u^{N^{2}+N\alpha -1}dy_{1}\ldots dy_{N}du.  \notag
\end{align}%
Here we make the change of variables: $x_{i}=uy_{i},\,1\leq i\leq N$, and $%
y_{i}$ belongs to the standard N-simplex $\bigtriangleup
=\{(y_{1},\ldots ,y_{N})|\sum_{i=1}^{N}y_{i}=1,\,y_{i}\geq 0\}$. By (\ref%
{correlation function of CLUE}), the right-hand side of the above
equality equals
\begin{align}
& \frac{Z_{\delta}^{1}}{Z_{s}}\int_{0}^{\infty }\int_{\mathbb{R}%
^{n}}f(uy_{1},\ldots ,uy_{n})R_{n}^{\delta ,1}(y_{1},\ldots ,y_{n})e^{-\frac{%
u}{s}}u^{N^{2}+N\alpha -1}dy_{1}\ldots dy_{n}du  \notag \\
& =\frac{Z_{\delta}^{1}}{Z_{s}}\int_{0}^{\infty }\int_{\mathbb{R}%
^{n}}f(y_{1},\ldots ,y_{n})R_{n}^{\delta ,1}(\frac{y_{1}}{u},\ldots ,\frac{%
y_{n}}{u})\frac{1}{u^{n}}e^{-\frac{u}{s}}u^{N^{2}+N\alpha
-1}dy_{1}\ldots
dy_{n}du  \notag \\
& =\frac{Z_{\delta}^{1}}{Z_{s}}\int_{\mathbb{R}^{n}}f(x_{1},\ldots
,x_{n})\int_{0}^{\infty }R_{n}^{\delta ,u}(x_{1},\ldots
,x_{n})e^{-\frac{u}{s}}u^{N^{2}+N\alpha -1}dudx_{1}\ldots dx_{n}.
\notag
\end{align}%
Here we have used the following fact that
\begin{equation}
R_{n}^{\delta ,r}(x_{1},\ldots ,x_{n})=R_{n}^{\delta
,1}(x_{1}r^{-1},\ldots ,x_{n}r^{-1})r^{-n}.  \label{relation for
correlation fun of CLUE}
\end{equation}
Thus we prove that%
\begin{equation}
R_{n}^{LUE,s}(x_{1},\ldots
,x_{n})=\frac{Z_{\delta}^{1}}{Z_{s}}\int_{0}^{\infty }R_{n}^{\delta
,u}(x_{1},\ldots ,x_{n})e^{-\frac{u}{s}}u^{N^{2}+N\alpha -1}du,
\label{relation between LUE and CLUE}
\end{equation}%
It follows from (\ref{partition weight s}) and (\ref{partition for
HS}) that
\begin{equation}
Z_{s}=\Gamma (N(N+\alpha ))s^{N(N+\alpha )}Z_{\delta}^{1}.
\end{equation}
This proves this proposition.
\end{proof}
Note that the relations Eq.(\ref{LUE and CLUE}) and
Eq.(\ref{relation for correlation fun of CLUE}) also hold when
extending the parameter $s$ and $r$ to the domain $\operatorname{Re}
s,\,\operatorname{Re} r>0$. By Proposition 2, set $u=(N+\alpha
+v)/4,\,s=1/4N$, we have
\begin{equation}
R_{n}^{LUE,\frac{1}{4N}}=\int_{0}^{\infty }R_{n}^{\delta ,\frac{N+\alpha +v}{%
4}}N\,\gamma \left( N(N+\alpha +v)\right) dv. \label{LUE and CLUE
after change of variable}
\end{equation}%
We will state the following lemma, which plays a central role in our
proof of universality in the bulk of the spectrum.
\begin{lemma}
\label{lemma: Fourier relation between CLUE and LUE}
\begin{equation}
\int_{0}^{\infty }R_{n}^{\delta ,\frac{N+\alpha +v}{4}}N\,\gamma
\left(
N(N+\alpha +v)\right)\exp(-iyv)dv=\phi _{N}(y)R_{n}^{LUE,\frac{1}{4N(1+iy/N)}%
}\text{,}    \label{fourier transform}
\end{equation}%
where
\begin{equation}
\phi _{N}(y)=e^{iy(N+\alpha )}(1+\frac{iy}{N})^{-(N^{2}+N\alpha
)}\text{.} \label{Phi}
\end{equation}
\end{lemma}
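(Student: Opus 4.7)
The plan is to view the claimed identity as an instance of Proposition~\ref{integral equation} evaluated at a complex value of $s$. The paper has already noted that Proposition~\ref{integral equation} extends analytically to $\operatorname{Re} s > 0$, using the extension of $(\cdot)^{\vartheta}$ and of $\tilde{K}_N(x,y,s)$ introduced at the start of the section. I would therefore apply the extended identity with the choice $s = \frac{1}{4N(1+iy/N)}$: since $\operatorname{Re}(1/s) = 4N > 0$, this $s$ lies in the admissible half-plane for every real $y$, and $R_n^{LUE,s}$ is well defined by analytic continuation.

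With that choice of $s$, perform the change of variables $u = (N+\alpha+v)/4$ exactly as in (\ref{LUE and CLUE after change of variable}), so that $du = dv/4$ and $s^{-1} = 4N(1+iy/N)$. This converts
\begin{equation*}
R_n^{LUE,\frac{1}{4N(1+iy/N)}} = \int R_n^{\delta,\frac{N+\alpha+v}{4}} \, \gamma\!\left(N(1+iy/N)(N+\alpha+v)\right) \cdot N(1+iy/N) \, dv.
\end{equation*}
The core computation is then to peel the factor $(1+iy/N)$ out of $\gamma$. Using $\gamma(x) = \frac{1}{\Gamma(N(N+\alpha))} e^{-x} x^{N^2+N\alpha-1}$, one checks directly that
\begin{equation*}
\gamma\!\left(N(1+iy/N)(N+\alpha+v)\right) = (1+iy/N)^{N^2+N\alpha-1} \, e^{-iy(N+\alpha+v)} \, \gamma\!\left(N(N+\alpha+v)\right),
\end{equation*}
where the branch of the power is the principal one — legitimate since $N(1+iy/N)(N+\alpha+v)$ stays off $(-\infty, 0]$ for $v$ in the support of $\gamma(N(N+\alpha+v))$. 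Substituting and collecting the $v$-independent factors $(1+iy/N)^{N^2+N\alpha}$ and $e^{-iy(N+\alpha)}$ in front, I obtain
\begin{equation*}
R_n^{LUE,\frac{1}{4N(1+iy/N)}} = e^{-iy(N+\alpha)}(1+iy/N)^{N^2+N\alpha} \int R_n^{\delta,\frac{N+\alpha+v}{4}} \, N\gamma\!\left(N(N+\alpha+v)\right) e^{-iyv} \, dv,
\end{equation*}
which, after dividing by the prefactor (that is, multiplying by $\phi_N(y)$), is precisely (\ref{fourier transform}).

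The only delicate step I anticipate is justifying the analytic continuation cleanly: I must ensure that the integral defining $R_n^{LUE,s}$ converges uniformly on compact subsets of $\{\operatorname{Re} s > 0\}$, so that both sides of Proposition~\ref{integral equation} are holomorphic in $s$ and agree by identity theorem from the positive real axis. This is controlled by the Gaussian-type decay of $\gamma(u/s)$ at infinity (since $\operatorname{Re}(1/s) > 0$) and by the power-law bound on $R_n^{\delta,u}$ at $u = 0$. Once that is in hand, the algebraic manipulation above is essentially routine, and the choice of $s$ is exactly what converts the Laplace-type kernel $\gamma(u/s)s^{-1}$ into a shifted Fourier kernel in $v$.
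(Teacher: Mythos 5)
Your proposal is correct and follows essentially the same route as the paper's proof: both rest on the analytically continued Proposition~\ref{integral equation} applied at $s=\frac{1}{4N(1+iy/N)}$, after which the change of variables $u=(N+\alpha+v)/4$ and the factorization of $\gamma$ (pulling the $v$-independent factor $(1+iy/N)^{N^{2}+N\alpha}e^{-iy(N+\alpha)}$ out in front) yield the identity. You run the computation from the right-hand side to the left, whereas the paper runs it left to right, but the algebra is identical; your explicit attention to the admissible half-plane, the principal branch of the power, and the holomorphic-extension step is a careful treatment of a point the paper only asserts in passing.
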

\noindent Note that the function $\phi _{N}(\cdot)$ is the
characteristic function of $N\,\gamma \left( N(N+\alpha +
\cdot)\right)$.
\begin{proof}
Set $(N+\alpha +v)/4=u$, the left-hand side of Eq.(\ref{fourier
transform}) equals
\begin{align}
& \int_{0}^{\infty }R_{n}^{\delta ,u}4N\,\gamma (4Nu)\exp
(-iy(4u-N-\alpha
))du  \notag \\
& =\frac{e^{iy(N+\alpha )}}{\Gamma (N^{2}+N\alpha )}\int_{0}^{\infty
}R_{n}^{\delta ,u}4N\,(4Nu)^{N^{2}+N\alpha -1}\,e^{-4Nu}\exp
(-iy4u)du
\notag \\
& =\int_{0}^{\infty }R_{n}^{\delta ,u}\,e^{-4Nu(1+\frac{iy}{N})}\,(4Nu(1+%
\frac{iy}{N}))^{N^{2}+N\alpha -1}4N(1+\frac{iy}{N})du  \notag \\
& \times \frac{e^{iy(N+\alpha )}}{\Gamma (N^{2}+N\alpha )}(1+\frac{iy}{N}%
)^{-(N^{2}+N\alpha )}  \notag \\
& =e^{iy(N+\alpha )}(1+\frac{iy}{N})^{-(N^{2}+N\alpha )}\,R_{n}^{LUE,\frac{1%
}{4N(1+iy/N)}}=\phi _{N}(y)\,R_{n}^{LUE,\frac{1}{4N(1+iy/N)}}.
\notag
\end{align}
Here we have used Proposition \ref{integral equation}.
\end{proof}
\begin{remark}
By Lemma \ref{lemma: Fourier relation between CLUE and LUE}, using
the inverse Fourier transform, we know that
\begin{equation}
R_{n}^{\delta ,\frac{N+\alpha }{4}}N\,\gamma \left( N(N+\alpha )\right) =%
\frac{1}{2\pi }\int_{-\infty }^{+\infty }\phi _{N}(y)R_{n}^{LUE,\frac{1}{%
4N(1+iy/N)}}dy\text{.}  \label{Inverse FT}
\end{equation}
\end{remark}

\subsection{Estimate of $R_{n}^{LUE,s}$ in the complex plane}

For the convenience of the reader, we will review some basic results
in \cite{Vanlessen}. The function (Eq.(3.38), \cite{Vanlessen})
\begin{equation}
\hat{\psi}(z)\doteq\frac{2}{i\pi
}\frac{(z-1)^{1/2}}{z^{1/2}},\,\text{for}\,z\in \mathbb{C}\backslash
\lbrack 0,1]  \label{complex MP(z)}
\end{equation}%
with principle branches of powers denotes analytic continuation of
the standard Mar\v{c}enko-Pastur law in the domain
$\mathbb{C}\backslash \lbrack 0,1]$. The following two formulas:
\begin{equation}
g(z)=\int_{0}^{1}\log (z-y)\psi (y)dy,\text{for}\,\ \,z\in \mathbb{C}%
\backslash (-\infty ,1]  \label{g_{n}(z)}
\end{equation}%
and
\begin{equation}
\xi (z)=-i\pi \int_{1}^{z}\hat{\psi}(y)dy,\,\text{for}\,z\in
\mathbb{C}\backslash (-\infty ,1]  \label{xi_n(z)}
\end{equation}%
come from Eq.(3.30) and Eq.(3.40) respectively in \cite{Vanlessen}.
The uniqueness of analytic function (Eq.(5.2), \cite{Vanlessen})
shows that
\begin{equation}
2\xi (z)=2g(z)-2z-l,\,\,\text{for}\,z\in \mathbb{C}\backslash
(-\infty ,1]. \label{xi_n(z) and g (z)}
\end{equation}%
Here $l$ is given by Proposition 3.12 in \cite{Vanlessen}.

For $0<\theta<1/2$ and $\beta>0$, the sets $S_{\theta,\beta}$ and
$\overline{S_{\theta,\beta}}$ are defined by
\begin{align}
& S_{\theta,\beta}=\{z\in\mathbb{C}|\,\theta<\Re(z)<1-\theta,\,|\Im z|<\beta\},  \notag \\
&
\overline{S_{\theta,\beta}}=\{z\in\mathbb{C}|\,\theta\leq\Re(z)\leq1-\theta,\,|\Im
z|\leq\beta\}.  \notag
\end{align}

\begin{lemma}
\label{lemma:KN sin kernal in the complex region}Let $\Lambda
(H)=1+iH$. For every $0<\theta<1/2$, there exists a positive number
$H_{0}(\theta)>0$ such that for kernel
$\tilde{K}_{N}(x,y,\frac{1}{4N})$ denoted by (\ref{KN expressed as
polynomial}), and for $A>0$ the relation
\begin{eqnarray}
&&\lim_{N\longrightarrow \infty }\frac{1}{N\psi (x)}\tilde{K}_{N}\left((x+\frac{u%
}{N\psi (x)})\Lambda (H),(x+\frac{v}{N\psi (x)})\Lambda
(H),\frac{1}{4N}\right)
\label{KN limit in the complex domain} \notag\\
&=&\frac{\sin \left(\pi (u-v)\Lambda (H)\frac{\hat{\psi}(x\Lambda (H))}{\psi (x)}\right)}{%
\pi (u-v)\Lambda (H)}\notag
\end{eqnarray}%
holds uniformly for all $x\in \lbrack \theta,1-\theta]$, $0\leq H\leq H_{0}(\theta)$ and $%
|u|,\,|v|\leq A$.
\end{lemma}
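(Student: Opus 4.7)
\begin{Proof}
\textbf{Proof proposal.} The plan is to reduce the lemma to an asymptotic statement about the standard Laguerre kernel $K_{N}$ in a complex neighbourhood of the bulk interval, and then invoke the Riemann--Hilbert Plancherel--Rotach asymptotics of \cite{Vanlessen} in that strip. Using the scaling identity (\ref{relation: KN and KsN}) one rewrites
\[
\frac{1}{N\psi(x)}\tilde K_{N}\!\left((x+\tfrac{u}{N\psi(x)})\Lambda(H),(x+\tfrac{v}{N\psi(x)})\Lambda(H),\tfrac{1}{4N}\right)=\frac{4}{\psi(x)}K_{N}\!\left(4Nz_{u}\Lambda(H),4Nz_{v}\Lambda(H)\right),
\]
where $z_{u}=x+u/(N\psi(x))$, $z_{v}=x+v/(N\psi(x))$. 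For $x\in[\theta,1-\theta]$ and $H\le H_{0}(\theta)$ small, both points $z_{u}\Lambda(H)$ and $z_{v}\Lambda(H)$ lie inside a fixed complex neighbourhood $\overline{S_{\theta/2,\beta}}$ of the bulk segment, provided $H_{0}$ is chosen so that the imaginary part stays inside the lens region of the Riemann--Hilbert parametrix.

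Next I would apply the Christoffel--Darboux formula (\ref{CD formula of KN}) and substitute Vanlessen's uniform Plancherel--Rotach asymptotics for $\varphi_{N}(4Nz)$ and $\varphi_{N-1}(4Nz)$, which on $\overline{S_{\theta/2,\beta}}$ express these functions as a sum of two oscillatory terms of the form $A^{\pm}(z)\,e^{\mp N\xi(z)}(1+O(1/N))$, with amplitudes analytic across the cut and phase governed by the function $\xi$ defined in (\ref{xi_n(z)}). The key analytic ingredient is the identity $\xi'(z)=-i\pi\hat\psi(z)$ which, combined with a first-order Taylor expansion at $z=x\Lambda(H)$,
\[
N\,\xi\!\left(x\Lambda(H)+\tfrac{4u\Lambda(H)}{4N\psi(x)}\right)=N\,\xi(x\Lambda(H))-i\pi\,u\,\Lambda(H)\,\tfrac{\hat\psi(x\Lambda(H))}{\psi(x)}+O(1/N),
\]
controls the local oscillation rate. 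The large amplitudes $e^{\pm N\xi(x\Lambda(H))}$ and the common factors $A^{\pm}(x\Lambda(H))$ enter identically in the two cross terms $\varphi_{N}(\cdot)\varphi_{N-1}(\cdot)-\varphi_{N}(\cdot)\varphi_{N-1}(\cdot)$, and they cancel. What survives are the small differences in the local phases, together with a prefactor $\sqrt{N(N+\alpha)}/(4Nz_{u}\Lambda(H)-4Nz_{v}\Lambda(H))$; assembling these yields exactly
\[
\frac{\sin\!\bigl(\pi(u-v)\Lambda(H)\,\hat\psi(x\Lambda(H))/\psi(x)\bigr)}{\pi(u-v)\Lambda(H)}.
\]

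The main obstacle is the uniformity: one must verify that Vanlessen's parametrix-based asymptotics, which are stated on the real axis in \cite{Vanlessen}, extend uniformly to the fixed complex strip $\overline{S_{\theta/2,\beta}}$ and that the $O(1/N)$ error terms are uniform in both $x\in[\theta,1-\theta]$ and $H\in[0,H_{0}]$. This is a standard consequence of the global parametrix construction (the outer parametrix and the local Bessel/Airy parametrices are analytic in the lens interior), but the neighbourhood width $\beta$ and the threshold $H_{0}(\theta)$ have to be chosen small enough that the shifted arguments $z_{u}\Lambda(H),z_{v}\Lambda(H)$ remain inside that analyticity region. A secondary, but routine, point is showing that the $O(1/N)$ terms in the amplitudes do not survive after multiplication by the prefactor $1/(u-v)$; this follows from the fact that the error bound is analytic and therefore the Cauchy estimates give the same rate for the difference $\varphi_{N}(\cdot)\varphi_{N-1}(\cdot)-\varphi_{N}(\cdot)\varphi_{N-1}(\cdot)$ after dividing by the difference of arguments. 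Once these uniformity issues are settled, the stated limit is immediate from the algebraic cancellation outlined above.
\end{Proof}
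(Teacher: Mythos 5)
Your proposal follows the same broad strategy as the paper---feed the local complex points into Vanlessen's Riemann--Hilbert asymptotics and read off the sine kernel---but the organizational route is genuinely different, and the difference matters here. The paper does \emph{not} work with the Christoffel--Darboux formula and the two separate orthogonal functions $\varphi_N,\varphi_{N-1}$. Instead it uses the representation of the reproducing kernel as $(0,1)Y^{-1}(4Nw_2)Y(4Nw_1)\binom{1}{0}$ (equation (\ref{relation KN express as Y(x)})), unwinds the chain $Y\mapsto U\mapsto T\mapsto S\mapsto R$, and reduces everything to the single estimate
\[
S(w_2)^{-1}S(w_1)=I+O(w_1-w_2)
\]
uniformly in the complex strip. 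In that factored form the diagonal $e^{\pm ng}$ pieces and the lens factors combine \emph{exactly} to produce $e^{n\xi(w_2)-n\xi(w_1)}$ minus its reciprocal, from which the $\sin$ pops out after the one-line Taylor expansion of $\xi$. The whole point of the $Y$-matrix bookkeeping is that the cancellation of the exponentially large pieces is built in algebraically, once and for all.

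Your route via the Christoffel--Darboux formula hits a real difficulty that your write-up waves away. When you substitute two-term asymptotics $\varphi_N(4Nz)\sim A^+(z)e^{-N\xi(z)}+A^-(z)e^{N\xi(z)}$ and the analogue for $\varphi_{N-1}$, the numerator $\varphi_N(w_1)\varphi_{N-1}(w_2)-\varphi_N(w_2)\varphi_{N-1}(w_1)$ produces four exponential brackets. Two of them carry $e^{\pm N(\xi(w_1)+\xi(w_2))}$; for $0<H\le H_0$ the points $w_j\to x\Lambda(H)$ have $\Re\,\xi(x\Lambda(H))>0$, so $|e^{N(\xi_1+\xi_2)}|\sim e^{2N\Re\xi}$ grows exponentially in $N$. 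The accompanying coefficients are only of size $O(w_1-w_2)=O(1/N)$, which is nowhere near enough to kill exponential growth. Your sentence \emph{``the large amplitudes and the common factors enter identically in the two cross terms, and they cancel''} asserts this cancellation but does not prove it; as stated it is actually false (the coefficients of $e^{\pm N(\xi_1+\xi_2)}$ are nonzero Wronskian-type differences, not identically zero). Showing that these contributions do not survive is precisely the non-trivial content that the paper's $Y$-/$S$-matrix computation handles automatically, so your proposal leaves the hardest step unproved. A secondary gap: the asymptotics of $\varphi_N$, $\varphi_{N-1}$ that you want to invoke in the complex lens region $\overline{S_{\theta/2,\beta}}$ are not directly a stated theorem in \cite{Vanlessen} (Theorem 2.4 there, used in Lemma~\ref{lemma: the estimates of hN}, concerns $\Im z$ bounded away from $0$); extracting the required two-term expansion in the lens would again amount to redoing the $T\mapsto S$ bookkeeping, which is exactly what the paper short-circuits by writing the kernel through $Y$ and $S$ directly. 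The rest of your plan (reduction via (\ref{relation: KN and KsN}), Taylor-expanding $\xi$ using $\xi'(z)=-i\pi\hat\psi(z)$, and noting the need to choose $H_0(\theta)$ small so the arguments stay inside the analyticity region) is sound and matches the paper.
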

\begin{proof}
For any fixed $0<\theta<1/2$, taking $\delta=\theta/4 $ (see Figure
5 in the section  3.8, \cite {Vanlessen}) such that
\begin{equation*}
\lbrack \theta/2,1-\theta/2]\bigcap \partial U_{\delta }=\emptyset
,\,[\theta/2,1-\theta/2]\bigcap \partial \tilde{U}_{\delta
}=\emptyset .
\end{equation*}%
Here $\partial U_{\delta }$ and  $\partial \tilde{U}_{\delta }$
denote the boundary of the two disks $U_{\delta }$ and $
\tilde{U}_{\delta }$ around unity and zero respectively, depicted in
Figure 1 (See Figures 1 and 5, \cite{Vanlessen}). Hence there exists
a positive number $\beta (\frac{\theta}{2})$ such that this set
$\overline{S_{\frac{\theta}{2}},\beta (\frac{\theta}{2})}$ is the
subset of the set $A_{\delta}\bigcup B_{\delta}\bigcup
(\delta,1-\delta)$, described in Figure 1. Let
\begin{equation}
w_{1}=(x+\frac{u}{N\psi (x)})\Lambda (H),\,w_{2}=(x+\frac{v}{N\psi (x)}%
)\Lambda (H).  \label{w(1)w(2)}
\end{equation}
It is not difficult to check that for all $x\in \lbrack
\theta,1-\theta]$ and $|u|,\,|v|\leq A$, there exists $N_{0}$, when
$N\geq N_{0}$ , $\operatorname{\Re} w_{1},\operatorname{\Re}
w_{2}\in
\lbrack \theta/2,1-\theta/2]$. Furthermore, there exists a positive number $H_{0}(\theta)$ such that $0\leq \Im w_{1},\Im w_{2}\leq \beta(\frac{\theta}{2})$ for all $0\leq H\leq H_{0}(\theta)$ if $N$ is sufficiently large. Note that $%
w_{1},w_{2}\in A_{\delta}$. From a series of transformations $%
Y\mapsto U\mapsto T\mapsto S\mapsto R$ (See (3.4), (3.13), (3.36),
(3.51) and (3.99), \cite{Vanlessen}), we obtain
\begin{align}
& \tilde{K}_{N}(w_{1},w_{2},\frac{1}{4N})=4NK_{N}(4Nw_{1},\,4Nw_{2})
\label{relation KN express as Y(x)} \\
& =4N(4Nw_{1})^{\frac{\alpha }{2}}(4Nw_{2})^{\frac{\alpha }{2}}\frac{%
e^{-2Nw_{1}}e^{-2Nw_{2}}}{2i\pi 4N(w_{1}-w_{2})}%
(0,1)Y^{-1}(4Nw_{2})Y(4Nw_{1})%
\begin{pmatrix}
1 \\
0%
\end{pmatrix}
\notag \\
& =\frac{(4N)^{\alpha }w_{1}^{\alpha /2}w_{2}^{\alpha /2}e^{-2N(w_{1}+w_{2})}%
}{2i\pi (w_{1}-w_{2})}(0,1)(4N)^{\frac{\alpha }{2}\sigma _{3}}e^{(\frac{1}{2}%
nl-ng(w_{2}))\sigma _{3}}%
\begin{pmatrix}
1 & 0 \\
-w_{2}^{-\alpha }e^{-2n\xi\left( w_{2}\right) } & 1%
\end{pmatrix}
\notag \\
& \times S(w_{2})^{-1}S\left( w_{1}\right)
\begin{pmatrix}
1 & 0 \\
w_{1}^{-\alpha }e^{-2n\xi\left( w_{1}\right) } & 1%
\end{pmatrix}%
e^{\left( ng\left( w_{1}\right) -\frac{1}{2}nl\right) \sigma
_{3}}\left( 4N\right) ^{-\frac{\alpha }{2}\sigma _{3}}%
\begin{pmatrix}
1 \\
0%
\end{pmatrix}%
,  \notag
\end{align}%
where $\sigma _{3}$ denotes the third Pauli matrix. It follows from
Eq.(3.99), Eq.(3.56) and Theorem 3.32 in \cite{Vanlessen} that the
following relation
\begin{equation}
S(w_{2})^{-1}S\left( w_{1}\right) =I+O(\frac{1}{N})  \label{S(-1)S}
\end{equation}%
holds uniformly for all $w_{1},w_{2}\in \overline{S_{\theta/2,\beta
(\theta/2)}}$. Actually we obtain
\begin{equation*}
S(w_{2})^{-1}S\left( w_{1}\right) =I+O(w_{1}-w_{2}).
\end{equation*}%
Combining (\ref{relation KN express as Y(x)}) and (\ref{S(-1)S}), we
find
\begin{equation}
\tilde{K}_{N}(w_{1},w_{2},\frac{1}{4N})=\frac{w_{1}^{\alpha
/2}w_{2}^{\alpha /2}}{2i\pi (w_{1}-w_{2})}[w_{1}^{-\alpha
}e^{n\xi\left( w_{2}\right) -n\xi\left( w_{1}\right)
}-w_{1}^{-\alpha }e^{n\xi \left( w_{2}\right) -n\xi \left(
w_{1}\right) }].\label{2.38}
\end{equation}%
Here we have used Eq.(\ref{xi_n(z) and g (z)}). By (\ref{w(1)w(2)}),
the following asymptotic behavior
\begin{equation}
n\xi \left( w_{2}\right) -n\xi \left( w_{1}\right) =-N\pi i\int_{(x+%
\frac{u}{N\psi (x)})\Lambda (H)}^{(x+\frac{v}{N\psi (x)})\Lambda (H)}\hat{%
\psi}(y)dy=\pi i\frac{\hat{\psi}(x\Lambda (H))}{\psi (x)}\Lambda (H)(u-v)+O(%
\frac{1}{N})  \notag
\end{equation}%
holds uniformly when $x,u,v,H$ satisfy the assumptions of this
lemma. By (\ref{2.38}), we conclude the proof of this lemma.

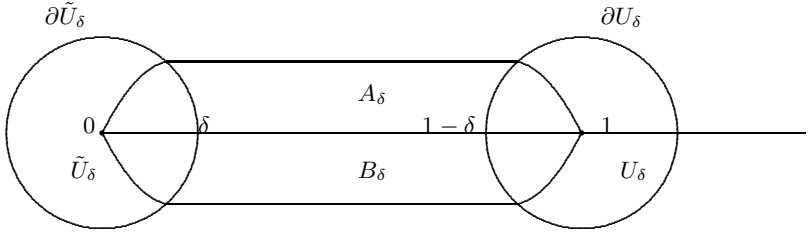
\begin{figure}[t]
\begin{center}
    \setlength{\unitlength}{0.85mm}
    \begin{picture}(130,41.5)(2.5,15)
        \put(20,35){\line(1,0){110}}
           \put(20,35){\bigcircle{30}}
          \put(17,35){\small $0$}
         \put(60,40){\small $A_\delta$}
           \put(60,28){\small $B_\delta$}
           \put(35,35){\small $\delta$}

            \put(20,35){\circle*{.9}}
            \qbezier(20,35)(25,45)(30,46.18)
            \qbezier(20,35)(25,25)(30,23.82)

             \put(15,28){\small $\tilde U_\delta$}

            \put(11,52){\small $\partial \tilde U_\delta$}
        \put(95,35){\bigcircle{30}}
          \put(98,35){\small $1$}
           \put(70,35){\small $1-\delta$}
            \put(101,28){\small $ U_\delta$}

            \put(95,35){\circle*{.9}}
            \qbezier(95,35)(90,45)(85,46.18)
            \qbezier(95,35)(90,25)(85,23.82)

            \put(98,52){\small $\partial U_\delta$}

              \put(30,46.18){\line(1,0){55}}

            \put(30,23.82){\line(1,0){55}}

    \end{picture}
    \caption{ $A_{\delta}$ and $B_{\delta}$ depend on the parameter $\delta$ and the fixed angel $\nu\in(2\pi/3,5\pi/3)$. Details about $\nu$ is described in the section 3.8 and Proposition 3.19 in \cite{Vanlessen}. }
    \label{figure: contour SigmaR}
\end{center}
\end{figure}

\end{proof}

Next, first we will obtain an upper bound about orthogonal
polynomial $h_{N}(z)$ in the complex plane. As a consequence, basing
on the integral representation Eq.(\ref{integral representation of
KNs}), we can derive the upper bound estimate of the reproducing
kernel $\tilde{K}_{N}(\cdot,\cdot,s)$ (See Lemma 8).
\begin{lemma}
\label{lemma: the estimates of hN} For every $\mu>0$, there exist
constants $C(\mu)$ and $\Omega(u)$ such that the two inequalities
\begin{equation}
|h_{N}(4Nz)|\leq
C(\mu)N^{-\frac{\alpha+1}{2}}\Omega(\mu)^{N}|z|^{N+\frac
{1}{2}} \label{hN upper}%
\end{equation}
and
\[
|h_{N-1}(4Nz)|\leq C(\mu)N^{-\frac{\alpha+1}{2}}\Omega(\mu)^{N-1}%
|z|^{N-\frac{1}{2}}%
\]
hold for every $N$ and every $z$ satisfying $\Re z\geq0,\,\Im
z\geq\mu$.
\end{lemma}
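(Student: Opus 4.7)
The plan is to combine the Riemann--Hilbert steepest-descent analysis of \cite{Vanlessen} for Laguerre orthogonal polynomials with an elementary estimate of the complex logarithmic potential. Set $D_\mu = \{z \in \mathbb{C} : \Re z \geq 0,\, \Im z \geq \mu\}$. Since every $z \in D_\mu$ satisfies $|z-y| \geq \mu$ for all $y \in [0,1]$, the point $4Nz$ lies strictly outside the rescaled spectrum $[0,1]$ of the Marchenko--Pastur law, so we are in the ``outer region'' of the RH analysis where the transformation $R$ is close to the identity.

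Unwinding the sequence $Y \to U \to T \to S \to R$ (the same chain already exploited in the proof of Lemma~\ref{lemma:KN sin kernal in the complex region}) at the point $4Nz$ produces an asymptotic formula of the shape
\[
h_N(4Nz) \;=\; N^{-(\alpha+1)/2}\,e^{Ng(z)}\,F(z)\bigl(1+O(1/N)\bigr),
\]
where $g(z) = \int_0^1 \log(z-y)\,\psi(y)\,dy$ and $F(z)$ is a fixed algebraic combination of powers of $z$ and $z-1$ arising from the global outside parametrix of \cite{Vanlessen}. The factor $N^{-(\alpha+1)/2}$ is extracted by Stirling from the leading coefficient $1/\sqrt{N!\,\Gamma(N+\alpha+1)}$ of $h_N$, while both $|F(z)|$ and the error are uniform on $D_\mu$.

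The decisive estimate is for $|e^{Ng(z)}|$. Using that $|z-y| \geq \mu$ and $|z-y| \leq |z|+1 \leq |z|(1+1/\mu)$ for all $z \in D_\mu$ (the latter because $|z| \geq \Im z \geq \mu$), one obtains
\[
\Re g(z) \;=\; \int_0^1 \log|z-y|\,\psi(y)\,dy \;\leq\; \log|z| + \log(1+1/\mu),
\]
whence $|e^{Ng(z)}| \leq \Omega(\mu)^N\,|z|^N$ with $\Omega(\mu) = 1+1/\mu$. The parametrix prefactor $F(z)$ is bounded by $C_1(\mu)\,|z|^{1/2}$ on $D_\mu$ since its singularities at $0$ and $1$ are excluded by the condition $\Im z \geq \mu$, and its growth at infinity is at worst $|z|^{1/2}$ coming from $(z-1)^{1/4}$-type factors. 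Multiplying the three pieces and absorbing $(1+O(1/N))$ into the constant yields the announced bound for $h_N$; the companion bound for $h_{N-1}$ follows from the same argument with the index shifted by one, the evaluation point $4Nz$ still lying in the outer region for the $(N-1)$-th orthonormal function.

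The main obstacle is verifying uniformity of Vanlessen's asymptotic on the \emph{unbounded} set $D_\mu$, whereas in \cite{Vanlessen} the leading-order formula is typically stated on compact subsets of the outer region. I would handle this by splitting $D_\mu$ into the pieces $\{|z|\leq R\}$ and $\{|z|>R\}$ for some large $R = R(\mu)$: on the compact piece the uniform estimates of \cite{Vanlessen} apply directly, while on $\{|z|>R\}$ one may bypass the RH machinery by estimating $h_N$ crudely from its monomial form, using $|h_N(4Nz)| \leq (4N|z|)^N/\sqrt{N!\,\Gamma(N+\alpha+1)}\cdot\bigl(1+O(1/|z|)\bigr)$, which by Stirling already exhibits the required $N^{-(\alpha+1)/2}\Omega(\mu)^N|z|^N$ shape with room to absorb the extra $|z|^{1/2}$ slack.
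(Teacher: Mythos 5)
Your plan is structurally close to the paper's: both start from the Riemann--Hilbert/steepest-descent asymptotics of \cite{Vanlessen} for $h_N(4Nz)$ in the region $\Re z\geq 0$, $\Im z\geq\mu$, then bound the exponentially large factor, then bound the outer parametrix. The divergence is in the technique for the exponential factor: you estimate $\Re g(z)=\int_0^1\log|z-y|\,\psi(y)\,dy$ directly from the pointwise inequality $\mu\leq|z-y|\leq|z|(1+1/\mu)$, whereas the paper writes out the explicit closed form of $g$, passes to $w=z^{1/2}$, and controls the Joukowski map $x(w)=w-\sqrt{w^2-1}$ on the exterior of an ellipse. Your potential estimate is cleaner if it could be plugged into a correct formula.

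However, the formula you plug it into is wrong, and this is a real gap. You assert
\[
h_N(4Nz)=N^{-(\alpha+1)/2}\,e^{Ng(z)}\,F(z)\bigl(1+O(1/N)\bigr)
\]
with $F$ a fixed $N$-independent algebraic combination of powers of $z$ and $z-1$, and conclude $\Omega(\mu)=1+1/\mu$. But the orthonormal $h_N$ carries the normalizing constant arising from the Lagrange multiplier $l=-2-4\ln2$: the RH chain produces $e^{N\xi(z)}$ with $2\xi=2g-2z-l$, and after recombining with the weight factor $e^{2Nz}$ one is left with an extra $e^{-Nl/2}=(4e)^N$ that cannot be absorbed into an $N$-independent $F$. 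This is visible from elementary considerations: as $|z|\to\infty$, $h_N(4Nz)\sim\frac{(4N)^N z^N}{\sqrt{\Gamma(N+1)\Gamma(N+\alpha+1)}}\sim\frac{1}{\sqrt{2\pi}}N^{-(\alpha+1)/2}(4e)^N z^N$, while $e^{Ng(z)}\sim z^N$; so $F(z)$ in your formula would have to carry the factor $(4e)^N$, contradicting its $N$-independence. Concretely, $\Omega(\mu)=1+1/\mu$ is false whenever $1+1/\mu<4e$, i.e.\ for all $\mu\gtrsim 0.10$. The fix is straightforward --- carry the $e^{-Nl/2}$ factor explicitly and set $\Omega(\mu)=4e(1+1/\mu)$ --- but as written your bound is incorrect.

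Two smaller remarks. First, the split of $D_\mu$ into $\{|z|\leq R\}$ and $\{|z|>R\}$ is unnecessary: as invoked in the paper, Theorem 2.4 of \cite{Vanlessen} already provides a \emph{uniform} error on the whole unbounded region $\Re z\geq0,\ \Im z\geq\mu$, which is exactly why the paper can apply it in one stroke (and indeed your ``crude monomial'' estimate is just the $|z|\to\infty$ asymptotics restated, so it does not add content beyond what the $g$-function asymptotics already encodes). Second, the $h_{N-1}$ case needs more care than ``shift the index by one'': the asymptotics for $h_{N-1}$ is naturally stated at argument $4(N-1)z'$, not $4Nz$, so one must rescale the evaluation point (as the paper does via $z\mapsto\frac{N-1}{N}z$ --- or rather its reciprocal --- and verify that the rescaled point remains in the required region), and track that the resulting constant degrades only by a bounded factor.
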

\begin{proof}
Theorem 2.4 in \cite{Vanlessen} shows that for $\Re z\geq0$ and $\Im
z\geq\mu $,
\begin{align}
h_{N}(4Nz)  &  =(4Nz)^{-\frac{\alpha}{2}}e^{2Nz}\sqrt{\frac{1}{2N\pi}%
}\nonumber\\
&
\frac{(\Psi(z))^{\frac{\alpha+1}{2}}}{2z^{1/4}(z-1)^{1/4}}\exp(-\pi
iN\int_{1}^{z}\hat{\psi}(s)ds)(1+O(\frac{1}{N})) \label{h(4Nz)}%
\end{align}
where the error term is uniform, and
\begin{equation}
\Psi(z)=2z-1+2z^{1/2}(z-1)^{1/2},\text{ \ for }z\in
\mathbb{C}\backslash\lbrack
0,1]\text{.} \label{Psi(z)}%
\end{equation}
Note that the function $g(z)$ satisfies%
\begin{equation}
g^{\prime}(z)=2(1-\frac{(z-1)^{1/2}}{z^{1/2}}) \label{g'(z)}%
\end{equation}
with the initial condition (See (3.26), (3.28) and (3.35),
\cite{Vanlessen})
\begin{equation}
e^{Ng\left(  z\right)  }=z^{N}+O\left(  z^{N-1}\right).
\label{initial condition}%
\end{equation}
A direct computation shows that
\begin{equation}
g(z)=2z-1-2z^{1/2}(z-1)^{1/2}-2\ln(z^{1/2}-\left(  z-1\right)
^{1/2})-2\ln2\text{.} \label{concrete g_{N}(z)}%
\end{equation}
Note that $l=-2-4\ln2$ (Remark 2.3, \cite{Vanlessen}). It follows
from (\ref{xi_n(z)}) and (\ref{xi_n(z) and g (z)}) that
\begin{equation}
h_{N}(4Nz)=(4Nz)^{-\frac{\alpha}{2}}\sqrt{\frac{1}{2N\pi}}\frac{(\Psi
(z))^{\frac{\alpha+1}{2}}}{2z^{1/4}(z-1)^{1/4}}\frac{e^{2N(z-z^{1/2}\left(
z-1\right)  ^{1/2})}}{(z^{1/2}-(z-1)^{1/2})^{2N}}
\label{h_{N}substitute by g_{N}}
\end{equation}
For a given analytic branch,
$(z^{1/2})^{2}=z,\,(z-1)=((z-1)^{1/2})^{2}$. Set $w=z^{1/2}$, which
is a univalent analytic function in
$\mathbb{C}\backslash(-\infty,1]$. Let $x(w)$ be the root of this
equation: $x+\frac{1}{x}=2w$ satisfying $|x|<1$. Hence
\begin{equation}
x(w)=w-\sqrt{w^{2}-1},
\end{equation}
where the function $\sqrt{w^{2}-1}$ is a univalent analytic function
in
$\mathbb{C}\backslash(-\infty,1]$ satisfying the relation $(\sqrt{w^{2}%
-1})^{2}>0$ for $w>1$. For every $0<r<1$, the equation $|x(w)|=r$
defines the ellipse
\begin{equation}
\frac{(\operatorname{Re}w)^{2}}{(\frac{1}{r}+r)^{2}}+\frac{(\operatorname{Im}%
w)^{2}}{(\frac{1}{r}-r)^{2}}=\frac{1}{4}, \label{ellipse}%
\end{equation}
and $|x(w)|\leq r$ denotes the outside of this ellipse $E_{r}.$ Thus
for $w\in
E_{r}$,%
\begin{equation}
\label{2.47} |x(w)|=|w-\sqrt{w^{2}-1}|\leq r<1,
\end{equation}
and so
\begin{equation}
\operatorname{Re}(w-\sqrt{w^{2}-1})^{2}\leq r^{2}\text{.} \label{rexx<rr}%
\end{equation}
By Eq.(\ref{ellipse}), for $w\in E_{r}$ we get
\begin{equation}
\min_{|x(w)|=r}|w|=\frac{1}{2}(\frac{1}{r}-r)\text{.}
\label{the minimuu of |w|}%
\end{equation}
Hence,
\begin{align}
\frac{1}{|w-\sqrt{w^{2}-1}|}  &  =\frac{1}{|x(w)|}=|2w-x(w)|\\
&  <1+2|w|=|w|(2+\frac{1}{|w|})\leq|w|(2+\frac{2}{(\frac{1}{r}-r)})\text{.}%
\end{align}
Further, we have
\begin{align}
|w^{2}-1|  &  =|w-1||w+1|=|\frac{1}{2}(x+\frac{1}{x})-1||\frac{1}{2}%
(x+\frac{1}{x})+1|\nonumber\\
&
=|\frac{1}{2x}|^{2}|x-1|^{2}|x+1|^{2}\geq\frac{1}{4r^{2}}(1-r)^{2}.
\label{lower bound of |w2-1|}%
\end{align}
Apply (\ref{2.47})--(\ref{lower bound of |w2-1|}) to
(\ref{h_{N}substitute by g_{N}}), we find that for every fixed
$0<r<1$, the function $|h_{N}(4Nz)|$ could be controlled by
\begin{align}
|h_{N}(4Nz)|  &
=|(4Nz)^{-\frac{\alpha}{2}}|\sqrt{\frac{1}{2N\pi}}\frac
{1}{2|z^{1/4}(z-1)^{1/4|}}\frac{|e^{N(z^{1/2}-\left(  z-1\right)  ^{1/2})^{2}%
}|e^{N}}{|(z^{1/2}-(z-1)^{1/2})|^{2N+\alpha+1}}\nonumber\\
&  =|(4Nw^{2})^{-\frac{\alpha}{2}}|\sqrt{\frac{1}{2N\pi}}\frac{1}%
{2|w|^{1/2}|w^{2}-1|^{1/4}}\frac{|e^{N(x(w))^{2}}|e^{N}}{|x(w)|^{2N+\alpha+1}%
}\nonumber\\
&  \leq
C_{1}(r)N^{-\frac{\alpha+1}{2}}L(r)^{N}|w|^{2N+1}=C_{1}(r)N^{-\frac
{\alpha+1}{2}}L(r)^{N}|z|^{N+\frac{1}{2}}\text{.} \label{hN upper bound}%
\end{align}
Let $z_{N}=\frac{N-1}{N}z$. Note that by the convexity of
$C\backslash E_{r}$, $z_{N}\in E_{r}$ if $z\in E_{r}$. Thus we have
\begin{align}
|h_{N-1}(4Nz)|  &  =|h_{N-1}(4(N-1)z_{N})|\leq
C_{1}(r)(N-1)^{-\frac{\alpha
+1}{2}}L(r)^{N-1}|z_{N}|^{N-\frac{1}{2}}\nonumber\\
&  \leq C_{2}(r)N^{-\frac{\alpha+1}{2}}L(r)^{N}|z|^{N-\frac{1}{2}}.
\label{hN-1 upper bound}%
\end{align}
Here we have used the following fact that%
\[
\lim_{N\rightarrow\infty}(\frac{N-1}{N})^{N+\frac{1}{2}}=e^{-1}.
\]
According to the assumption about $z$ of this lemma, it follows from
$w=|z|^{1/2}e^{i\frac{\arg z}{2}}$ that $\Re w\geq0,\,\Im w\geq0$.
Note that $w^{2}=(\Re w)^{2}-(\Im w)^{2}+2i\Re w\Im w$. The
assumption $\Re z\geq0$ implies that $\Re w\geq\Im w$. For every
$\mu>0$, let
\[
r=\sqrt{\frac{\mu+2}{2}}-\sqrt{\frac{\mu}{2}}\text{,}%
\]
then $0<r<1$ is a solution of the equation
$(\frac{1}{r}-r)^{2}=2\mu$. Hence $\operatorname{Im}z\geq\mu$
implies that
\[
2(\Re w)^{2}\geq2\Re w\Im w=\Im w^{2}=\Im z\geq\frac{1}{2}(\frac{1}{r}%
-r)^{2}\text{,}%
\]
so $\Re w\geq\frac{1}{2}(\frac{1}{r}-r)$, which establishes this
fact that $w\in E_{r}$. It follows from Eq.(\ref{hN upper bound})
and Eq.(\ref{hN-1 upper bound}) that for every $\mu>0$, there exist
some constants
$C(\mu)$ and $\Omega(\mu)$ such that%
\begin{align}
|h_{N}(4Nz)|  &  \leq C(\mu)N^{-\frac{\alpha+1}{2}}\Omega(\mu)^{N}%
|z|^{N+\frac{1}{2}}\text{,}\label{hN about mu}\\
|h_{N-1}(4Nz)|  &  \leq C(\mu)N^{-\frac{\alpha+1}{2}}\Omega(\mu)^{N-1}%
|z|^{N-\frac{1}{2}}\text{.} \label{hN-1 about mu}%
\end{align}\end{proof}

\begin{lemma}
\label{Lemma: KN(z1,z2)} For every $\mu>0$, there exist $C(\mu)$ and
$\Omega(\mu)$ such that
\begin{equation}
\frac{1}{N}|\tilde{K}_{N}(z_{1},z_{2},\frac{1}{4N})|\leq
C(\mu)\Omega (\mu)^{2N-2}\Gamma(2P+1)|\Im z_{1}|^{P+\frac{1}{2}}|\Im
z_{1}|^{P+\frac{1}{2}}
\label{upper bound of KN(z1,z2,1/4N)}%
\end{equation}
holds for all complex numbers $z_{i}$, $i=1,\,2$ satisfying $\Re
z_{i}>0$,
$\Im z_{i}\geq\mu$, and $N\geq\frac{1}{4\mu}$. Here $P=N+[\frac{\alpha-1}%
{2}]+1$, $[x]$ denotes the integer part of a real number $x$.
\end{lemma}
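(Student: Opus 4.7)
The approach is to start from the integral representation (\ref{integral representation of KNs}) of $K_N$, rescale via (\ref{relation: KN and KsN}) with $s=1/(4N)$, and then invoke the pointwise Laguerre bounds just established in Lemma~\ref{lemma: the estimates of hN}. After the substitution $z=4Nt$, one obtains
\[
\tilde K_N\!\Bigl(z_1,z_2,\tfrac{1}{4N}\Bigr)=8N^{2}\sqrt{N(N+\alpha)}\int_0^\infty\!\Bigl[S_1(4N(z_1+t))S_2(4N(z_2+t))+\text{sym.}\Bigr]\,dt,
\]
and the hypothesis $\Re z_i\ge 0$, $\Im z_i\ge\mu$ ensures that the shifted point $w=z_i+t$ satisfies $\Re w\ge 0$, $\Im w\ge\mu$ along the entire integration contour. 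Using $\varphi_i(x)=x^{\alpha/2}e^{-x/2}h_i(x)$ together with Lemma~\ref{lemma: the estimates of hN} gives the pointwise bound $|\varphi_k(4Nw)|\le C(\mu)N^{-1/2}\Omega(\mu)^{k}|w|^{k+\alpha/2+1/2}e^{-2N\Re w}$ for $k\in\{N-1,N\}$, and then (\ref{S1})--(\ref{S2}) yield $|S_j(4Nw)|\le C(\mu)\Omega(\mu)^{N-1}|w|^{N+\alpha/2-1/2}e^{-2N\Re w}$ for $j=1,2$ after absorbing powers of $N$ into $C(\mu)$.

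Inserting these bounds into the integral representation (with $\Re w=\Re z_i+t$) reduces the task to estimating
\[
e^{-2N(\Re z_1+\Re z_2)}\int_0^\infty\! |z_1+t|^{B}|z_2+t|^{B}\,e^{-4Nt}\,dt
\]
for a suitable exponent $B\le P+\tfrac12$. To convert this into the advertised $|\Im z_i|$-based bound, I would write $|z_i+t|^{2B}=((\Re z_i+t)^2+(\Im z_i)^2)^{B}$ and expand binomially (rounding $B$ upward to an integer and absorbing the slack into the constants); each resulting monomial splits further in $(\Re z_i,t)$, and the $t$-integrals produce moments $\Gamma(c+1)/(4N)^{c+1}$, while the elementary inequality $y^{a}e^{-2Ny}\le \Gamma(a+1)/(2N)^{a}$, valid for all $y\ge 0$ and $a\ge 0$, converts the remaining $(\Re z_i)^{a}e^{-2N\Re z_i}$ factors into Gamma-function constants. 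Collecting the finitely many terms, absorbing the combinatorial coefficients into $C(\mu)$ and $\Omega(\mu)^{2N-2}$, and keeping only the external $|\Im z_i|$-factors yields the claimed bound $C(\mu)\Omega(\mu)^{2N-2}\Gamma(2P+1)|\Im z_1|^{P+1/2}|\Im z_2|^{P+1/2}$ (the second factor $|\Im z_1|^{P+1/2}$ in the statement being interpreted as $|\Im z_2|^{P+1/2}$).

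The main obstacle is precisely this last step: verifying that after the binomial decomposition, every mixed moment combined with $e^{-2N\Re z_i}$ is dominated by a single Gamma factor of order $\Gamma(2P+1)$, uniformly in $\Re z_i\ge 0$, so that only the pure $|\Im z_i|^{P+1/2}$ pieces persist outside the integration. The hypothesis $N\ge 1/(4\mu)$ in tandem with $\Im z_i\ge\mu$ is what enables these uniform comparisons across the full parameter range, and it is crucial here to pick $P$ slightly larger than the natural exponent $N+\alpha/2+1/2$ so that the binomial expansion involves only integer powers.
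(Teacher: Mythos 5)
Your opening steps match the paper's: you invoke the integral representation (3.25), rescale, and use Lemma~\ref{lemma: the estimates of hN} to control $S_1,S_2$, arriving at essentially the same pointwise bound $|S_j(4Nw)|\lesssim \Omega(\mu)^{N-1}|w|^{N+(\alpha-1)/2}e^{-2N\Re w}$ that the paper derives. From that point on, however, your route diverges from the paper's, and the key step is one you yourself flag as an unresolved ``main obstacle,'' so the proposal as written has a genuine gap.

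The paper's decisive move, which you do not make, is to apply the Cauchy--Schwarz inequality to the mixed integral
\[
\int_0^\infty e^{-2N\Re(z_1+\theta)}e^{-2N\Re(z_2+\theta)}\,|z_1+\theta|^{N+\frac{\alpha-1}{2}}\,|z_2+\theta|^{N+\frac{\alpha-1}{2}}\,d\theta,
\]
splitting it into a product of two \emph{single-variable} integrals $\bigl(\int_0^\infty e^{-4N\Re(z_j+\theta)}|z_j+\theta|^{2(N+\frac{\alpha-1}{2})}\,d\theta\bigr)^{1/2}$. This decoupling is what makes the rest of the proof clean: in each single integral one shifts $\eta=\Re z_j+\theta$, enlarges the range to $[0,\infty)$, and then rescales $\eta\mapsto(\Im z_j)\eta$; the Jacobian and the homogeneity of $(\eta^2+(\Im z_j)^2)^P$ pull out \emph{exactly} the factor $(\Im z_j)^{2P+1}$ with no expansion at all, the remaining integral is bounded via $\Im z_j\geq\mu$ and $(\eta^2+1)^P\leq(\eta+1)^{2P}$, and the final incomplete-Gamma estimate uses $4N\mu\geq 1$. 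Your proposal instead tries to estimate the mixed integral directly by a binomial expansion of $|z_i+t|^{2B}=((\Re z_i+t)^2+(\Im z_i)^2)^B$; this is where the difficulties you anticipate actually occur. The expansion produces monomials with \emph{varying} powers $(\Im z_i)^{c_i}$, $c_i\in\{0,\dots,B\}$, not the single uniform power $(\Im z_i)^{P+1/2}$ in the statement; to unify them you would have to invoke $\Im z_i\geq\mu$ to write $(\Im z_i)^{c_i}\leq \mu^{c_i-(P+1/2)}(\Im z_i)^{P+1/2}$, which injects factors as large as $\mu^{-(P+1/2)}\sim\mu^{-N}$ that then have to be re-absorbed into $\Omega(\mu)^{2N}$; and you would then have to control the resulting sum of products of Gamma factors $\Gamma(a_1+1)\Gamma(a_2+1)\Gamma(b+1)$ via $\Gamma(a+1)\Gamma(b+1)\leq\Gamma(a+b+1)$ and dominate the combinatorial multiplicity. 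None of this is impossible, but it is a large amount of delicate bookkeeping that you have not carried out, and the $+\tfrac12$ in the exponent $P+\tfrac12$ — which in the paper arises transparently from the Jacobian of the scaling substitution — has no natural home in a binomial-expansion argument. So the proposal correctly identifies the starting material and correctly locates the hard step, but it does not close it; the Cauchy--Schwarz decoupling followed by the single-variable rescaling is the missing idea that makes the argument go through cleanly.
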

\begin{proof}
By (\ref{integral representation of KNs}), one finds that
\begin{align}
& \frac{1}{N}\tilde{K}_{N}(z_{1},z_{2},\frac{1}{4N})=4K_{N}(4Nz_{1}%
,4Nz_{2})=4\frac{\sqrt{N(N+\alpha)}}{2}4N\nonumber\\
&  \times\int_{0}^{+\infty}S_{1}(4N(z_{1}+\theta))S_{2}(4N(z_{2}%
+\theta))+S_{1}(4N(z_{2}+\theta))S_{2}(4N(z_{1}+\theta))d\theta\text{.}
\label{the aim of KN(4N)}
\end{align}
It follows from Eq.(\ref{varphi}),  Eq.(\ref{S1}) and Lemma
\ref{lemma: the estimates of hN} that
\begin{align*}
&
|S_{1}(4N(z_{1}+\theta))|\leq\sqrt{N}\frac{|\varphi_{N}(4N(z_{1}+\theta
))|}{|4N(z_{1}+\theta)|}+\sqrt{N+\alpha}\frac{|\varphi_{N-1}(4N(z_{1}%
+\theta))|}{|4N(z_{1}+\theta)|}\\
&  \leq\sqrt{N}\frac{|4N(z_{1}+\theta)|^{\alpha/2}e^{-2N\operatorname{Re}%
(z_{1}+\theta)}C(\mu)N^{-\frac{\alpha+1}{2}}\Omega(\mu)^{N}|z_{1}%
+\theta|^{N+\frac{1}{2}}}{|4N(z_{1}+\theta)|}\\
&  +\sqrt{N+\alpha}\frac{|4N(z_{1}+\theta)|^{\alpha/2}e^{-2N\operatorname{Re}%
(z_{1}+\theta)}C(\mu)N^{-\frac{\alpha+1}{2}}\Omega(\mu)^{N-1}|z_{1}%
+\theta|^{N-\frac{1}{2}}}{|4N(z_{1}+\theta)|}\text{.}%
\end{align*}
Here we have used this fact that $\theta\geq0$, $\ \operatorname{Re}%
(z_{1}+\theta)>0$ and $\operatorname{Im}(z_{1}+\theta)\geq\mu$. A
direct calculation tells us that
\begin{align*}
&  |S_{1}(4N(z_{1}+\theta))|\leq4^{\alpha/2-1}N^{-1}C(\mu)\Omega(\mu
)^{N}|z_{1}+\theta|^{N+\frac{\alpha-1}{2}}e^{-2N\operatorname{Re}(z_{1}%
+\theta)}\\
&  +4^{\alpha/2-1}N^{-3/2}\sqrt{N+\alpha}C(\mu)\Omega(\mu)^{N-1}|z_{1}%
+\theta|^{N+\frac{\alpha-1}{2}-1}e^{-2N\operatorname{Re}(z_{1}+\theta)}\\
&  =4^{\alpha/2-1}N^{-1}C(\mu)\Omega(\mu)^{N-1}e^{-2N\operatorname{Re}%
(z_{1}+\theta)}|z_{1}+\theta|^{N+\frac{\alpha-1}{2}}(\Omega(\mu)+\frac
{1}{|z_{1}+\theta|}\sqrt{\frac{N+\alpha}{N}})\\
&  \leq N^{-1}C(\mu)\Omega(\mu)^{N-1}e^{-2N\operatorname{Re}(z_{1}%
+\theta)}|z_{1}+\theta|^{N+\frac{\alpha-1}{2}}\text{.}%
\end{align*}
Note that $|z_{1}+\theta|^{-1}\leq1/\mu$. Analogously, we have
\begin{equation}
|S_{2}(4N(z_{2}+\theta))|\leq N^{-1}C(\mu)\Omega(\mu)^{N-1}%
e^{-2N\operatorname{Re}(z_{2}+\theta)}|z_{2}+\theta|^{N+\frac{\alpha-1}{2}%
}\text{.}%
\end{equation}
Applying the two estimates of $S_{1}$ and $S_{2}$ to (\ref{the aim of KN(4N)}%
), we find that%
\begin{align}
&  \frac{1}{N}|\tilde{K}_{N}(z_{1},z_{2},\frac{1}{4N})|\leq%
C(\mu)\Omega(\mu)^{2N-2}\nonumber\\
&  \int_{0}^{+\infty}e^{-2N\operatorname{Re}(z_{1}+\theta)}%
e^{-2N\operatorname{Re}(z_{2}+\theta)}|z_{1}+\theta|^{N+\frac{\alpha-1}{2}%
}|z_{2}+\theta|^{N+\frac{\alpha-1}{2}}d\theta\nonumber\\
&  \leq C(\mu)\Omega(\mu)^{2N-2}\left( \int
_{0}^{+\infty}e^{-4N\operatorname{Re}(z_{1}+\theta)}|z_{1}+\theta
|^{2(N+\frac{\alpha-1}{2})}d\theta\right)  ^{\frac{1}{2}}%
\label{KN controlled by integral}\\
&  \times\left(
\int_{0}^{+\infty}e^{-4N\operatorname{Re}(z_{2}+\theta
)}|z_{2}+\theta|^{2(N+\frac{\alpha-1}{2})}d\theta\right)
^{\frac{1}{2}}.\nonumber
\end{align}
Let
$\{\frac{\alpha-1}{2}\}=\frac{\alpha-1}{2}-[\frac{\alpha-1}{2}]$.
For every complex number $z$ satisfying $\operatorname{Re}z\geq0$,
$\operatorname{Im}z\geq\mu$, we obtain
\begin{align}
\int_{0}^{+\infty} e^{-4N\operatorname{Re}(z+\theta)}  &  |z+\theta
|^{2(N+\frac{\alpha-1}{2})}d\theta\leq\mu^{2(\{\frac{\alpha-1}{2}\}-1)}\int
_{0}^{+\infty}e^{-4N\operatorname{Re}(z+\theta)}|z+\theta|^{2P}d\theta
\nonumber\\
&
=\mu^{2(\{\frac{\alpha-1}{2}\}-1)}\int_{\operatorname{Re}z}^{+\infty
}e^{-4N\eta}|\eta+i\operatorname{Im}z|^{2P}d\eta\nonumber\\
&
=\mu^{2(\{\frac{\alpha-1}{2}\}-1)}\int_{\operatorname{Re}z}^{+\infty
}e^{-4N\eta}|\eta^{2} +(\operatorname{Im}z)^{2}|^{P}d\eta\nonumber\\
&
\leq\mu^{2(\{\frac{\alpha-1}{2}\}-1)}\int_{0}^{+\infty}e^{-4N\eta}|\eta
^{2}+(\operatorname{Im}z)^{2}|^{P}d\eta\nonumber\\
&
=\mu^{2(\{\frac{\alpha-1}{2}\}-1)}(\operatorname{Im}z)^{2P+1}\int_{0}
^{+\infty}e^{-4N\eta\operatorname{Im}z}|\eta^{2}+1|^{P}d\eta\nonumber\\
&
\leq\mu^{2(\{\frac{\alpha-1}{2}\}-1)}(\operatorname{Im}z)^{2P+1}\int
_{0}^{+\infty}e^{-4N\eta\mu}|\eta^{2}+1|^{P}d\eta\nonumber\\
&
\leq\mu^{2(\{\frac{\alpha-1}{2}\}-1)}(\operatorname{Im}z)^{2P+1}\int
_{0}^{+\infty}e^{-4N\eta\mu}|\eta+1|^{2P}d\eta\nonumber\\
&
=\mu^{2(\{\frac{\alpha-1}{2}\}-1)}(\operatorname{Im}z)^{2P+1}e^{4N\mu}
(\frac{1}{4N\mu})^{2P+1}\int_{4N\mu}^{+\infty}e^{-y}y^{2P}dy\text{.} \label{7}%
\end{align}
Here we have used the inequality: $\eta^{2}+1\leq(\eta+1)^{2}$ for
$\eta\geq 0$. We also notice that (P.17, \cite{feller})
\[
\frac{1}{\Gamma(m+1)}\int_{c}^{+\infty}x^{m}e^{-x}dx=e^{-c}(1+\frac{c}
{1!}+\frac{c^{2}}{2!}+\cdots+\frac{c^{m}}{m!})\text{, for }m\in
N\text{,
}c>0\text{.}%
\]
Then we have
\begin{align}
&  e^{4N\mu}(\frac{1}{4N\mu})^{2P+1}\int_{4N\mu}^{+\infty}e^{-y}%
y^{2P}dy=e^{4N\mu}(\frac{1}{4N\mu})^{2P+1}\Gamma(2P+1)\int_{4N\mu}^{+\infty
}\frac{e^{-y}y^{2P}}{\Gamma(2P+1)}dy\nonumber\\
&
=e^{4N\mu}(\frac{1}{4N\mu})^{2P+1}\Gamma(2P+1)\left(e^{-4N\mu}(1+\frac{4N\mu
}{1!}+\frac{(4N\mu)^{2}}{2!}+\cdots+\frac{(4N\mu)^{2P}}{(2P)!})\right)\nonumber\\
&  \leq\frac{1}{4N\mu}\Gamma(2P+1)e\text{.} \label{uB of exp-y}%
\end{align}
Here we have used the inequality: $4N\mu\geq1$. It follows from
(\ref{KN controlled by integral}), (\ref{7}) and (\ref{uB of exp-y})
that
\begin{align}
&\frac{1}{N}|\tilde{K}_{N}(z_{1},z_{2},\frac{1}{4N})|\leq\notag\\
&\frac{1}{\sqrt{N}
}C(\mu)\Omega(\mu)^{2N-2}\mu^{\{\frac{\alpha-1}{2}\}-1}\frac{e}
{4N\mu}\Gamma(2P+1)(\operatorname{Im}z_{1})^{P+\frac{1}{2}}(\operatorname{Im}
z_{2})^{P+\frac{1}{2}} \text{.}\nonumber
\end{align}
Hence we complete the proof of this lemma.
\end{proof}

\begin{corollary}
\label{corollary} Let $R_{n}^{LUE,s}$ be n-point correlation
function of the LUE, defined by (\ref{correlation function of LUE}),
for all real $u,\,t_{i}$, $i=1,\ldots,n$, if $N$ is sufficiently
large, the following inequality
\begin{align*}
&  \left\vert \frac{1}{N^{n}}R_{n}^{LUE,\frac{1}{4N}}\left(  (u+\frac{t_{1}%
}{N\psi(u)})\Lambda(H),\dots,(u+\frac{t_{n}}{N\psi(u)})\Lambda(H)\right)
\right\vert \\
&  \leq n!C(\theta)^{n}\Omega(\theta)^{(2N-2)n}\Gamma(2P+1)^{n}H^{(2P+1)n}%
(1-\frac{\theta}{2})^{(2P+1)n}%
\end{align*}
holds in all $u\in\lbrack\theta,1-\theta]$ , $|t_{i}|\leq A$ and
$H\geq H_{0}(\theta)$. Here $0<\theta<1$, $A>0$ and $H_{0}(\theta)$
as in Lemma \ref{lemma:KN sin kernal in the complex region}.
\end{corollary}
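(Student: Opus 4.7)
The plan is to combine the determinantal representation of $R_n^{LUE, 1/(4N)}$ with the pointwise upper bound on $\tilde{K}_N$ established in Lemma \ref{Lemma: KN(z1,z2)}, after expanding the determinant via Leibniz's formula. The hypothesis $H \geq H_0(\theta)$ (as opposed to $H \leq H_0(\theta)$ in Lemma \ref{lemma:KN sin kernal in the complex region}) serves precisely to push the arguments far enough from the real axis that Lemma \ref{Lemma: KN(z1,z2)} becomes applicable with a $\mu$ depending only on $\theta$.

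Setting $w_k := (u+t_k/(N\psi(u)))\,\Lambda(H)$, I would first invoke (\ref{Rn of LUE weight s expressed as KN}) to write $R_n^{LUE, 1/(4N)}(w_1,\ldots,w_n) = \det(\tilde{K}_N(w_i, w_j, 1/(4N)))_{i,j=1}^n$ and bound it by the Leibniz sum
\begin{equation*}
|R_n^{LUE, 1/(4N)}(w_1,\ldots, w_n)| \leq \sum_{\sigma \in S_n} \prod_{i=1}^n |\tilde{K}_N(w_i, w_{\sigma(i)}, 1/(4N))|,
\end{equation*}
which has $n!$ summands.

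Next I would verify the hypotheses of Lemma \ref{Lemma: KN(z1,z2)} for each factor. A direct computation gives $\Re w_k = u + t_k/(N\psi(u))$ and $\Im w_k = H(u + t_k/(N\psi(u)))$, so for $u \in [\theta, 1-\theta]$, $|t_k| \leq A$, $H \geq H_0(\theta)$, and $N$ large, $\Re w_k \in [\theta/2, 1-\theta/2]$ and $\Im w_k \in [H\theta/2, H(1-\theta/2)]$. Fixing $\mu = H_0(\theta)\theta/2$, the hypotheses $\Re w_k > 0$, $\Im w_k \geq \mu$, $N \geq 1/(4\mu)$ are all satisfied, and the constants $C(\mu)$ and $\Omega(\mu)$ can be renamed $C(\theta)$ and $\Omega(\theta)$. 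Applying Lemma \ref{Lemma: KN(z1,z2)} termwise and noting that in the product for a fixed permutation every index $k$ appears twice (once as $i$, once as $\sigma(j)$), the imaginary parts combine as
\begin{equation*}
\prod_{i=1}^n (\Im w_i)^{P+\frac{1}{2}}(\Im w_{\sigma(i)})^{P+\frac{1}{2}} = \prod_{k=1}^n (\Im w_k)^{2P+1} \leq \bigl[H(1-\tfrac{\theta}{2})\bigr]^{(2P+1)n}.
\end{equation*}
Summing over the $n!$ permutations and dividing by $N^n$ then yields the stated inequality.

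The main obstacle is purely organizational rather than analytic: the constant $C(\mu)$ of Lemma \ref{Lemma: KN(z1,z2)} blows up as $\mu \downarrow 0$, so one must carefully invoke the lower bound $H \geq H_0(\theta)$ together with $u \geq \theta$ to secure a uniform $\mu = H_0(\theta)\theta/2$ that depends only on $\theta$. Once this observation is in place, the remainder is a routine combination of Leibniz's formula and the entrywise estimate.
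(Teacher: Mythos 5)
Your proposal is correct and follows essentially the same route as the paper: both expand the determinant in (\ref{Rn of LUE weight s expressed as KN}) to get the $n!$ factor, check that the assumptions $u\in[\theta,1-\theta]$, $|t_i|\leq A$, $H\geq H_0(\theta)$ place the points $w_k$ in the half-plane $\Re z>0$, $\Im z\geq\mu$ with $\mu=H_0(\theta)\theta/2$, and then apply Lemma \ref{Lemma: KN(z1,z2)} entrywise. The only cosmetic difference is that you track the pairing of exponents across a fixed permutation (obtaining $\prod_k(\Im w_k)^{2P+1}$) whereas the paper simply bounds each entry by its maximum; both give exactly the stated constant.
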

\begin{proof}
For any $u\in\lbrack\theta,1-\theta]$, there exists $N^{\prime}$
such that $|t_{i}/N\psi(u)|\leq\theta/2$ for all $N>N^{\prime}$.
Hence we have
\begin{align*}
\Re((u+\frac{t_{i}}{N\psi(u)})\Lambda(H))  &  >\theta/2>0
\end{align*}
and
\begin{align*}
\Im((u+\frac{t_{i}}{N\psi(u)})\Lambda(H))  &  \geq H\frac{\theta
}{2}\geq H_{0}(\theta)\frac{\theta}{2}\text{.}%
\end{align*}
It follows from Lemma \ref{Lemma: KN(z1,z2)} that
\begin{align*}
&
\frac{1}{N}|\tilde{K}_{N}((u+\frac{t_{i}}{N\psi(u)})\Lambda(H),(u+\frac
{t_{j}}{N\psi(u)})\Lambda(H),\frac{1}{4N})|\\
&  \leq
C(\theta)\Omega(\theta)^{2N-2}\Gamma(2P+1)|H(u+\frac{t_{i}}{N\psi
(u)})|^{P+\frac{1}{2}}|H(u+\frac{t_{j}}{N\psi(u)})|^{P+\frac{1}{2}}\\
&  \leq C(\theta)\Omega(\theta)^{2N-2}\Gamma(2P+1)H^{2P+1}(1-\frac{\theta}%
{2})^{2P+1}.
\end{align*}
By the definition of determinant and Eq.(\ref{Rn of LUE weight s
expressed as KN}), we have
\begin{align*}
&  \left\vert \frac{1}{N^{n}}R_{n}^{LUE,\frac{1}{4N}}\left(  (u+\frac{t_{1}%
}{N\psi(u)})\Lambda(H),\dots,(u+\frac{t_{n}}{N\psi(u)})\Lambda(H)\right)
\right\vert \\
&  =\frac{1}{N^{n}}|\det(\tilde{K}_{N}((u+\frac{t_{i}}{N\psi(u)}%
)\Lambda(H),u+\frac{t_{j}}{N\psi(u)},\frac{1}{4N}))_{i,j=1}^{n}|\\
&  \leq n!\max_{|t_{i}|\text{,}|t_{j}|\leq A}|\frac{1}{N}\tilde{K}%
_{N}((u+\frac{t_{i}}{N\psi(u)})\Lambda(H),u+\frac{t_{j}}{N\psi(u)},\frac
{1}{4N})|^{n}\\
&  \leq n!C(\theta)^{n}\Omega(\theta)^{(2N-2)n}\Gamma(2P+1)^{n}H^{(2P+1)n}%
(1-\frac{\theta}{2})^{(2P+1)n}\text{.}%
\end{align*}
We complete the proof of this corollary.
\end{proof}

\subsection{Proof of Theorem 1 in the bulk of the spectrum}
Now, we turn to the proof of Theorem \ref{theorem for CLUE: bulk} in
the bulk.
\begin{proof}
From the Stirling's formula $\Gamma (z)=\sqrt{2\pi}\exp (-z)z^{z-1/2}(1+O(\frac{1}{|z|}))$ if $%
|z|\rightarrow \infty ,\,|\arg (z)|<\pi $, we know that
\begin{equation}
\lim_{N\rightarrow \infty }N\,\gamma \left( N(N+\alpha +v)\right) =\frac{1}{%
\sqrt{2\pi }}e^{-\frac{v^{2}}{2}}.  \label{N gamma}
\end{equation}%
Particularly taking $v=0$, then $N\,\gamma \left(
N(N+\alpha)\right)\rightarrow 1/\sqrt{2\pi}$ as
$N\rightarrow\infty$. A straightforward calculation proves that
\begin{equation}
\lim_{N\rightarrow \infty }\int_{-\infty }^{\infty }\phi
_{N}(y)dy=\sqrt{2\pi }.  \label{the integral of varphi(y)}
\end{equation}
By Eq.(\ref{Inverse FT}), it suffices to show that
\begin{align*}
\lim_{N\rightarrow\infty} &
\frac{1}{(N\psi(u))^{n}}\int_{-\infty}^{+\infty
}\phi_{N}(y)R_{n}^{LUE,\frac{1}{4N(1+iy/N)}}(u+\frac{t_{1}}{N\psi(u)}%
,\dots,u+\frac{t_{n}}{N\psi(u)})dy\\
&  =\sqrt{2\pi}\det\left(  \frac{\sin(\pi(t_{i}-t_{j}))}{\pi(t_{i}-t_{j}%
)}\right)  _{i,j=1}^{n}%
\end{align*}
holds uniformly in $u$, $t_{1},\ldots,t_{n}$ satisfying the
assumptions of this theorem. Note that
\[
\phi_{N}(-y)=\overline{\phi_{N}(y)}\text{, }R_{n}^{LUE,\frac{1}{4N(1-iy/N)}%
}=\overline{R_{n}^{LUE,\frac{1}{4N(1+iy/N)}}}\text{, }%
\]
and
\[
\lim_{N\rightarrow\infty}\int_{0}^{\infty}\operatorname{Re}\phi_{N}%
(y)dy=\frac{\sqrt{2\pi}}{2}\text{.}%
\]
It is enough to show that%
\begin{align*}
&
\lim_{N\rightarrow\infty}\frac{1}{(N\psi(u))^{n}}\int_{0}^{+\infty
}\operatorname{Re}[\phi_{N}(y)R_{n}^{LUE,\frac{1}{4N(1+iy/N)}}(u+\frac{t_{1}%
}{N\psi(u)},\dots,u+\frac{t_{n}}{N\psi(u)})]dy\\
&  =\lim_{N\rightarrow\infty}\int_{0}^{\infty}\operatorname{Re}\phi
_{N}(y)dy\det\left(
\frac{\sin(\pi(t_{i}-t_{j}))}{\pi(t_{i}-t_{j})}\right)
_{i,j=1}^{n}\text{.}%
\end{align*}
Making the change of variables: $y/N=H$, we find
\begin{align*}
&  \frac{1}{(N\psi(u))^{n}}\int_{0}^{+\infty}\operatorname{Re}[\phi
_{N}(y)R_{n}^{LUE,\frac{1}{4N(1+iy/N)}}(u+\frac{t_{1}}{N\psi(u)},\dots
,u+\frac{t_{n}}{N\psi(u)})]dy\\
&  =\frac{N}{(N\psi(u))^{n}}\int_{0}^{+\infty}dH\\
&  \operatorname{Re}\left[
\phi_{N}(NH)R_{n}^{LUE,\frac{1}{4N}}\left(
(u+\frac{t_{1}}{N\psi(u)})\Lambda(H),\dots,(u+\frac{t_{n}}{N\psi(u)}%
)\Lambda(H)\right)  \Lambda(H)^{n}\right]  \\
&
=\frac{N}{(N\psi(u))^{n}}(\int_{0}^{\bar{H}}+\int_{\bar{H}}^{+\infty
})=I_{1}+I_{2}%
\end{align*}
where $\bar{H}\in(0,H_{0}(\theta))$ fixed and $H_{0}(\theta)$ as in
Lemma 5. Here we have used the following fact that
\begin{equation}
R_{n}^{LUE,\sigma s}(x_{1},\ldots ,x_{n})=R_{n}^{LUE,s}(x_{1}\sigma
^{-1},\ldots ,x_{n}\sigma ^{-1})\sigma ^{-n} \label{relation for
correlation fun of LUE}
\end{equation}%
holds for any $\Re\sigma >0$ and $s>0$. On the other hand, let%
\begin{align*}
&  \int_{0}^{\infty}\operatorname{Re}\phi_{N}(y)dy\det\left(
\frac{\sin
(\pi(t_{i}-t_{j}))}{\pi(t_{i}-t_{j})}\right)  _{i,j=1}^{n}\\
&
=(\int_{0}^{\bar{H}}+\int_{\bar{H}}^{+\infty})\operatorname{Re}\phi
_{N}(NH)dH\det\left(
\frac{\sin(\pi(t_{i}-t_{j}))}{\pi(t_{i}-t_{j})}\right)
_{i,j=1}^{n}=J_{1}+J_{2}.
\end{align*}
Next we will prove that when $N\rightarrow\infty$,%
\begin{equation}
I_{1}-J_{1}\rightarrow0\text{, }I_{2}\rightarrow0\text{,
}J_{2}\rightarrow
0\text{.}%
\end{equation}
By Lemma 5, for every $0<\theta<1$, $A>0$ the following relation
\begin{align*}
&  \lim_{N\rightarrow\infty}\frac{1}{(N\psi(u))^{n}}R_{n}^{LUE,\frac{1}{4N}%
}\left(
(u+\frac{t_{1}}{N\psi(u)})\Lambda(H),\dots,(u+\frac{t_{n}}{N\psi
(u)})\Lambda(H)\right)  \\
&  =\det\left( \frac{\sin\pi(t_{i}-t_{j})\Lambda(H)\frac{\hat{\psi}(u\Lambda(H))}%
{\psi(u)}}{\pi(t_{i}-t_{j})\Lambda(H)}\right)
_{i,j=1}^{n}\triangleq\tilde {K}(u,H)
\end{align*}
holds uniformly in all $u\in\lbrack\theta,1-\theta]$, $|t_{i}|\leq
A$ and $H\in\lbrack0,\bar{H}]$. The uniform continuity of the
function $\tilde {K}(u,H)\Lambda(H)^{n}$ with respect to $H$ means
that for every $\epsilon$, there exists an $H(\epsilon)\leq
H_{0}(\theta)$ such that
\begin{equation}
|\tilde{K}(u,H)\Lambda(H)^{n}-\tilde{K}(u,0)|<\epsilon
\end{equation}
holds for all $0\leq H\leq H(\epsilon)$. We choose
$\bar{H}=H(\epsilon)$. Note that
\begin{equation}
|\phi_{N}(y)|=\exp(-\frac{N(N+\alpha)}{2}\ln(1+\frac{y^{2}}{N^{2}}))\text{,}%
\end{equation}
and so
\begin{equation}
N\int_{0}^{\bar{H}}|\operatorname{Re}\phi_{N}(NH)|dH\leq2\int_{0}^{+\infty
}\exp(-\frac{y^{2}}{2})dy\text{.}%
\end{equation}
Hence, for large $N$ the difference $I_{1}-J_{1}$ can be controlled
by
\begin{align*}
|I_{1}-J_{1}| &  \leq N\int_{0}^{\bar{H}}\left\vert
\operatorname{Re}\left[ \phi_{N}(NH)\left(
\frac{1}{(N\psi(u))^{n}}R_{n}^{LUE,\frac{1}{4N}}-\tilde
{K}(u,H)\right)  \Lambda(H)^{n}\right]  \right\vert dH\\
&  +N\int_{0}^{\bar{H}}\left\vert \operatorname{Re}\left[  \phi_{N}%
(NH)\tilde{K}(u,H)\Lambda(H)^{n}-\tilde{K}(u,0)\right]  \right\vert dH\\
&  \leq\epsilon4\int_{0}^{+\infty}\exp(-\frac{y^{2}}{2})dy.
\end{align*}
The uniform boundedness of the function $\tilde{K}(u,0)$ with
respect to $u\in\lbrack\theta,1-\theta]$ implies that as
$N\rightarrow\infty$, the following relation
\[
|J_{2}|\leq|\tilde{K}(u,0)|N\int_{\bar{H}}^{+\infty}|\operatorname{Re}\phi
_{N}(NH)|dH\leq|\tilde{K}(u,0)|2\int_{N\bar{H}}^{+\infty}\exp(-\frac{y^{2}}%
{2})dy\rightarrow0
\]
holds uniformly in all $u$. It follows from Corollary \ref{corollary} that%
\begin{align*}
|I_{2}| &  \leq\frac{N}{\psi(u)^{n}}n!C(\theta)^{n}\Omega(\theta
)^{(2N-2)n}\Gamma(2P+1)^{n}(1-\frac{\theta}{2})^{(2P+1)n}\\
&\times\int_{\bar{H}%
}^{+\infty}|\operatorname{Re}\left[
\phi_{N}(NH)\Lambda(H)^{n}\right]
|H^{(2P+1)n}dH\\
&  \leq\frac{N}{\psi(\theta)^{n}}n!C(\theta)^{n}\Omega(\theta)^{(2N-2)n}%
(1-\frac{\theta}{2})^{(2P+1)n}\\
&  \times\Gamma(2P+1)^{n}\int_{\bar{H}}^{+\infty}e^{-\frac{N(N+\alpha)}{2}%
\ln(1+H^{2})}(1+H^{2})^{\frac{n}{2}}H^{(2P+1)n}dH\\
&  \leq e^{C^{\prime\prime}N}\Gamma(2P+1)^{n}\int_{\bar{H}}^{+\infty}%
e^{-\frac{N(N+\alpha)}{2}\ln(1+H^{2})}(1+H^{2})^{n(P+1)}dH\\
&  \leq e^{C^{\prime\prime}N}\Gamma(2P+1)^{n}\int_{\bar{H}}^{+\infty
}e^{-(\frac{N(N+\alpha)}{2}-n(P+1)-1)\ln(1+H^{2})}\frac{1}{1+H^{2}}dH\\
&  \leq e^{C^{\prime\prime}N}\Gamma(2P+1)^{n}e^{-(\frac{N(N+\alpha)}%
{2}-n(P+1)-1)\ln(1+\bar{H}^{2})}\int_{0}^{+\infty}\frac{1}{1+H^{2}}dH\text{.}%
\end{align*}
Here the constant $C^{\prime\prime}$ depends on $\theta$ and $n$. On
the other hand, we notice that
\[
\Gamma(2P+1)^{n}=\sqrt{2\pi}\exp(n(2P+1/2)\ln(2P+1)-2P-1)\sim
O(e^{n2N\ln
(2N)})\text{,}%
\]
thus $I_{2}\rightarrow0$ as $N\rightarrow\infty.$ We complete the
proof of this theorem in the bulk.
\end{proof}

\section{proof of theorem \ref{theorem for CLUE: bulk}: the soft edge of the spectrum}
From Eq.(\ref{LUE and CLUE}), we get
\begin{equation}
\label{fixed trace soft edge}
R_{n}^{LUE,\frac{1}{4N}}(x_{1},x_{2},\cdots ,x_{n})=\int_{0}^{+\infty
}R_{n}^{\delta ,\frac{N+\alpha }{4}}(\frac{x_{1}}{u},\frac{x_{2}}{u},\cdots ,%
\frac{x_{n}}{u})\frac{1}{u^{n}}\gamma (N_{\alpha}u)N_{\alpha}du,
\end{equation}
where $N_{\alpha}=N(N+\alpha)$. Next, We  prove a more refined
asymptotic result than Eq.(\ref{N gamma}).
\begin{lemma}
\label{concentrationlemma1}Let $\{b_{N}\}$ be a sequence such that $%
b_{N}\rightarrow 0$ but $Nb_{N}/\sqrt{\ln N}\rightarrow \infty $ as $%
N\rightarrow \infty $, then we have
\begin{equation}
\label{soft edge: lemma 1}
\int_{0}^{\infty }\gamma (N_{\alpha}u)N_{\alpha}\,du=\int_{u_{-}}^{u_{+}}\gamma (4Nu)4N\,du+O(e^{-\frac{1}{2}%
(Nb_{N})^{2}(1+o(1))}),
\end{equation}%
where $u_{\pm }=\frac{N+\alpha }{4}(1\pm b_{N})$.
\end{lemma}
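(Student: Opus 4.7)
The plan is to recognize Lemma \ref{concentrationlemma1} as a moderate-deviation tail estimate for a Gamma distribution. First I would observe that $\gamma$ is the density of a Gamma random variable $Y$ with shape $N_\alpha = N(N+\alpha)$ and unit scale, so the substitution $x = N_\alpha u$ makes the left-hand side equal $1$, while the substitution $x = 4Nu$ on the right-hand side transforms $[u_-,u_+]$ into $[N_\alpha(1-b_N),\,N_\alpha(1+b_N)]$. Thus the claim is equivalent to
\begin{equation}
\int_{\{|x - N_\alpha|>N_\alpha b_N\}}\gamma(x)\,dx \;=\; O\bigl(e^{-\frac{1}{2}(Nb_N)^2(1+o(1))}\bigr).
\notag
\end{equation}
Since $Y$ has mean $N_\alpha$ and variance $N_\alpha \sim N^2$, this is a deviation of roughly $Nb_N$ standard deviations, and the hypothesis $Nb_N/\sqrt{\ln N}\to\infty$ places us in the moderate-deviation regime where the stated Gaussian exponent is sharp.

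Next I would handle the right tail by substituting $x = N_\alpha(1+t)$ and invoking Stirling's formula $\Gamma(N_\alpha) = \sqrt{2\pi/N_\alpha}\,N_\alpha^{N_\alpha}e^{-N_\alpha}(1+o(1))$, which gives
\begin{equation}
\gamma\bigl(N_\alpha(1+t)\bigr)N_\alpha \;=\; \sqrt{\tfrac{N_\alpha}{2\pi}}\exp\bigl[(N_\alpha-1)\log(1+t) - N_\alpha t\bigr](1+o(1)).
\notag
\end{equation}
I would split $t \geq b_N$ into a near range $b_N \leq t \leq \varepsilon$ (with $\varepsilon>0$ small and fixed) and a far range $t > \varepsilon$. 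On the near range, Taylor expanding $\log(1+t) = t - t^2/2 + O(t^3)$ turns the exponent into $-\tfrac12 N_\alpha t^2(1+O(t))$, and the Mills-ratio estimate $\int_a^\infty e^{-s^2/2}\,ds \leq a^{-1}e^{-a^2/2}$ applied after the Gaussian rescaling $s = \sqrt{N_\alpha}\,t$ (with $a = \sqrt{N_\alpha}\,b_N \sim Nb_N$) yields
\begin{equation}
\int_{b_N}^{\varepsilon}\gamma\bigl(N_\alpha(1+t)\bigr)N_\alpha\,dt \;\leq\; \frac{C}{Nb_N}\,e^{-\frac{1}{2}(Nb_N)^2(1+o(1))}.
\notag
\end{equation}
On the far range, $(N_\alpha-1)\log(1+t)-N_\alpha t \leq -c(\varepsilon)N_\alpha$ for a positive constant, giving a contribution at most $e^{-c(\varepsilon)N_\alpha}$, which is negligible against the near-range bound. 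The left tail is handled symmetrically by setting $x = N_\alpha(1-t)$ with $\log(1-t) = -t - t^2/2 + O(t^3)$; its contribution is of the same order (in fact slightly smaller, since $\gamma$ vanishes at $0$).

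The main obstacle will be ensuring that the factor $(1+o(1))$ in the exponent is genuinely uniform. This amounts to checking two things: first, the cubic Taylor remainder $N_\alpha b_N^3 = (Nb_N)^2\cdot O(b_N)$ is $o((Nb_N)^2)$, which is automatic from $b_N\to 0$; second, all polynomial-in-$N$ prefactors arising from Stirling and from the Mills-ratio estimate (such as the $1/(Nb_N)$ above) must be absorbable into the exponent. The latter needs $\log(Nb_N) = o((Nb_N)^2)$, which is precisely what the hypothesis $Nb_N/\sqrt{\ln N}\to\infty$ secures.
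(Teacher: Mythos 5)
Your proof is correct, and it takes a genuinely different route from the paper's. You run a standard moderate-deviation argument: change variables, apply Stirling, Taylor-expand the log-density, split the tail integral into a near range $b_N\le t\le\varepsilon$ and a far range, and close with a Mills-ratio (Gaussian tail) bound. The paper instead avoids estimating any tail integral: it observes that $\gamma(4Nu)$ is increasing on $(0,u_-]$ (its mode is at $u=(N+\alpha)/4>u_-$), so the left tail integral is at most the interval length times the endpoint value, namely $u_-\cdot 4N\,\gamma(4Nu_-)=N_\alpha(1-b_N)\,\gamma(N_\alpha(1-b_N))$; and it observes that $u^2\gamma(4Nu)$ is decreasing on $[u_+,\infty)$ (its mode is at $(N_\alpha+1)/(4N)<u_+$), so writing $\gamma(4Nu)=u^{-2}\cdot u^2\gamma(4Nu)$ and integrating only the explicit $u^{-2}$ factor bounds the right tail by $N_\alpha(1+b_N)\,\gamma(N_\alpha(1+b_N))$. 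After that, the paper performs the same Stirling-plus-Taylor computation of $\ln\bigl[N_\alpha(1\pm b_N)\,\gamma(N_\alpha(1\pm b_N))\bigr]$ that you carry out, and the hypothesis $Nb_N/\sqrt{\ln N}\to\infty$ enters exactly where you identify it, to absorb the $\tfrac12\ln N_\alpha$ Stirling prefactor. The paper's monotonicity trick is more compact and sidesteps the one place your argument needs extra care: with a fixed cutoff $\varepsilon$ the near-range exponent only comes out as $-\tfrac12(Nb_N)^2(1-C\varepsilon)$, and to sharpen that to a genuine $(1+o(1))$ you would want $\varepsilon=\varepsilon_N\to 0$ slowly (e.g.\ $\varepsilon_N=\sqrt{b_N}$), slowly enough that the far-range bound $e^{-c(\varepsilon_N)N_\alpha}$, with $c(\varepsilon)\sim\varepsilon^2/2$, remains negligible against $e^{-\tfrac12(Nb_N)^2}$.
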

\begin{proof}
Divide the left hand side of the Eq.(\ref{soft edge: lemma 1}) into
three parts
\begin{equation}
\left(\int_{0}^{1-b_{N}}+\int_{1-b_{N}}^{1+b_{N}}+\int_{1+b_{N}}^{\infty
}\right)\gamma (N(N+\alpha )u)N(N+\alpha )\,du.
\end{equation}
First consider $\int_{0}^{u_{-}}\gamma (4Nu)4N\,du$. By $\gamma ^{\prime
}(4Nu)=0$, we get the maximum point
\begin{equation}
u_{max}=\frac{(N+\alpha )}{4}>u_{-}
\end{equation}
for sufficiently large $N$. Note that the function $\gamma (4Nu)$ is
monotonically increasing when $u\in (0,u_{-}]$, thus
\begin{equation}
\int_{0}^{u_{-}}\gamma (4Nu)4N\,du\leq \gamma
(N_{\alpha}(1-b_{N}))N_{\alpha }(1-b_{N}).
\end{equation}%
It follows from (\ref{gamma def}) that
\begin{equation}
\gamma (N_{\alpha }(1-b_{N}))N_{\alpha
}(1-b_{N})=\frac{(1-b_{N})^{N_{\alpha }}e^{-N_{\alpha
}(1-b_{N})}}{N_{\alpha }{}^{-N_{\alpha }}\Gamma (N_{\alpha })}.
\end{equation}
Using Stirling's formula, we get
\begin{align}
\label{integralestimation1}
&\ln [\gamma (N_{\alpha }(1-b_{N}))N_{\alpha }(1-b_{N})]\\
& =N_{\alpha }\ln (1-b_{N})-N_{\alpha }(1-b_{N})+N_{\alpha }\ln
N_{\alpha
}-[(N_{\alpha }-\frac{1}{2})\ln N_{\alpha }-N_{\alpha }+O(1)]  \notag \\
& =N_{\alpha }\ln (1-b_{N})+N_{\alpha }b_{N}+\frac{1}{2}\ln N_{\alpha }+O(1)
\notag \\
& =N_{\alpha }[-b_{N}-\frac{1}{2}b_{N}^{2}+O(b_{N}^{3})]+N_{\alpha }b_{N}+%
\frac{1}{2}\ln N_{\alpha }+O(1)  \notag \\
& =-\frac{N_{\alpha }}{2}b_{N}^{2}+N(N+\alpha )\,O(b_{N}^{3})+\frac{1}{2}\ln
N_{\alpha }+O(1)  \notag \\
& =-\frac{N_{\alpha }}{2}b_{N}^{2}[1+O(b_{N})+\frac{1}{N_{\alpha }b_{N}^{2}}%
\ln N_{\alpha }+O(\frac{1}{N_{\alpha }b_{N}^{2}})]  \notag \\
& =-\frac{1}{2}N^{2}b_{N}^{2}[1+2\frac{\ln N}{N^{2}b_{N}^{2}}+o(1)+O(\frac{1%
}{\ln N})]=-\frac{1}{2}N^{2}b_{N}^{2}(1+o(1)).\notag
\end{align}
In the last two equalities we used $b_{N}\rightarrow 0$ and $Nb_{N}/\sqrt{%
\ln N}\rightarrow \infty $ as $N\rightarrow \infty $. Next, we
estimate $\int_{u_{+}}^{\infty }\gamma (4Nu)4N\,du$. Notice that the
unique maximum point of $u^{2}\,\gamma (4Nu)\,$ satisfies
\begin{equation}
\tilde{u}_{max}=\frac{N_{\alpha }+1}{4N}<u_{+}\notag
\end{equation}
for sufficiently large $N$, and $u^{2}\,\gamma (4Nu)$ is
monotonically decreasing when $u\in [u_{+},\infty ),$ thus
\begin{align}
&\int_{u_{+}}^{\infty }\gamma (4Nu)4N\,du =\int_{u_{+}}^{\infty}u^{-2}(u^{2}\,\gamma (4Nu)4N\,)\,du \notag\\
&\leq u_{+}^{2}\gamma (4Nu_{+})4N\int_{u_{+}}^{\infty
}u^{-2}\,du=4Nu_{+}\gamma (4Nu_{+})=\gamma (N_{\alpha
}(1+b_{N}))N_{\alpha}(1+b_{N}).\notag
\end{align}
Similarly, we can get
\begin{equation}
\ln [\gamma (N_{\alpha }(1+b_{N}))N_{\alpha }(1+b_{N})]=-\frac{1}{2}
N^{2}b_{N}^{2}(1+o(1)).\label{integralestimation2}
\end{equation}
Combing (\ref{integralestimation1}) and (\ref{integralestimation2}), this completes the proof.
\end{proof}

\begin{remark}
In Lemma \ref{concentrationlemma1}, let us take $b_{N}=N^{-\kappa},
\kappa\in (0,1)$. It is a well-known fact that the scaling at the
soft edge of the spectrum is proportional to $N^{-2/3}$, thus we can
choose $\kappa>2/3$ and give a very close approximation of
correlation functions near the radial sharp cutoff point. Then using
known results about the unconstrained ensembles, we obtain
Airy-kernel for the fixed trace ensembles. Such arguments can also
deal with Bessel-kernel at the hard edge. However, it seems to be
insufficient for proving universality in the  bulk. The main
difficulty is that the ``rate" index $\kappa$ has been rather sharp,
in the sense that it cannot be replaced with a larger number than 1.
\end{remark}

\begin{proof}{(Theorem \ref{theorem for CLUE: bulk}: the soft edge)}\\
For $f\in C_{c}(\mathbb{R}^{n})$, From (\ref{fixed trace soft edge})
one finds that
\begin{align}
& \frac{1}{((2N)^{2/3})^{n}}\int_{\mathbb{R}^{n}}f(t_{1},\cdots
,t_{n})R_{n}^{LUE,\frac{1}{4N}}\left(1+\frac{t_{1}}{(2N)^{2/3}},\cdots ,1+\frac{%
t_{n}}{(2N)^{2/3}}\right)\,d^{n}t \label{three terms} \\
& =\frac{1}{((2N)^{2/3})^{n}}\int_{\mathbb{R}^{n}}f(t_{1},\cdots
,t_{n})\int_{0}^{+\infty }  \notag \\
& R_{n}^{\delta ,\frac{N+\alpha }{4}}\left(u^{-1}(1+\frac{t_{1}}{(2N)^{2/3}}%
),\cdots ,u^{-1}(1+\frac{t_{1}}{(2N)^{2/3}})\right)\frac{1}{u^{n}}\gamma
(N_{\alpha }u)N_{\alpha }du\,d^{n}t \notag\\
& =\frac{1}{((2N)^{2/3})^{n}}\int_{\mathbb{R}^{n}}\left(\int_{0}^{1-b_{N}}+\int_{1-b_{N}}^{1+b_{N}}+\int_{1-b_{N}}^{\infty}\right)\notag\\
&f \left((2N)^{2/3}(u-1)+uy_{1},\cdots , (2N)^{2/3}(u-1)+uy_{n}\right)\notag\\
&\times R_{n}^{\delta,\frac{N+\alpha}{4}}\left(1+\frac{y_{1}}{(2N)^{2/3}},\cdots ,1+\frac{y_{n}}{(2N)^{2/3}}\right)\gamma (N_{\alpha}u)N_{\alpha
}\,du\,d^{n}y  \notag \\
& \doteq I_{1}+I_{2}+I_{3}.\notag
\end{align}
Here we have used the change of variables:
\begin{equation}\label{changeofvariables}
t_{i}=(2N)^{2/3}(u-1)+uy_{i},i=1,\ldots ,n.
\end{equation}
Choose $b_{N}=N^{-\gamma }$ for fixed $\gamma \in (\frac{2}{3},1)$. By Lemma
\ref{concentrationlemma1}, we get
\begin{align}
& |I_{1}|\leq \frac{\Vert f\Vert _{\infty }}{((2N)^{2/3})^{n}}\int_{\mathbb{R%
}^{n}}\int_{0}^{1-b_{N}}\gamma (N_{\alpha }u)N_{\alpha }R_{n }^{\delta,\frac{N+\alpha}{4}
}(1+\frac{y_{1}}{(2N)^{2/3}},\cdots ,1+\frac{y_{n}}{(2N)^{2/3}})\,du\,d^{n}y
\label{threeterm-1}\notag \\
& \leq \frac{\Vert f\Vert _{\infty }}{((2N)^{2/3})^{n}}\int_{\mathbb{R}%
^{n}}R_{n }^{\delta,\frac{N+\alpha}{4} }(1+\frac{t_{1}}{(2N)^{2/3}},\cdots ,1+\frac{t_{n}}{%
(2N)^{2/3}})\,d^{n}y\int_{0}^{1-b_{N}}\gamma (N_{\alpha }u)N_{\alpha }du
\notag \\
& =\Vert f\Vert _{\infty }\frac{N!}{(N-n)!}%
O(e^{-1/2(Nb_{N})^{2}(1+o(1))})=o(1).  \notag
\end{align}
Similarly, again by Lemma \ref{concentrationlemma1}, we have $|I_{3}|=o(1).$
Thus \begin{equation}(\ref{three terms})=I_{2}+o(1) \label{three = I2+o(1)}.\end{equation}
Let
\begin{align}
& I_{2}^{\prime }\doteq \frac{1}{((2N)^{2/3})^{n}}\int_{\mathbb{R}%
^{n}}\int_{1-b_{N}}^{1+b_{N}}f(y_{1},\cdots ,y_{n})  \notag \\
& \,\times R_{n}^{\delta ,\frac{N+\alpha }{4}}\left( 1+\frac{y_{1}}{%
(2N)^{2/3}},\cdots ,1+\frac{y_{n}}{(2N)^{2/3}}\right) \,\gamma (N_{\alpha
}u)N_{\alpha }\,du\,d^{n}y,  \notag \\
& =\frac{1}{((2N)^{2/3})^{n}}\int_{\mathbb{R}^{n}}f(y_{1},\cdots
,y_{n})R_{n}^{\delta ,\frac{N+\alpha }{4}}\left( 1+\frac{y_{1}}{(2N)^{2/3}}%
,\cdots ,1+\frac{y_{n}}{(2N)^{2/3}}\right) \,d^{n}y\,  \notag \\
& \times \int_{1-b_{N}}^{1+b_{N}}\gamma (N_{\alpha }u)N_{\alpha }\,du.
\notag
\end{align}%
Next we will prove that
\begin{equation}
\lim_{N\rightarrow \infty }|I_{2}-I_{2}^{\prime }|=0.\label{I2-I2'}
\end{equation}
Since $f\in C_{c}(\mathbb{R}^{n})$ and
\begin{equation}
(2N)^{2/3}(u-1)+uy_{i}\overset{N\rightarrow \infty }{\longrightarrow }%
y_{i},i=1,\ldots ,n,
\end{equation}%
we can choose a ball $B_{R}$ of the radius $R$ in $\mathbb{R}^{n}$ centered
at zero such that $supp(f)\subset B_{R}$ and
\begin{align*}
\{& ((2N)^{2/3}(u-1)+uy_{1},\ldots ,(2N)^{2/3}(u-1)+uy_{n})| \\
& (y_{1},\ldots ,y_{n})\in supp(f),1-b_{N}\leq u\leq 1+b_{N}\}\subset B_{R}.
\end{align*}%
From $f\in C_{c}(\mathbb{R}^{n})$, given $\epsilon >0$, there exists some $%
\delta (\epsilon )>0$ such that $|f(x_{1},\ldots
,x_{n})-f(y_{1},\ldots ,y_{n})|<\epsilon ,$ whenever $\Vert
(x_{1},\ldots ,x_{n})-(y_{1},\ldots ,y_{n})\Vert <\delta (\epsilon
)$, $\forall \,(x_{1},\ldots
,x_{n}),(y_{1},\ldots ,y_{n})\in B_{R}$. On the other hand, there exist $%
N_{0}$ independent of $(y_{1},\ldots ,y_{n})\in B_{R}$ such that
\begin{align*}
& \Vert ((2N)^{2/3}(u-1)+uy_{1},\ldots
,(2N)^{2/3}(u-1)+uy_{n})-(y_{1},\cdots ,y_{n})\Vert  \\
& \leq \sqrt{n}|u-1|(R+(2N)^{2/3})\leq \sqrt{n}N^{-\gamma
}(R+(2N)^{2/3})\leq \delta (\epsilon )
\end{align*}%
for $N>N_{0}$. Therefore, $\forall \,(y_{1},\ldots ,y_{n})\in supp(f)$
\begin{equation}
|f((2N)^{2/3}(u-1)+uy_{1},\ldots ,(2N)^{2/3}(u-1)+uy_{n})-f(y_{n},\ldots
,y_{n})|<\epsilon .
\end{equation}%
Furthermore, we get
\begin{align*}
& |I_{2}-I_{2}^{\prime }|\leq \frac{\epsilon }{((2N)^{2/3})^{n}} \\
& \int_{B_{R}}\int_{1-b_{N}}^{1+b_{N}}\frac{1%
}{((2N)^{2/3})^{n}} R_{n}^{\delta ,\frac{N+\alpha }{4}}(1+%
\frac{y_{1}}{(2N)^{2/3}},\cdots ,1+\frac{y_{n}}{(2N)^{2/3}})\gamma
(N_{\alpha }u)N_{\alpha }du\,d^{n}y \\
& =\epsilon \int_{1-b_{N}}^{1+b_{N}}\gamma (N_{\alpha }u)N_{\alpha }du\frac{1%
}{((2N)^{2/3})^{n}}\int_{B_{R}}R_{n}^{\delta ,\frac{N+\alpha }{4}}(1+\frac{%
y_{1}}{(2N)^{2/3}},\cdots ,1+\frac{y_{n}}{(2N)^{2/3}})d^{n}y \\
& \leq \epsilon \,(1+o(1))C_{R}.
\end{align*}%
Here $C_{R}$ is a constant and we have used Lemma \ref{boundlemma} below. Hence the relation
(\ref{I2-I2'}) holds. Combining (\ref{three terms}) and (\ref{three = I2+o(1)}), we have $(\ref{three terms})=I^{\prime}_{2}+o(1)$ for sufficiently large $N$, more precisely,
\begin{align}
& \frac{1}{((2N)^{2/3})^{n}}\int_{\mathbb{R}^{n}}f(t_{1},\cdots
,t_{n})R_{n}^{LUE,\frac{1}{4N}}\left(1+\frac{t_{1}}{(2N)^{2/3}},\cdots ,1+\frac{%
t_{n}}{(2N)^{2/3}}\right)\,d^{n}t\notag\\
& =\frac{1}{((2N)^{2/3})^{n}}\int_{\mathbb{R}^{n}}f(y_{1},\cdots
,y_{n})R_{n}^{\delta ,\frac{N+\alpha }{4}}\left( 1+\frac{y_{1}}{(2N)^{2/3}}
,\cdots ,1+\frac{y_{n}}{(2N)^{2/3}}\right) \,d^{n}y(1+o(1))+o(1),\notag
\end{align}
this proves the anticipated result.
\end{proof}
\begin{lemma}
\label{boundlemma} For any fixed $R>0$, let $B_{R}$ be the ball of the
radius $R$ in $\mathbb{R}^{n}$ centered at zero. There exists some constant $%
C_{R}$ such that
\begin{equation}
\frac{1}{((2N)^{2/3})^{n}}\int_{B_{R}}R_{n}^{\delta ,\frac{N+\alpha }{4}}(1+%
\frac{y_{1}}{(2N)^{2/3}},\cdots ,1+\frac{y_{n}}{(2N)^{2/3}})d^{n}y\leq C_{R}.
\end{equation}
\end{lemma}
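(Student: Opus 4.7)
The plan is to dominate
\[
A_N := \frac{1}{((2N)^{2/3})^{n}}\int_{B_R} R_{n}^{\delta,\frac{N+\alpha}{4}}\!\Big(1+\frac{y_{1}}{(2N)^{2/3}},\ldots,1+\frac{y_{n}}{(2N)^{2/3}}\Big)\,d^{n}y
\]
by the analogous LUE integral over a slightly larger ball, using the integral representation (\ref{fixed trace soft edge}), non-negativity of correlation functions, and the concentration of the gamma density provided by Lemma \ref{concentrationlemma1}. The LUE integral is finite in the limit thanks to the soft-edge universality for the LUE (the Airy-kernel determinant result recalled in Section~1).

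Fix $R' = R+1$ and take $b_{N}=N^{-\gamma}$ with $\gamma\in(2/3,1)$. Non-negativity of $R_{n}^{\delta,\frac{N+\alpha}{4}}$ together with (\ref{fixed trace soft edge}) gives, for every $x=(x_{1},\ldots,x_{n})$,
\[
R_{n}^{LUE,\frac{1}{4N}}(x)\;\ge\;\int_{1-b_{N}}^{1+b_{N}} R_{n}^{\delta,\frac{N+\alpha}{4}}\!\Big(\frac{x_{1}}{u},\ldots,\frac{x_{n}}{u}\Big)\,u^{-n}\,\gamma(N_{\alpha}u)\,N_{\alpha}\,du.
\]
Setting $x_{i}=1+t_{i}/(2N)^{2/3}$, integrating over $t\in B_{R'}$, swapping the order of integration by Tonelli, and changing variables $t_{i}=u y_{i}+(2N)^{2/3}(u-1)$ for each fixed $u$ (the Jacobian $u^{n}$ cancels the $u^{-n}$), the inner integrand becomes $R_{n}^{\delta,\frac{N+\alpha}{4}}(1+y/(2N)^{2/3})$. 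The key geometric observation is that for $y\in B_{R}$ and $u\in[1-b_{N},1+b_{N}]$,
\[
\|t\|\;\le\; u\|y\| + \sqrt{n}\,(2N)^{2/3}|u-1|\;\le\;(1+b_{N})R + \sqrt{n}\,(2N)^{2/3}b_{N}\;\longrightarrow\; R,
\]
so the image of $B_{R}$ under $y\mapsto t(y,u)$ sits inside $B_{R'}$ for all sufficiently large $N$; equivalently, $B_{R}$ lies in the image of $B_{R'}$ under the inverse change of variable. Restricting the $y$-integration to $B_{R}$ by non-negativity yields
\[
\int_{B_{R'}} R_{n}^{LUE,\frac{1}{4N}}\!\Big(1+\tfrac{t}{(2N)^{2/3}}\Big)\,d^{n}t \;\ge\; \Big(\int_{1-b_{N}}^{1+b_{N}}\!\gamma(N_{\alpha}u)\,N_{\alpha}\,du\Big)\int_{B_{R}} R_{n}^{\delta,\frac{N+\alpha}{4}}\!\Big(1+\tfrac{y}{(2N)^{2/3}}\Big)\,d^{n}y.
\]

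Dividing by $((2N)^{2/3})^{n}$ and applying Lemma \ref{concentrationlemma1} to write the $u$-integral as $1-o(1)$, I conclude
\[
A_{N} \;\le\; \frac{1+o(1)}{((2N)^{2/3})^{n}}\int_{B_{R'}} R_{n}^{LUE,\frac{1}{4N}}\!\Big(1+\tfrac{t}{(2N)^{2/3}}\Big)\,d^{n}t,
\]
and the right-hand side converges to $\int_{B_{R'}}\det[K_{\mathrm{Airy}}(t_{i},t_{j})]_{i,j=1}^{n}\,d^{n}t<\infty$ by the LUE soft-edge universality recalled in Section~1 (equivalently, uniform convergence on compact sets of the rescaled Christoffel--Darboux kernel to the Airy kernel, as in \cite{Vanlessen}). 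This gives $A_{N}\le C_{R}$ for all sufficiently large $N$, and boundedness for small $N$ is automatic. The only delicate step is the geometric inclusion $B_{R}\subset$ image of $B_{R'}$ at every level $u\in[1-b_{N},1+b_{N}]$, where the quantitative choice $\gamma\in(2/3,1)$ is essential as it simultaneously forces $b_{N}\to0$ and $(2N)^{2/3}b_{N}\to0$; the rest is a routine combination of the representation formula, the concentration lemma, and the known LUE asymptotics.
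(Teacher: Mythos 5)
Your proof is correct and follows essentially the same route as the paper: both arguments dominate the FTLUE integral over $B_{R}$ by the corresponding LUE integral over a slightly larger region, using the representation (\ref{fixed trace soft edge}), the concentration of $\gamma(N_{\alpha}u)N_{\alpha}$ on $[1-b_{N},1+b_{N}]$ with $b_{N}=N^{-\gamma}$, $\gamma\in(2/3,1)$ (which makes $(2N)^{2/3}b_{N}\to 0$), and the finiteness of the limiting LUE soft-edge integral. The only real difference is technical: the paper introduces a smooth cutoff $\varphi\in C_{c}(\mathbb{R}^{n})$ with $\varphi\equiv 1$ on $B_{R+\delta}$, precisely so that it can invoke the previously established decomposition identity (\ref{three = I2+o(1)}) for continuous compactly supported test functions, whereas you work directly with the indicator of $B_{R'}$, relying on non-negativity of $R_{n}^{\delta,r}$ and Tonelli to push the change of variables through. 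Your version is slightly more self-contained at this step, but both hinge on the same mechanism and both ultimately need the LUE soft-edge convergence in integrated (equivalently, locally uniform) form, which the recalled pointwise statement in Section~1 gives only after appealing to the uniformity in \cite{Vanlessen}, as you correctly note.
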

\begin{proof}
Given $0<\delta <R$, there exists $N_{0}(R,\delta )$ such that when $%
N>N_{0}(R,\delta )$ and $(y_{1},\cdots ,y_{n})\in B_{R}$,
\begin{equation}
\sum_{i=1}^{n}((2N)^{2/3}(u-1)+uy_{i})^{2}<(R+\delta )^{2}
\end{equation}%
where $u\in \lbrack 1-N^{-\gamma },1+N^{-\gamma }]$, $2/3<\gamma <1$. Let $%
\eta \in (0,1)$ be a real number and $\phi (t)$ be a smooth decreasing
function on $[0,\infty )$ such that $\phi (t)=1$ for $t\in \lbrack
0,R+\delta )$ and $\phi (t)=0$ for $t\geq (1+\eta )(R+\delta )$. Set $%
\varphi (x_{1},\cdots ,x_{n})=\phi (\Vert (x_{1},\cdots ,x_{n})\Vert )$ for $%
(x_{1},\cdots ,x_{n})\in \mathbb{R}^{n}$. For $N>N_{0}$, we have
\begin{align}
& \frac{1}{((2N)^{2/3})^{n}}\int_{B_{R}}R_{n}^{\delta ,\frac{N+\alpha }{4}%
}(1+\frac{y_{1}}{(2N)^{2/3}},\cdots ,1+\frac{y_{n}}{(2N)^{2/3}})\,d^{n}y
\label{boundlemmainequality} \\
& \leq \frac{1}{((2N)^{2/3})^{n}}\int_{\mathbb{R}^{n}}\varphi
((2N)^{2/3}(u-1)+uy_{1},\cdots ,(2N)^{2/3}(u-1)+uy_{n})  \notag \\
& R_{n}^{\delta ,\frac{N+\alpha }{4}}(1+\frac{y_{1}}{(2N)^{2/3}},\cdots ,1+%
\frac{y_{n}}{(2N)^{2/3}})\,d^{n}y  \notag
\end{align}%
Multiplying by $\gamma (N_{\alpha }u)N_{\alpha }$ and then integrating (\ref%
{boundlemmainequality}) with respect to $u$ on $[1-b_{N},1+b_{N}]$, one
obtains
\begin{align*}
& \int_{1-b_{N}}^{1+b_{N}}\gamma (N_{\alpha }u)N_{\alpha }du\,\frac{1}{%
((2N)^{2/3})^{n}}\int_{B_{R}}R_{n\beta }^{\delta }(1+\frac{y_{1}}{(2N)^{2/3}}%
,\cdots ,1+\frac{y_{n}}{(2N)^{2/3}})\,d^{n}y \\
& \leq \frac{1}{((2N)^{2/3})^{n}}\int_{\mathbb{R}^{n}}%
\int_{1-b_{N}}^{1+b_{N}}\varphi \Big((2N)^{2/3}(u-1)+uy_{1},\cdots
,(2N)^{2/3}(u-1)+uy_{n}\Big) \\
& \gamma (N_{\alpha }u)N_{\alpha }R_{n}^{\delta ,\frac{N+\alpha }{4}}(1+%
\frac{y_{1}}{(2N)^{2/3}},\cdots ,1+\frac{y_{n}}{(2N)^{2/3}})du\,d^{n}y \\
& =\frac{1}{((2N)^{2/3})^{n}}\int_{\mathbb{R}^{n}}\varphi (t_{1},\cdots
,t_{n})R_{n}^{LUE,\frac{1}{4N}}(1+\frac{t_{1}}{(2N)^{2/3}},\cdots ,1+\frac{%
t_{n}}{(2N)^{2/3}})\,d^{n}t+o(1).
\end{align*}%
Here we make use of Eq.(\ref{three = I2+o(1)}). This completes the
proof of this lemma.
\end{proof}

\section{proof of theorem \protect\ref{theorem for CLUE: bulk}: the hard
edge of the spectrum}

First we prove that the $n$-point correlation function $R_{n}^{\delta,r}$
could be expressed as the inverse Laplace transform of $R_{n}^{LUE,\frac{1}{%
4N}}$, which is slightly different from Eq.(\ref{fourier
transform}).

\begin{proposition}
\label{disintegrationforRTE}Let $R_{n}^{LUE,\frac{1}{4N}}$ and $%
R_{n}^{\delta ,r}$ be the $n$-point correlation functions of
eigenvalues for the LUE and  FTLUE, respectively. Then we have the
relation
\begin{equation}
R_{n}^{\delta ,r}(x_{1},\ldots ,x_{n})=\frac{\Gamma (N_{\alpha })}{%
r^{N_{\alpha }-1}(4N)^{n}}\mathcal{L}^{-1}[t^{-(N_{\alpha }-n)}R_{n}^{LUE,%
\frac{1}{4N}}(\frac{t}{4N}x_{1},\ldots ,\frac{t}{4N}x_{n})](r)  \notag
\label{basicrelation3}
\end{equation}%
where $\mathcal{L}^{-1}[h(t)](x)$ is the inverse Laplace transform of a
function $h(t)$, and $N_{\alpha}=N(N+\alpha)$.
\end{proposition}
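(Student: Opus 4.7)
\begin{Proof}[Proof plan for Proposition \ref{disintegrationforRTE}]
The proposal is to recognize the integral identity of Proposition \ref{integral equation} as a Laplace transform in disguise, and then invert it after performing a suitable rescaling on the LUE side. Concretely, substituting the explicit form
\[
\gamma(u/s)s^{-1}=\frac{e^{-u/s}u^{N_{\alpha}-1}}{\Gamma(N_{\alpha})\,s^{N_{\alpha}}}
\]
into (\ref{LUE and CLUE}) and setting $t=1/s$ rewrites Proposition \ref{integral equation} as
\[
\frac{\Gamma(N_{\alpha})}{t^{N_{\alpha}}}\,R_{n}^{LUE,1/t}(x_{1},\dots,x_{n})
=\int_{0}^{\infty}e^{-ut}\,u^{N_{\alpha}-1}R_{n}^{\delta,u}(x_{1},\dots,x_{n})\,du,
\]
so that the right-hand side is precisely $\mathcal{L}[u^{N_{\alpha}-1}R_{n}^{\delta,u}(x_{1},\dots,x_{n})](t)$ in the variable $u\mapsto t$.

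Next I would normalize the LUE parameter to the value $1/(4N)$ that appears in the statement. For this the scaling relation (\ref{relation for correlation fun of LUE}), namely $R_{n}^{LUE,\sigma s}(x_{1},\dots,x_{n})=\sigma^{-n}R_{n}^{LUE,s}(\sigma^{-1}x_{1},\dots,\sigma^{-1}x_{n})$, applied with $s=1/(4N)$ and $\sigma=4N/t$, gives
\[
R_{n}^{LUE,1/t}(x_{1},\dots,x_{n})=\left(\tfrac{t}{4N}\right)^{n}R_{n}^{LUE,1/(4N)}\!\left(\tfrac{t}{4N}x_{1},\dots,\tfrac{t}{4N}x_{n}\right).
\]
Substituting this into the Laplace identity above and collecting the powers of $t$ yields
\[
\frac{\Gamma(N_{\alpha})}{(4N)^{n}}\,t^{-(N_{\alpha}-n)}R_{n}^{LUE,1/(4N)}\!\left(\tfrac{t}{4N}x_{1},\dots,\tfrac{t}{4N}x_{n}\right)
=\mathcal{L}\!\left[u^{N_{\alpha}-1}R_{n}^{\delta,u}(x_{1},\dots,x_{n})\right](t),
\]
and a single application of $\mathcal{L}^{-1}$ in the variable $t\mapsto r$, followed by division by $r^{N_{\alpha}-1}$, delivers exactly the claimed formula.

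The only nontrivial point in the argument is the legitimacy of inverting the Laplace transform, which requires knowing that $u\mapsto u^{N_{\alpha}-1}R_{n}^{\delta,u}(x_{1},\dots,x_{n})$ is a function to which the inverse-Laplace-transform formula applies and that both sides are well-defined for $\Re r>0$. This is handled by the remark in the excerpt that (\ref{LUE and CLUE}) and (\ref{relation for correlation fun of CLUE}) remain valid after analytic continuation in $s$ and $r$ to the right half-plane, so the right-hand side is holomorphic in $t$ in a half-plane and the inversion is justified in the standard Bromwich sense. Up to checking this analyticity bookkeeping, the proof is an essentially algebraic manipulation of Proposition \ref{integral equation} together with the two scaling identities (\ref{relation for correlation fun of CLUE}) and (\ref{relation for correlation fun of LUE}); the main conceptual step is simply identifying the measure $\gamma(u/s)s^{-1}du$ as a gamma density against which the Laplace kernel is built.
\end{Proof}
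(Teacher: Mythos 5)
Your proof is correct, and it is genuinely more economical than the paper's. The paper re-derives the Laplace-transform identity from scratch at the level of the joint density integrals $I[h]$ over $\mathbb{R}^N$: it multiplies the constrained integral by $e^{-tr}$, integrates over $r$, rescales $x_j\mapsto t^{-1}y_j$ to produce the LUE partition function and ensemble average, inverts, and only then specializes $h$ to a symmetric sum of $f$'s to extract the correlation functions. You instead observe that Proposition~\ref{integral equation}, with the gamma density written out explicitly, is already the statement $\frac{\Gamma(N_\alpha)}{t^{N_\alpha}}R_n^{LUE,1/t}=\mathcal{L}[u^{N_\alpha-1}R_n^{\delta,u}](t)$, so a single use of the LUE scaling identity~(\ref{relation for correlation fun of LUE}) with $\sigma=4N/t$, followed by Laplace inversion, finishes the job. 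This reuses the already-proved Proposition~\ref{integral equation} rather than duplicating its content, and it makes visible that Propositions~\ref{integral equation} and~\ref{disintegrationforRTE} are the forward and inverse Laplace transforms of the same identity.

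One point worth spelling out explicitly, since you only gesture at it: to substitute $\sigma=4N/t$ for $t$ on the Bromwich contour $\Re t=0^+$, you need the scaling identity~(\ref{relation for correlation fun of LUE}) for complex $\sigma$ with $\Re\sigma>0$, which the paper asserts just after Proposition~\ref{integral equation}; and since $\Re(1/t)>0$ on that contour, the substitution is legitimate. With that observed, the inversion is justified by exactly the analytic continuation in $s$ and $r$ that the paper records, so your "analyticity bookkeeping" remark closes the gap.
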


\begin{proof}
For $h\in L^{\infty }(\mathbb{R}^{N})$, let $<h(\cdot)>$ and
$<h(\cdot)>_{\delta}$ denote that the ensemble average is taken in
the LUE and the FTLUE, respectively. Consider the integral
\begin{equation}
I[h]=\int_{\mathbb{R}^{N}}h(x_{1},\ldots ,x_{N})\,\delta \big(%
r-\sum_{i=1}^{N}x_{i}\big)\prod_{i=1}^{N}x_{i}^{\alpha
}\prod_{j<k}|x_{j}-x_{k}|^{2}d^{N}\,x.  \notag
\end{equation}
Making the change of variables: $x_{j}=4Ny_{j},j=1,\ldots ,N$, we have
\begin{equation}
I[h]=(4N)^{N_{\alpha }}\int_{\mathbb{R}^{N}}h(4Nx_{1},\ldots
,4Nx_{N})\,\delta \big(r-4N\sum_{i=1}^{N}x_{i}\big)\prod_{i=1}^{N}x_{i}^{%
\alpha }\prod_{j<k}|x_{j}-x_{k}|^{2}d^{N}\,x.  \notag
\end{equation}
Multiply both sides by $e^{-t\,r}$ and integrate on $r$ from $0$ to $\infty $%
, we get
\begin{align}
& \int_{0}^{\infty }e^{-t\,r}I[h]d\,r=(4N)^{N_{\alpha }}\int_{\mathbb{R}%
^{N}}h(4Nx_{1},\ldots ,4Nx_{N})  \notag \\
&\times\exp \big(-t4N\sum_{i=1}^{N}x_{i}\big)\,\prod_{i=1}^{N}x_{i}^{\alpha
}\prod_{j<k}|x_{j}-x_{k}|^{2}d^{N}x  \notag \\
& =(\frac{4N}{t})^{N_{\alpha }}\int_{\mathbb{R}^{N}}h(\frac{4N}{t}%
x_{1},\ldots ,\frac{4N}{t}x_{N})\exp \big(-4N\sum_{i=1}^{N}x_{i}\big)%
\,\prod_{i=1}^{N}x_{i}^{\alpha }\prod_{j<k}|x_{j}-x_{k}|^{2}d^{N}x  \notag \\
& =Z_{1/4N}\,(\frac{4N}{t})^{N_{\alpha }}<h(\frac{4N}{t}\cdot )>.  \notag
\end{align}%
Here we have made the change of variables $x_{j}=t^{-1}y_{j},j=1,\ldots ,N$,
where $t^{-1}$ denotes the principal branch of the power for complex
variable $t$. Using the inverse Laplace transform, we have
\begin{equation}
I[h]=Z_{1/4N }\mathcal{L}^{-1}[(\frac{4N}{t})^{N_{\alpha }}<h(\frac{4N}{t}
\cdot )>](r).  \notag
\end{equation}
Notice that
\begin{equation*}
\mathcal{L}^{-1}[t^{-\gamma }](x)=\frac{x^{\gamma -1}}{\Gamma (\gamma )}%
\theta (x),\,\mathrm{Re}(\gamma )>0.  \label{propertyofILT}
\end{equation*}%
The ensemble average $<h>_{\delta }$ reads
\begin{equation*}
<h>_{\delta }=\frac{I[h]}{I[1]}=\frac{\Gamma (N_{\alpha })}{r^{N_{\alpha }-1}%
}\mathcal{L}^{-1}[t^{-N_{\alpha }}<h(\frac{4N}{t}\cdot )>](r).
\end{equation*}%
In particular, taking
\begin{equation*}
h(x_{1},\ldots ,x_{N})=\sum_{1\leq i_{1}<\cdots <i_{n}\leq
N}f(x_{i_{1}},\ldots ,x_{i_{n}}),
\end{equation*}%
then we find that
\begin{align}
& \int_{\mathbb{R}^{n}}f(x_{1},\ldots ,x_{n})R_{n}^{\delta ,r}(x_{1},\ldots
,x_{n})\,d^{n}x  \notag \\
& =\frac{\Gamma (N_{\alpha })}{r^{N_{\alpha }-1}}\frac{1}{2\pi i}
\int_{-i\infty +0^{+}}^{-i\infty +0^{+}}dt\,e^{rt}t^{-N_{\alpha }}  \notag \\
&\times\int_{\mathbb{R}^{n}}f(\frac{4N}{t}x_{1},\ldots ,\frac{4N}{t}%
x_{n})R_{n}^{LUE, \frac{1}{4N}}(x_{1},\ldots ,x_{n})\,d^{n}x  \notag \\
& =\frac{\Gamma (N_{\alpha })}{r^{N_{\alpha }-1}(4N)^{n}}\frac{1}{2\pi i}
\int_{-i\infty +0^{+}}^{-i\infty +0^{+}}dt\,e^{rt}t^{-(N_{\alpha }-n)}
\notag \\
&\times \int_{\mathbb{R}^{n}}f(x_{1},\ldots ,x_{n})R_{n}^{LUE,\frac{1}{4N}}(%
\frac{t }{4N}x_{1},\ldots ,\frac{t}{4N}x_{n})\,d^{n}x  \notag \\
& =\frac{\Gamma (N_{\alpha })}{r^{N_{\alpha }-1}(4N)^{n}}\int_{\mathbb{R}
^{n}}f(x_{1},\ldots ,x_{n})\,d^{n}x\frac{1}{2\pi i}\int_{-i\infty
+0^{+}}^{-i\infty +0^{+}}e^{rt}t^{-(N_{\alpha }-n)}  \notag \\
& \times R_{n}^{LUE,\frac{1}{4N}}(\frac{t}{4N}x_{1},\ldots ,\frac{t}{4N}%
x_{n})dt.  \notag
\end{align}
Since $R_{n}^{LUE,\frac{1}{4N}}$ and $R_{n}^{\delta ,r}$ are continuous, we
get
\begin{align}
& R_{n}^{\delta ,r}(x_{1},\ldots ,x_{n})  \notag \\
& =\frac{\Gamma (N_{\alpha })}{r^{N_{\alpha }-1}(4N)^{n}}\frac{1}{2\pi i}%
\int_{-i\infty +0^{+}}^{-i\infty +0^{+}}e^{rt}t^{-(N_{\alpha }-n)}R_{n}^{LUE,%
\frac{1}{4N}}(\frac{t}{4N}x_{1},\ldots ,\frac{t}{4N}x_{n})dt  \notag \\
& =\frac{\Gamma (N_{\alpha })}{r^{N_{\alpha }-1}(4N)^{n}}\mathcal{L}%
^{-1}[t^{-(N_{\alpha }-n)}R_{n}^{LUE,\frac{1}{4N}}(\frac{t}{4N}x_{1},\ldots
, \frac{t}{4N}x_{n})](r).  \notag
\end{align}
This completes the proof.
\end{proof}

\begin{proof}(Theorem \ref{theorem for CLUE: bulk}: the hard edge)

Now we make use of the fact that $n$-point correlation function $R_{n}^{LUE,%
\frac{1}{4N}}(x_{1},\ldots ,x_{n})$ can be expanded as follows:
\begin{align}
& R_{n}^{LUE,\frac{1}{4N}}(x_{1},\ldots ,x_{n})=\prod_{i=1}^{n}x_{i}^{\alpha
}\exp \big(-4N\sum_{i=1}^{n}x_{i}\big)\sum_{l_{1},\ldots
,l_{n}=0}^{2N-2}c_{\{l_{1},\ldots ,l_{n}\}}^{(N)}x_{1}^{l_{1}}\cdots
x_{n}^{l_{n}}  \label{correlationfunctionexpansion} \\
& =\exp \big(-4N\sum_{i=1}^{n}x_{i}\big)\sum_{l_{1},\ldots
,l_{n}=0}^{2N-2}c_{\{l_{1},\ldots ,l_{n}\}}^{(N)}x_{1}^{l_{1}^{^{\prime
}}}\cdots x_{n}^{l_{n}^{^{\prime }}},  \notag
\end{align}
where $l_{i}^{^{\prime }}=l_{i}+\alpha ,i=1,\ldots ,n.$ It follows
from Proposition \ref{disintegrationforRTE} and
Eq.(\ref{correlationfunctionexpansion}) that
\begin{align}
&R_{n}^{\delta ,\frac{N+\alpha }{4}}(x_{1},\ldots ,x_{n})  \notag \\
& =\frac{4^{N_{\alpha }-1}\Gamma (N_{\alpha })}{(N+\alpha )^{N_{\alpha }-1}(4N)^{n}}%
\sum_{l_{1},\ldots ,l_{n}=0}^{2N-2}\frac{c_{\{l_{1},\ldots ,l_{n}\}}^{(N)}}{%
(4N)^{\sum l_{i}^{^{\prime }}}}x_{1}^{l_{1}^{^{\prime }}}\cdots
x_{n}^{l_{n}^{^{\prime }}}\mathcal{L}^{-1}[t^{-(N_{\alpha }-n-\sum
l_{i}^{^{\prime }})}](\frac{N+\alpha }{4}-\sum_{i=1}^{n}x_{i})  \notag \\
& =\theta (\frac{N+\alpha }{4}-\sum_{i=1}^{n}x_{i})\sum_{l_{1},\ldots
,l_{n}=0}^{2N-2}\frac{(\frac{N+\alpha }{4}-\sum_{i=1}^{n}x_{i})^{N_{\alpha
}-n-\sum l_{i}^{^{\prime }}-1}\frac{4^{N_{\alpha }-1}}{(N+\alpha
)^{N_{\alpha }-1}}}{(4N)^{\sum l_{i}^{^{\prime }}+n}}  \notag \\
& \frac{\Gamma (N_{\alpha })}{\Gamma (N_{\alpha }-n-\sum l_{i}^{^{\prime }})}%
c_{\{l_{1},\ldots ,l_{n}\}}^{(N)}x_{1}^{l_{1}^{^{\prime }}}\cdots
x_{n}^{l_{n}^{^{\prime }}}  \notag \\
& =\theta (\frac{N+\alpha }{4}-\sum_{i=1}^{n}x_{i})\sum_{l_{1},\ldots
,l_{n}=0}^{2N-2}(1-\frac{4}{N+\alpha }\displaystyle\sum_{i=1}^{n}x_{i})^{N_{%
\alpha }-n-\sum l_{i}^{^{\prime }}-1}  \label{expansionforFTE} \\
& \frac{\Gamma (N_{\alpha })}{(N_{\alpha })^{\sum l_{i}^{^{\prime
}}+n}\Gamma (N_{\alpha }-n-\sum l_{i}^{^{\prime }})}c_{\{l_{1},\ldots
,l_{n}\}}^{(N)}x_{1}^{l_{1}^{^{\prime }}}\cdots x_{n}^{l_{n}^{^{\prime }}}.
\notag
\end{align}%
With rescaling $x_{i}=\frac{t_{i}}{16N^{2}}$, we will deal with the
different factors in (\ref{expansionforFTE}). First, the $\theta $-function
term
\begin{equation}
\theta (\frac{N+\alpha }{4}-\frac{1}{N^{2}}\sum_{i=1}^{n}\frac{t_{i}}{16}%
)\longrightarrow 1  \label{thetaterm}
\end{equation}%
as $N\rightarrow \infty $. Since $\sum l_{i}^{^{\prime }}\leq
n(2N-2+\alpha ) $, the first  term in the sum yields
\begin{align}
& (1-\frac{4}{N^{2}(N+\alpha )}\displaystyle\sum_{i=1}^{n}\frac{t_{i}}{16}%
)^{N_{\alpha }-n-\sum l_{i}^{^{\prime }}-1}  \label{r-term} \\
& =1-\big(N_{\alpha }-n-\sum l_{i}^{^{\prime }}-1\big)\frac{4}{
N^{2}(N+\alpha )}\displaystyle\sum_{i=1}^{n}\frac{t_{i}}{16}+\frac{1}{N^{2}}%
C_{N}(t_{1},\cdots,t_{n})
\end{align}
Here $C_{N}(t_{1},\ldots ,t_{n})$ is uniformly bounded in all $t_{1}\in
(0,A],\cdots ,t_{n}\in (0,A]$ for given $A>0$.

Before evaluating the factor containing $\Gamma$-functions, we
introduce a lemma about ratio of two Gamma functions, due to Tricomi
and Erd\'{e}lyi \cite{tricomi}, see also Copson's book
\cite{copson}.
\begin{lemma}
\label{ratiooftwogamma} Let $a>0$, for sufficiently large $x$ the following
expansion holds:
\begin{align}
\label{ratiooftwoG}
\frac{\Gamma(x)}{x^{a}\Gamma(x-a)}=\sum_{s=0}^{\infty}\frac{a(a-1)\cdots
(a-s+1)}{s! x^{s}} B^{(a+1)}_{s}(0).
\end{align}
Here $B^{(a)}_{s}(x)$ is N\"{o}rlund's generalized Bernoulli polynomial in $%
x $ of degree $s$, and $B^{(a)}_{s}(0)$ is also a polynomial in $a$
of degree $s$.
\end{lemma}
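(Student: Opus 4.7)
The lemma is the classical Tricomi--Erd\'elyi asymptotic expansion for ratios of Gamma functions. My plan is to reproduce its standard derivation via a Laplace-type integral representation, then apply Watson's lemma, and finally identify the coefficients through the generating function of N\"orlund's polynomials.

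First I would start from the Beta integral
\[
\frac{\Gamma(a)\Gamma(x-a)}{\Gamma(x)}=\int_0^1 t^{x-a-1}(1-t)^{a-1}\,dt,\qquad 0<a<\Re x,
\]
and substitute $t=e^{-u/x}$ so as to extract the power $x^{-a}$ explicitly. A short computation, using $(1-e^{-u/x})^{a-1}=(u/x)^{a-1}\bigl((1-e^{-u/x})/(u/x)\bigr)^{a-1}$, reduces the identity to
\[
\frac{x^a\,\Gamma(x-a)}{\Gamma(x)}=\frac{1}{\Gamma(a)}\int_0^\infty e^{-u}\,u^{a-1}\,\Phi(u/x)\,du,\qquad \Phi(w)=e^{w}\!\left(\frac{e^{w}-1}{w}\right)^{a-1},
\]
with $\Phi$ analytic at $w=0$ and $\Phi(0)=1$.

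Next I would apply Watson's lemma. Writing the Taylor expansion $\Phi(w)=\sum_{s\geq 0}\varphi_s w^s$, termwise integration gives the asymptotic series
\[
\frac{x^a\,\Gamma(x-a)}{\Gamma(x)}\;\sim\;\sum_{s=0}^{\infty}\frac{(a)_s\,\varphi_s}{x^s},\qquad (a)_s=a(a+1)\cdots(a+s-1),
\]
as $x\to\infty$. Formal inversion of this series in $1/x$ produces the expansion of the reciprocal $\Gamma(x)/(x^a\Gamma(x-a))$ that is asserted in the lemma; the leading term is $1$ and one easily checks the next coefficient equals $-a(a+1)/2$, matching $\binom{a}{1}B_1^{(a+1)}(0)$.

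The final step, which I expect to be the main obstacle, is to identify the inverted coefficients with the stated form $\binom{a}{s}B_s^{(a+1)}(0)=\frac{a(a-1)\cdots(a-s+1)}{s!}B_s^{(a+1)}(0)$. The key tool is N\"orlund's generating function
\[
\left(\frac{t}{e^{t}-1}\right)^{a+1}=\sum_{s=0}^{\infty}B_s^{(a+1)}(0)\,\frac{t^s}{s!}.
\]
Rewriting $\Phi(w)=e^{w}\bigl(w/(e^{w}-1)\bigr)^{1-a}$ and comparing with the above, the series inversion in $1/x$ corresponds algebraically to a shift of the N\"orlund superscript from $1-a$ to $a+1$, and the falling factorial $a(a-1)\cdots(a-s+1)$ emerges naturally from a binomial umbral identity matching the Pochhammer $(a)_s$ against $\binom{a}{s}$. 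Once established for real $a>0$ and large real $x$, the expansion extends to complex $a$ and $x$ with $|\arg x|<\pi$ by analytic continuation, and the fact that $B_s^{(a)}(0)$ is a polynomial of degree $s$ in $a$ follows at once from the expansion $(t/(e^{t}-1))^a=\exp\bigl(a\log(t/(e^{t}-1))\bigr)$ in powers of $a$.
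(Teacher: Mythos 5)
Your reduction to a Watson's-lemma problem is set up correctly: the Beta integral $\Gamma(a)\Gamma(x-a)/\Gamma(x)=\int_0^1 t^{x-a-1}(1-t)^{a-1}\,dt$, the substitution $t=e^{-u/x}$, and the resulting representation with $\Phi(w)=e^{w}\bigl((e^{w}-1)/w\bigr)^{a-1}$ are all right, and termwise integration then gives, with $(a)_s=a(a+1)\cdots(a+s-1)$,
\begin{equation*}
\frac{x^{a}\Gamma(x-a)}{\Gamma(x)}\sim\sum_{s\ge 0}\frac{(a)_s}{s!}\,B_s^{(1-a)}(1)\,x^{-s},
\end{equation*}
since $\Phi(w)=\bigl(w/(e^{w}-1)\bigr)^{1-a}e^{w}=\sum_s B_s^{(1-a)}(1)w^s/s!$ by N\"orlund's generating function. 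This part is sound.

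The gap is the last step. What you have expanded is the \emph{reciprocal} of the quantity in the lemma, so you must still show that the formal inverse of the series above is $\sum_s \binom{a}{s}B_s^{(a+1)}(0)\,x^{-s}$. That is equivalent to the convolution identity
\begin{equation*}
\sum_{j+k=n}\binom{a+j-1}{j}B_j^{(1-a)}(1)\binom{a}{k}B_k^{(a+1)}(0)=\delta_{n,0},
\end{equation*}
and this is genuinely nontrivial: neither the passage from the rising factorial $(a)_s$ to the falling factorial $a(a-1)\cdots(a-s+1)$, nor the change of the N\"orlund parameter from $1-a$ to $a+1$, is a mechanical superscript shift. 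The phrase ``binomial umbral identity matching $(a)_s$ against $\binom{a}{s}$'' names the difficulty but does not resolve it. Checking the $n=1$ coefficient (which you did, $-a(a+1)/2$) confirms consistency but is very far from a proof. You would need either to prove the convolution identity outright, or to avoid the inversion by starting from an integral (or contour) representation that produces $\Gamma(x)/\Gamma(x-a)$ itself rather than its reciprocal; the Beta-function route does not do that. For the record, the paper does not give a proof of this lemma at all: it is quoted from Tricomi--Erd\'elyi and Copson, so there is no internal argument to compare against, but as a standalone reconstruction your proof is incomplete at precisely the step where the stated coefficients have to be identified.

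A minor additional caveat: the concluding sentence about analytic continuation of the expansion to complex $a$ and $|\arg x|<\pi$ is plausible but also not automatic; asymptotic expansions need a uniformity argument (or a fresh application of Watson's lemma on a rotated ray) before one may ``continue'' them. For the use made of the lemma in Section~4 this is harmless since $a=n+\sum l_i'$ is real and $x=N_\alpha\to\infty$ along the positive reals, but it should not be stated as if it were free.
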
\noindent Note that there are only finite terms in the sum of the right-hand side of Eq.(\ref{ratiooftwoG}) if $\alpha$ is an integer.

For the convenience, we write
\begin{equation*}
L_{s}(a)=a(a-1)\cdots (a-s+1)B_{s}^{(a+1)}(0).
\end{equation*}%
It is a polynomial in $a$ of degree $2s$. By Lemma \ref{ratiooftwogamma},
the factor containing $\Gamma $-functions reads:
\begin{equation}
\frac{\Gamma (N_{\alpha })}{(N_{\alpha })^{\sum l_{i}^{^{\prime }}+n}\Gamma
(N_{\alpha }-n-\sum l_{i}^{^{\prime }})}=\sum_{s=0}^{\infty }\frac{1}{%
s!(N_{\alpha })^{s}}L_{s}(n+\sum l_{i}^{^{\prime }}).  \label{gammaterm}
\end{equation}%
Since $B_{0}^{(a)}(x)=1$, we have $L_{0}(n+\sum l_{i}^{^{\prime }})=1$. However, if $%
\sum l_{i}^{^{\prime }}\sim N$, the terms on the right-hand side of
Eq.(\ref{gammaterm})
\begin{equation}
\frac{1}{(N_{\alpha })^{s}}L_{s}(n+\sum l_{i}^{^{\prime }})
\end{equation}%
approach to a finite non-zero number as $N\rightarrow \infty $, not
sub-leading. With the aid of multiplication and partial differential
operators, we can rewrite
\begin{equation}
L_{s}(n+\sum l_{i}^{^{\prime }})\,t_{1}^{l_{1}^{^{\prime }}}\cdots
t_{n}^{l_{n}^{^{\prime }}}=L_{s}(\sum_{i=1}^{n}\partial
_{t_{i}}t_{i})\,t_{1}^{l_{1}^{^{\prime }}}\cdots t_{n}^{l_{n}^{^{\prime }}}.
\label{operatorsum}
\end{equation}%
Combing (\ref{thetaterm}), (\ref{r-term}), (\ref{gammaterm}) and (\ref%
{operatorsum}), it follows from (\ref{expansionforFTE}) that
\begin{align}
& \frac{1}{(16N^{2})^{n}}R_{n}^{\delta ,\frac{N+\alpha }{4}}(\frac{t_{1}}{%
16N^{2}},\cdots ,\frac{t_{n}}{16N^{2}})\notag \\
& =\frac{1}{(16N^{2})^{n}}\sum_{l_{1},\ldots ,l_{n}=0}^{2N-2}\Big(1-(N_{\alpha
}-n-\sum l_{i}^{^{\prime }}-1)\frac{4}{N^{2}(N+\alpha )}\sum_{i=1}^{n}\frac{%
t_{i}}{16} \notag\\
& +\frac{1}{N^{2}}C_{N}(t_{1},\cdots ,t_{N})\Big)\sum_{s=0}^{\infty }\frac{L_{s}(n+\sum l_{i}^{^{\prime }})}{%
s!(N_{\alpha })^{s}}c_{\{l_{1},\ldots ,l_{n}\}}^{(N)}(\frac{t_{1}}{16N^{2}}%
)^{l_{1}^{^{\prime }}}\cdots (\frac{t_{n}}{16N^{2}})^{l_{n}^{^{\prime }}} \notag\\
& =I-\frac{4(N_{\alpha }-1)}{N^{2}(N+\alpha )}(\sum_{i=1}^{n}\frac{t_{i}}{16}%
)I+\frac{4}{N^{2}(N+\alpha )}(\sum_{i=1}^{n}\frac{t_{i}}{16}) \\
& \times \frac{1}{(16N^{2})^{n}}\sum_{l_{1},\ldots
,l_{n}=0}^{2N-2}(\sum_{i=1}^{n}\partial _{t_{i}}t_{i})\sum_{s=0}^{\infty }%
\frac{L_{s}(\sum_{i=1}^{n}\partial _{t_{i}}t_{i})}{s!(N_{\alpha })^{s}}%
c_{\{l_{1},\ldots ,l_{n}\}}^{(N)}(\frac{t_{1}}{16N^{2}})^{l_{1}^{^{\prime
}}}\cdots (\frac{t_{n}}{16N^{2}})^{l_{n}^{^{\prime }}}\notag \\
& +\frac{1}{(16N^{2})^{n}}\sum_{l_{1},\ldots ,l_{n}=0}^{2N-2}\frac{1}{N^{2}}%
C_{N}(t_{1},\cdots ,t_{n})\sum_{s=0}^{\infty }\frac{L_{s}(\sum_{i=1}^{n}\partial _{t_{i}}t_{i})}{s!(N_{\alpha })^{s}} \notag\\
& \times c_{\{l_{1},\ldots ,l_{n}\}}^{(N)}(\frac{t_{1}}{16N^{2}}%
)^{l_{1}^{^{\prime }}}\cdots
(\frac{t_{n}}{16N^{2}})^{l_{n}^{^{\prime }}}\doteq
I+I_{2}+I_{3}+I_{4}\notag
\end{align}
where
\begin{equation}\label{I limit}
I=\sum_{s=0}^{\infty }\frac{L_{s}(\sum_{i=1}^{n}\partial _{t_{i}}t_{i})}{%
s!(N_{\alpha })^{s}}\frac{1}{(16N^{2})^{n}}\sum_{l_{1},\ldots
,l_{n}=0}^{2N-2}c_{\{l_{1},\ldots ,l_{n}\}}^{(N)}(\frac{t_{1}}{16N^{2}}%
)^{l_{1}^{^{\prime }}}\cdots (\frac{t_{n}}{16N^{2}})^{l_{n}^{^{\prime }}}.
\end{equation}%
On the other hand, we know from the expansion (\ref%
{correlationfunctionexpansion}) for the correlation function of the
LUE that
\begin{align}
& \lim_{N\rightarrow \infty }\frac{1}{(16N^{2})^{n}}R_{n}^{LUE,\frac{1}{4N}}(\frac{t_{1}}{%
16N^{2}},\cdots ,\frac{t_{n}}{16N^{2}})  \label{finalexpansion} \\
& =\lim_{N\rightarrow \infty }\frac{1}{(16N^{2})^{n}}\exp \big(%
-\sum_{i=1}^{n}\frac{t_{i}}{4N}\big)\sum_{l_{1},\ldots
,l_{n}=0}^{2N-2}c_{\{l_{1},\ldots ,l_{n}\}}^{(N)}(\frac{t_{1}}{16N^{2}}%
)^{l_{1}^{^{\prime }}}\cdots (\frac{t_{n}}{16N^{2}})^{l_{n}^{^{\prime }}}
\notag \\
& =\lim_{N\rightarrow \infty }\frac{1}{(16N^{2})^{n}}\sum_{l_{1},\ldots
,l_{n}=0}^{2N-2}c_{\{l_{1},\ldots ,l_{n}\}}^{(N)}(\frac{t_{1}}{16N^{2}}%
)^{l_{1}^{^{\prime }}}\cdots (\frac{t_{n}}{16N^{2}})^{l_{n}^{^{\prime }}}
\notag \\
& =\det [K_{\textrm{J}_{\alpha }}(t_{i},t_{j})]_{i,j=1}^{n},  \notag
\end{align}%
uniformly for $t_{1},\ldots, t_{n}$ in bounded subsets of
$(0,\infty)$. Comparing (\ref{I limit}) and (\ref{finalexpansion}),
the terms where $s>0$ in the sum of (\ref{I limit}) vanish. Hence we
get
\begin{equation}\label{limit}
\lim_{N\rightarrow \infty }I=\det [K_{\textrm{J}_{\alpha
}}(t_{i},t_{j})]_{i,j=1}^{n}.
\end{equation}%
Thus we have that $I_{2}\rightarrow 0$ as $N\rightarrow \infty .$ Note that
\begin{align*}
& \lim_{N\rightarrow \infty }I_{3}=\lim_{N\rightarrow \infty }\frac{4}{%
N^{2}(N+\alpha )}(\sum_{i=1}^{n}\frac{t_{i}}{16})\sum_{s=0}^{\infty }\frac{%
(\sum_{i=1}^{n}\partial _{t_{i}}t_{i})L_{s}(\sum_{i=1}^{n}\partial
_{t_{i}}t_{i})}{s!(N_{\alpha })^{s}}  \\
& \times \frac{1}{(16N^{2})^{n}}\sum_{l_{1},\ldots
,l_{n}=0}^{2N-2}c_{\{l_{1},\ldots ,l_{n}\}}^{(N)}(\frac{t_{1}}{16N^{2}}%
)^{l_{1}^{^{\prime }}}\cdots (\frac{t_{n}}{16N^{2}})^{l_{n}^{^{\prime }}}=0
\end{align*}%
We also notice the following fact: for any sequences $\{a_{i}\}_{i=1}^{N}$ and $%
\{b_{i}^{(N)}\}_{i=1}^{N}$, if $a_{1}+\cdots +a_{N}\rightarrow C $
as $N\rightarrow \infty $, and $b_{i}^{(N)}=O(\frac{1}{N^{2}}),$
then
\begin{equation}
|a_{1}b_{1}^{(N)}+a_{2}b_{2}^{(N)}+\cdots +b_{N}^{(N)}a_{N}|\leq
\sum_{i=1}^{N}|S_{i}-S_{i-1}|\,|b_{i}^{(N)}|=O\left(\frac{1}{N}\right)
\end{equation}
where $S_{i}=a_{1}+a_{2}+\cdots +a_{i}.$ Thus we have
\begin{equation*}
|I_{4}|\leq C\,n(2N-2+\alpha )O(\frac{1}{N^{2}})=O(\frac{1}{N}),
\end{equation*}%
uniformly for $t_{1},\ldots, t_{n}$ in bounded subsets of
$(0,\infty)$. Here we have used the fact of   (\ref{limit}).

This completes the proof.
\end{proof}

\section{proof of theorem \protect\ref{theorem for BTLUE: bulk}}

 First we give a representation of correlation functions for the BTLUE in terms of these for the FTLUE.
\begin{proposition}
\label{relationsforBTLUE} Let $R_{n}^{\theta,r}$ and $R_{n}^{\delta
,r}$ be the $n$-point correlation functions for the BTLUE and FTLUE
respectively, then we have the following relation
\begin{equation}
R_{n}^{\theta,r}(x_{1},\ldots
,x_{n})=\int_{0}^{1}N_{\alpha}\,u^{N_{\alpha}-1}\,
\frac{1}{u^{n}}R_{n}^{\delta ,r}(\frac{x_{1}}{u},\ldots
,\frac{x_{n}}{u})\,du , \label{BTLUE and FTLUE}
\end{equation}
where $N_{\alpha}=N(N+\alpha)$.
 \end{proposition}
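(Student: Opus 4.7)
The claim is the direct analogue of Proposition \ref{integral equation} with the Boltzmann weight $e^{-\sum x_i/s}$ of the LUE replaced by the Heaviside cutoff of the BTLUE. The strategy is a disintegration: under the BTLUE measure the trace $T = \sum_{i=1}^N x_i$ is itself random, supported on $(0,r)$, and conditional on $T = t$ the joint distribution of the eigenvalues is that of the FTLUE at fixed trace $t$; the claimed formula is just the resulting mixture identity rewritten at the level of $n$-point correlation functions.

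Concretely, I would start from the trivial identity $\theta\bigl(r - \sum_{i=1}^N x_i\bigr) = \int_0^r \delta\bigl(t - \sum_{i=1}^N x_i\bigr)\,dt$, valid on $\mathbb{R}_+^N$, substitute it into the definition \eqref{correlation function of CLUE} of $R_n^{\theta,r}$, and interchange the $t$-integral with the eigenvalue integrals. The inner integral is recognizable as $Z_\delta^t\, R_n^{\delta,t}(x_1,\ldots,x_n)$, which yields the intermediate identity
\begin{equation*}
R_n^{\theta,r}(x_1,\ldots,x_n) \;=\; \frac{1}{Z_\theta^r}\int_0^r Z_\delta^t\, R_n^{\delta,t}(x_1,\ldots,x_n)\,dt.
\end{equation*}
Running the same argument with no free variables (i.e., inserting $1$ in place of the defining integrand for the correlation function) identifies $Z_\theta^r = \int_0^r Z_\delta^t\,dt$, which can be evaluated in closed form from \eqref{partition for HS}.

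To match the form stated in the proposition, I would then make the change of variable $t = r u$ and invoke the scaling identity \eqref{relation for correlation fun of CLUE}, namely $R_n^{\delta,ru}(x_1,\ldots,x_n) = u^{-n}\, R_n^{\delta,r}(x_1/u,\ldots,x_n/u)$, to rewrite the integrand in terms of $R_n^{\delta,r}$ evaluated at rescaled arguments. After cancellation of the $r^{N_\alpha}$ and $\Gamma(N_\alpha)$ factors between $Z_\delta^{ru}$ and $Z_\theta^r$, the prefactor collapses to $N_\alpha\, u^{N_\alpha - 1}/u^n$, producing the stated formula. The argument is essentially bookkeeping; the only step requiring some care is tracking the radial Jacobian so that the power of $u$ in the final measure comes out correctly, entirely parallel to the corresponding step in Proposition \ref{integral equation}.
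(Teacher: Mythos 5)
Your disintegration via $\theta(r-s)=\int_0^r\delta(t-s)\,dt$ is sound and gives the same result as the paper, but it is a genuinely different (and somewhat slicker) organization of the argument: the paper instead parametrizes the sub-level set $\{\sum x_i\le r\}$ by simplices $\Delta_N(u)$, pulls each slice back to the reference simplex $\Delta_N(1)$, and reads off the power of $u$ from the Jacobian, then pins down the normalization by setting $h\equiv 1$. Your version avoids the explicit simplex geometry (the co-area bookkeeping, the $\sqrt{N}$ in $|\nabla\sum x_i|$, etc.) and works directly with the distributional identity; the scaling identity $R_n^{\delta,ru}(x)=u^{-n}R_n^{\delta,r}(x/u)$ then does all the work. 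Both proofs are correct.

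There is, however, a latent inconsistency in your bookkeeping that you should resolve before trusting your final prefactor. You cite \eqref{partition for HS}, which asserts $Z_\delta^r\propto r^{N_\alpha}$, and refer to ``cancellation of the $r^{N_\alpha}$ \ldots\ factors.'' But that displayed formula contains a typo: the correct scaling is $Z_\delta^r=r^{N_\alpha-1}Z_\delta^1$, obtained by substituting $x_i=ry_i$ into the defining integral (the Jacobian $r^N$, the factor $r^{N(N-1)+N\alpha}$ from the weight, and the $r^{-1}$ from $\delta(r(\sum y_i-1))$ combine to $r^{N_\alpha-1}$), or equivalently from $Z_s=\int_0^\infty e^{-t/s}Z_\delta^t\,dt$ together with $Z_s\propto s^{N_\alpha}$. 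If you literally used $Z_\delta^t\propto t^{N_\alpha}$ in your change of variables $t=ru$, you would arrive at $(N_\alpha+1)\int_0^1 u^{N_\alpha}u^{-n}R_n^{\delta,r}(x/u)\,du$, not the stated $N_\alpha\int_0^1 u^{N_\alpha-1}u^{-n}R_n^{\delta,r}(x/u)\,du$; the $n=0$ normalization check $1=N_\alpha\int_0^1 u^{N_\alpha-1}\,du$ confirms that the proposition's formula (and hence the exponent $N_\alpha-1$) is the right one. The cleanest fix is to not invoke \eqref{partition for HS} at all: after your intermediate identity and $Z_\theta^r=\int_0^r Z_\delta^t\,dt$, simply write $Z_\delta^{ru}=u^{N_\alpha-1}Z_\delta^r$ from the scaling computation above; the overall constant $rZ_\delta^r/Z_\theta^r=N_\alpha$ then falls out and the $r$-dependence cancels as it must.
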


\begin{proof} It suffices to prove
\begin{equation}
R_{n}^{\theta,r}(x_{1},\ldots
,x_{n})=\int_{0}^{r}\frac{N_{\alpha}}{r^{N_{\alpha}}}u^{N_{\alpha}-1}\,
(\frac{r}{u})^{n}R_{n}^{\delta ,r}(\frac{r}{u}x_{1},\ldots
,\frac{r}{u}x_{n})\,du.
\end{equation}
For every $u>0$, let
 \begin{equation}
\Delta_{N}(u)=\{(x_{1},\ldots,x_{N})| \sum_{j=1}^{N}x_{j}=u,
x_{j}\geq 0, j=1,\ldots, N\}
\end{equation}
be  a simplex in $\mathbb{R}^{N}$, which carries the volume element
induced by the standard Euclidean metric on $\mathbb{R}^{N}$,
denoted by $u^{N-1}d \,\sigma_{N}$. For $h\in
L^{\infty}(\mathbb{R}^{N})$, let $<h(\cdot)>_{\theta}$ and
$<h(\cdot)>_{\delta}$ denote  the ensemble average taken in the
BTLUE and the FTLUE, respectively. From (\ref{probability of fixed
trace lauguerre}) and (\ref{probability of bounded trace
lauguerre}), we have
\begin{flalign*}
&<h(\cdot)>_{\theta}\\
&= \frac{1}{Z_{\theta }^{r}}\int_{0}^{r}u^{N-1}d\,u
\int_{\Delta_{N}(u)}h(x_{1},\ldots,x_{N})\prod_{1\leq i<j\leq
N}|x_{i}-x_{j}|^{2}\prod_{i=1}^{N}x_{i}^{\alpha }d \,\sigma_{N}\\
&= \frac{1}{Z_{\theta }^{r}}\int_{0}^{r}u^{N_{\alpha}-1}d\,u
\int_{\Delta_{N}(1)}h(u x_{1},\ldots,u x_{N})\prod_{1\leq i<j\leq
N}|x_{i}-x_{j}|^{2}\prod_{i=1}^{N}x_{i}^{\alpha }d \,\sigma_{N}
\end{flalign*}
and
\begin{flalign*}
&<h(a\,\cdot)>_{\delta}\\
&=\frac{1}{Z_{\delta }^{r}} \int_{\Delta_{N}(r)}r^{N-1}\,h(a
x_{1},\ldots,a x_{N})\prod_{1\leq i<j\leq
N}|x_{i}-x_{j}|^{2}\prod_{i=1}^{N}x_{i}^{\alpha } d \,\sigma_{N}\\
&=\frac{1}{Z_{\delta }^{r}}
\int_{\Delta_{N}(1)}r^{N_{\alpha}-1}\,h(a r x_{1},\ldots,a r
x_{N})\prod_{1\leq i<j\leq
N}|x_{i}-x_{j}|^{2}\prod_{i=1}^{N}x_{i}^{\alpha } d \,\sigma_{N}.
\end{flalign*}
Choose $a=\frac{u}{r}$, we get
\begin{equation}
<h(\cdot)>_{\theta} =\frac{Z_{\delta}^{r}} {Z_{\theta }^{r}}
\int_{0}^{r}(\frac{u}{r})^{N_{\alpha}-1}<h(\frac{u}{r}\,\cdot)>_{\delta}
d\,u.
\end{equation}
Setting $h\equiv 1$, we get the ratio of the partition functions
$Z_{\delta}^{r}$ and $Z_{\theta }^{r}$. Substituting this ratio, we
then obtain
\begin{equation}
<h(\cdot)>_{\theta}= \int_{0}^{r}\frac{N_{\alpha}}{r^{N_{\alpha}}}
u^{ N_{\alpha}-1 } <h(\frac{u}{r}\,\cdot)>_{\delta} d\,u.
\end{equation}
In particular, taking
\begin{equation}
h(x_{1},\ldots,x_{N})=\sum_{1\leq i_{1}<\cdots <i_{n}\leq N
}f(x_{i_{1}},\ldots,x_{i_{n}}),
\end{equation}
we have
\begin{flalign*}
&\int_{\mathbb{R}^{n}}f(x_{1},\ldots,x_{n})R_{n}^{\theta,r}(x_{1},\ldots,x_{n})\,d^{n}x\nonumber \\
&=\int_{0}^{r}\frac{N_{\alpha}}{r^{N_{\alpha}}} u^{ N_{\alpha}-1
}d\,u
\int_{\mathbb{R}^{n}}f(\frac{u}{r}x_{1},\ldots,\frac{u}{r}x_{n})\,
R_{n}^{\delta,r}(x_{1},\ldots,x_{n})\,d^{n}x\nonumber \\
&=\int_{\mathbb{R}^{n}}f(x_{1},\ldots,x_{n})\,d^{n}x
\int_{0}^{r}\frac{N_{\alpha}}{r^{N_{\alpha}}}u^{N_{\alpha}-1}\,
(\frac{r}{u})^{n}R_{n}^{\delta ,r}(\frac{r}{u}x_{1},\ldots
,\frac{r}{u}x_{n})\,du.
\end{flalign*}
Since $R_{n}^{\theta,r}$ and $R_{n}^{\delta ,r}$ are both
continuous, we complete the proof.
\end{proof}

Next,   we notice a ``sharp" concentration phenomenon along the
radial coordinate between correlation functions of the BTLUE and
FTLUE. Although its proof  is simple, the following lemma plays a
crucial role in dealing with local statistical properties of the
eigenvalues between the fixed and bounded ensembles.

\begin{lemma}
\label{concentrationlemma2}Let $\{b_{N}\}$ be a sequence such that
$b_{N}\rightarrow 0$ but $N^{2} b_{N}\rightarrow\infty$ as
$N\rightarrow\infty$, then we have
\begin{equation}\int_{0}^{1}N_{\alpha}\,u^{N_{\alpha}-1}\,
\,du=\int_{u_{-}}^{1} N_{\alpha}\,u^{N_{\alpha}-1}\,du+ e^{-N^{2}
b_{N}(1+o(1))},
\end{equation}
 where
$u_{-}=1-b_{N}$.
 \end{lemma}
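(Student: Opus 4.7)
The plan is to reduce the claim to an explicit elementary identity and then apply a Taylor expansion inside a logarithm. The integrand $N_{\alpha}u^{N_{\alpha}-1}$ has the obvious antiderivative $u^{N_{\alpha}}$, so I would first observe the exact identity
\begin{equation*}
\int_{0}^{1}N_{\alpha}u^{N_{\alpha}-1}\,du-\int_{u_{-}}^{1}N_{\alpha}u^{N_{\alpha}-1}\,du=\int_{0}^{u_{-}}N_{\alpha}u^{N_{\alpha}-1}\,du=u_{-}^{N_{\alpha}}=(1-b_{N})^{N(N+\alpha)}.
\end{equation*}
The entire content of the lemma is therefore the asymptotic statement $(1-b_{N})^{N(N+\alpha)}=e^{-N^{2}b_{N}(1+o(1))}$ as $N\rightarrow\infty$.

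To prove this asymptotic, I would take logarithms and expand: since $b_{N}\rightarrow 0$, the Taylor series of $\log(1-x)$ gives
\begin{equation*}
\log(1-b_{N})=-b_{N}-\tfrac{1}{2}b_{N}^{2}-\tfrac{1}{3}b_{N}^{3}-\cdots=-b_{N}\bigl(1+O(b_{N})\bigr).
\end{equation*}
Multiplying by $N_{\alpha}=N(N+\alpha)$ gives
\begin{equation*}
N(N+\alpha)\log(1-b_{N})=-N^{2}b_{N}\cdot\frac{N+\alpha}{N}\cdot\bigl(1+O(b_{N})\bigr)=-N^{2}b_{N}\bigl(1+o(1)\bigr),
\end{equation*}
where I use $\alpha/N\rightarrow 0$ and $b_{N}\rightarrow 0$ to collapse both correction factors into a single $1+o(1)$. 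Exponentiating recovers the desired asymptotic $(1-b_{N})^{N_{\alpha}}=e^{-N^{2}b_{N}(1+o(1))}$.

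This proof is essentially one line once the antiderivative is written down, so there is no real obstacle. The only point worth emphasizing is the role of the assumption $N^{2}b_{N}\rightarrow\infty$: it is not used to establish the asymptotic identity itself (which follows purely from $b_{N}\rightarrow 0$ and $\alpha=o(N)$), but rather to guarantee that the remainder term $e^{-N^{2}b_{N}(1+o(1))}$ is genuinely negligible, i.e.\ actually decays to $0$. Without that hypothesis the formal identity would still be true but would not encode any concentration; it is precisely this decay that makes the lemma useful in the subsequent step of transferring local statistics from the FTLUE to the BTLUE by showing that the mass of the density $N_{\alpha}u^{N_{\alpha}-1}\mathbf{1}_{[0,1]}$ concentrates in $[u_{-},1]$.
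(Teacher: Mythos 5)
Your proof is correct and follows exactly the same route as the paper: evaluate $\int_{0}^{u_{-}}N_{\alpha}u^{N_{\alpha}-1}\,du=(1-b_{N})^{N_{\alpha}}$ by the fundamental theorem of calculus, take logarithms, and Taylor-expand $\log(1-b_{N})=-b_{N}(1+O(b_{N}))$ to obtain the exponent $-N^{2}b_{N}(1+o(1))$. Your closing remark on the role of $N^{2}b_{N}\to\infty$ is a useful clarification but does not change the substance of the argument.
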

\begin{proof}
\begin{flalign*}
&\int_{0}^{u_{-}} N_{\alpha}\,u^{N_{\alpha}-1}
\,du=(1-b_{N})^{N_{\alpha}}=e^{N_{\alpha}\ln
(1-b_{N})}=e^{N_{\alpha}\big(-b_{N}+O(b^{2}_{N})\big)}=e^{-N^{2}
b_{N}(1+o(1))}.
\end{flalign*}
This completes the proof.
\end{proof}

\begin{remark}
In Lemma \ref{concentrationlemma2}, let us take $b_{N}=N^{-\kappa},
\kappa\in (0,2)$. Since the ``rate" index $\kappa$ can be chosen
larger than 1 while the scaling in the bulk is proportional to
$N^{-1}$ and at the soft edge of the spectrum is proportional to
$N^{-2/3}$, in principle  we can prove all local statistical
properties of the eigenvalues between  the fixed and bounded trace
ensembles are identical in the limit. Such arguments apply to  the
equivalence of ensembles between the fixed trace and bounded trace
ensembles with monomial potentials, where we exploit some
homogeneity of the monomial potentials.
\end{remark}

Before we prove Theorem \protect\ref{theorem for BTLUE: bulk}, let
us prove the claimed result in Sect. 1: limit global density for the
BTLUE is also  Marchenko-Pastur law.

\begin{theorem}
\label{MPlawforBTLUE} Let $R_{1}^{\theta,r}$ be the 1-point
correlation function of the BTLUE, for any $f\in
L^{\infty}(\mathbb{R})\bigcap C_{lip}(\mathbb{R})$ where the set
$C_{lip}(\mathbb{R})$ denotes  all Lipschitz continuous functions on
$\mathbb{R}$, we have
\begin{align}
\lim_{N\rightarrow \infty }\int_{\mathbb{R}}f(x) \frac{1}{N}
R_{1}^{\theta,\frac{N+\alpha }{4}}(x) \,d x  =\int_{\mathbb{R}}f(x)
\psi (x) \,d x
\end{align}
where $\psi (x)$ is the Marchenko-Pastur law.
\end{theorem}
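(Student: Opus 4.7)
The approach is to leverage Proposition \ref{relationsforBTLUE}, which expresses the BTLUE correlation functions as a weighted average of the FTLUE ones over a radial parameter $u \in (0,1)$, together with the sharp concentration of the weight $N_{\alpha}u^{N_{\alpha}-1}$ near $u=1$ provided by Lemma \ref{concentrationlemma2}. Since the FTLUE global density is already known to be the Marchenko--Pastur law by \cite{So}, the corresponding BTLUE result should then follow by a short continuity argument. This is precisely the strategy indicated at the end of Sect.~1 for transferring results from Theorem \ref{theorem for CLUE: bulk} to Theorem \ref{theorem for BTLUE: bulk}.

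Setting $n=1$, $r=(N+\alpha)/4$ in Proposition \ref{relationsforBTLUE} and applying the change of variables $y = x/u$ in the inner integral yields
\begin{equation*}
\int_{\mathbb{R}} f(x)\,\tfrac{1}{N} R_{1}^{\theta,\frac{N+\alpha}{4}}(x)\,dx = \int_{0}^{1} N_{\alpha}\,u^{N_{\alpha}-1} \left[\int_{\mathbb{R}} f(uy)\,\tfrac{1}{N} R_{1}^{\delta,\frac{N+\alpha}{4}}(y)\,dy\right] du.
\end{equation*}
Fix any $\kappa \in (0,2)$ and set $b_N = N^{-\kappa}$. Since $\tfrac{1}{N} R_1^{\delta,r}$ is a probability density, the bracketed inner integral is uniformly bounded by $\|f\|_\infty$. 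Lemma \ref{concentrationlemma2} then shows that the part of the outer integral over $u \in (0, 1-b_N)$ contributes $O(\|f\|_\infty \cdot e^{-N^{2-\kappa}(1+o(1))}) = o(1)$.

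For the remaining range $u \in [1-b_N, 1]$, decompose $f(uy) = f(y) + [f(uy) - f(y)]$. The Lipschitz estimate $|f(uy) - f(y)| \leq L|u-1|\,|y|$, combined with the trace identity $\int y\,\tfrac{1}{N} R_{1}^{\delta,\frac{N+\alpha}{4}}(y)\,dy = (N+\alpha)/(4N)$ (immediate from $\sum_i x_i = r$), bounds the correction uniformly in $u$ by $L\, b_N \cdot (N+\alpha)/(4N) = O(b_N) = o(1)$. The main piece equals $(1 + o(1))\int f(y)\,\tfrac{1}{N} R_{1}^{\delta,\frac{N+\alpha}{4}}(y)\,dy$, which tends to $\int f(y)\psi(y)\,dy$ by the FTLUE Marchenko--Pastur law of \cite{So}. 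Summing the three contributions gives the theorem.

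There is no serious obstacle in this reduction; the entire argument is driven by the fact that the radial weight $N_\alpha u^{N_\alpha-1}$ behaves like a Dirac mass at $u=1$ on all macroscopic scales. The Lipschitz hypothesis on $f$ is invoked only to control the scaling error $f(uy)-f(y)$, which in turn relies on the finite first moment of the FTLUE density --- a quantity that is built into the definition of the ensemble through the fixed-trace constraint.
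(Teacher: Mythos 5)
Your proposal is correct and follows essentially the same route as the paper: apply Proposition \ref{relationsforBTLUE}, change variables, split the $u$-integral at $1-b_N$, kill the tail with Lemma \ref{concentrationlemma2}, and decompose $f(uy)=f(y)+[f(uy)-f(y)]$ using the Lipschitz bound on the remaining window. The one small difference is that you control the correction term via the exact trace identity $\int y\,\tfrac{1}{N}R_1^{\delta,r}(y)\,dy=r/N$, whereas the paper invokes the asymptotic convergence of $\int |x|\,\tfrac{1}{N}R_1^{\delta,r}(x)\,dx$ to the first moment of $\psi$; both give the required $O(b_N)$ bound, and your version is slightly more self-contained.
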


\begin{proof}
By Proposition \ref{relationsforBTLUE},
\begin{align*}
\int_{\mathbb{R}}f(x)& \frac{1}{N} R_{1}^{\theta,\frac{N+\alpha
}{4}}(x) \,d x  =\int_{\mathbb{R}}
\int_{0}^{1}N_{\alpha}\,u^{N_{\alpha}-1}\,f(x)
\frac{1}{N}\frac{1}{u}R_{1}^{\delta ,\frac{N+\alpha
}{4}}(\frac{x}{u})\,du  \,d x\\
&=\int_{\mathbb{R}} \int_{0}^{1}N_{\alpha}\,u^{N_{\alpha}-1}\,f(u x)
\frac{1}{N}R_{1}^{\delta ,\frac{N+\alpha
}{4}}(x)\,du  \,d x\\
&=\int_{\mathbb{R}} \left(\int_{0}^{u_{-}}+\int_{u_{-}}^{1}\right)
N_{\alpha}\,u^{N_{\alpha}-1}\,f(u x) \frac{1}{N}R_{1}^{\delta
,\frac{N+\alpha }{4}}(x)\,du  \,d x\doteq I_{1}+I_{2}.
\end{align*}
By Lemma \ref{concentrationlemma2},
\begin{align*}
I_{1}&\leq \|f\|_{\infty}\int_{\mathbb{R}} \int_{0}^{u_{-}}
N_{\alpha}\,u^{N_{\alpha}-1} \frac{1}{N}R_{1}^{\delta
,\frac{N+\alpha }{4}}(x)\,du  \,d x\\
&=\|f\|_{\infty} \int_{0}^{u_{-}} N_{\alpha}\,u^{N_{\alpha}-1} \,du
\int_{\mathbb{R}} \frac{1}{N}R_{1}^{\delta ,\frac{N+\alpha }{4}}(x)
\,d x\\
&=\|f\|_{\infty} \,e^{-N^{2} b_{N}(1+o(1))}\rightarrow 0
\end{align*}
as $N\longrightarrow \infty$. On the other hand,
\begin{align*}
I_{2}&=\int_{\mathbb{R}} \int_{u_{-}}^{1}
N_{\alpha}\,u^{N_{\alpha}-1}\left(\big(f(u x)-f(x)\big)+f(x)\right)
\frac{1}{N}R_{1}^{\delta ,\frac{N+\alpha }{4}}(x)\,du  \,d x\doteq I_{21}+I_{22}.\\
\end{align*}
Since $f\in C_{lip}(\mathbb{R})$, we have
\begin{align*}
I_{21}\leq  (1-u_{-}) L\, \int_{u_{-}}^{1}
N_{\alpha}\,u^{N_{\alpha}-1} du \int_{\mathbb{R}}|x|
\frac{1}{N}R_{1}^{\delta ,\frac{N+\alpha }{4}}(x) \,d x\rightarrow 0\\
\end{align*}
for some constant $L$. Here we have used the fact \begin{align*}
\lim_{N\rightarrow \infty }\int_{\mathbb{R}}|x|
\frac{1}{N}R_{1}^{\delta ,\frac{N+\alpha }{4}}(x) \,d
x=\int_{\mathbb{R}}|x|
\psi (x) \,d x.\\
\end{align*}
Again by Lemma \ref{concentrationlemma2},  \begin{align*}
\lim_{N\rightarrow \infty }I_{22}=\int_{\mathbb{R}}f(x)
\psi (x) \,d x.\\
\end{align*}
This completes the proof.
\end{proof}

Now we turn to the proof of Theorem \protect\ref{theorem for BTLUE:
bulk}. \begin{proof} The proof is very similar to that of the soft
edge of the spectrum in Theorem \ref{theorem for CLUE: bulk}, we
only point out some different places in the bulk case.

 In Lemma
\ref{concentrationlemma2}, choose $b_{N}=N^{-\kappa}, \,\kappa \in
(1,2)$. The change of variables corresponding to
(\ref{changeofvariables}) reads:
\begin{equation}
t_{i}=(u-1)N\, x \psi(x)+u y_{i},i=1,\ldots ,n
\end{equation}
where fixed $x\in (0,1)$. The condition that $b_{N}=N^{-\kappa},
\,\kappa
 \in (1,2)$ ensures $(1-u)N\leq  N^{-\kappa+1}\rightarrow 0$ as
 $N\longrightarrow \infty$ for $u\in [u_{-},1]$. On the other hand, by Theorem \ref{theorem for CLUE: bulk}, the following fact similar to
 Lemma \ref{boundlemma} is obvious:
 for any fixed $R>0$,
\begin{equation}
\frac{1}{(N\psi(x))^{n}}\int_{B_{R}}R_{n}^{\delta ,\frac{N+\alpha
}{4}}(x+ \frac{y_{1}}{N\psi(x)},\cdots
,x+\frac{y_{n}}{N\psi(x)})d^{n}y\leq C_{R}.
\end{equation}
Here  $B_{R}$ is the ball of the
radius $R$ in $\mathbb{R}^{n}$ centered at zero, and  $%
C_{R}$ is a constant.

Using Proposition \ref{relationsforBTLUE} and Theorem \ref{theorem
for CLUE: bulk}, we complete the proof after a similar procedure.
\end{proof}

\end{document}